\documentclass[11pt,  oneside, a4paper]{article}
\usepackage[utf8]{inputenc}
\usepackage[a4paper, total={6.5in,9in}]{geometry}

\usepackage{graphicx} 
\usepackage{bbm}
\usepackage{amsfonts}
\usepackage{enumitem}
\usepackage{fullpage}
\usepackage{amsmath}
\usepackage{hyperref}
\usepackage{mathtools}
\usepackage{lipsum}
\usepackage{blindtext}
\usepackage{titlesec}
\usepackage{amssymb}
\usepackage{amsthm}
\usepackage{rotating}
\usepackage{subcaption}
\usepackage{caption}
\usepackage{svg}
\usepackage{xcolor}
\usepackage{verbatim}
\usepackage{natbib}

\newtheorem*{theorem-nono}{Theorem}
\newtheorem{claim}{Claim}
\newtheorem{remark}{Remark}
\newtheorem{lemma}{Lemma}
\newtheorem{definition}{Definition}
\newtheorem{thm}{Theorem}

\newtheorem{example}{Example}
\newenvironment{proofsketch}{%
  
  \begin{proof}
}{
  \end{proof}
}

\newcommand{\juba}[1]{}
\newcommand{\kate}[1]{}
\newcommand{\kd}[1]{}
\newcommand{\raef}[1]{}
\newcommand{\ds}[1]{}
\newcommand{\annuo}[1]{}
\newcommand{\writenote}[1]{}

\renewcommand{\juba}[1]{\textcolor{blue}{[Juba: #1]}}
\renewcommand{\kate}[1]{\textcolor{purple}{[Kate: #1]}}
\renewcommand{\kd}[1]{\textcolor{purple}{[Kate: #1]}}
\renewcommand{\raef}[1]{\textcolor{brown}{[Raef: #1]}}
\renewcommand{\ds}[1]{\textcolor{red}{[Sen: #1]}}
\renewcommand{\annuo}[1]{\textcolor{teal}{[Annuo: #1]}}
\renewcommand{\writenote}[1]{\textcolor{magenta}{[Writing note: #1]}}

\newcommand{\footremember}[2]{%
    \footnote{#2}
    \newcounter{#1}
    \setcounter{#1}{\value{footnote}}%
}

\title{Data Sharing with Endogenous Choices\\over Differential Privacy Levels}
\author{Raef Bassily\footremember{trailer}{The Ohio State University. Email: bassily.1@osu.edu. Part of this work was done during his visit at Google.} 
\and Kate Donahue\footremember{something}{MIT, University of Illinois Urbana-Champaign. Email: kpd@illinois.edu}
\and Diptangshu Sen\footremember{alley}{Georgia Institute of Technology. Email: dsen30@gatech.edu. First author and corresponding author.}
\and Annuo Zhao\footremember{somethingelse}{Georgia Institute of Technology. Email: ann\_zhao@gatech.edu}
\and Juba Ziani\footremember{somethingelse2}{Georgia Institute of Technology. Email: jziani3@gatech.edu}
}

\date{\today}

\begin{document}

\maketitle

\begin{abstract}
Motivated by the rapid push to decentralize sharing of data, we study whether large-scale data sharing coalitions can form in a decentralized manner under differential privacy when players have heterogeneous privacy preferences. We first consider a fully decentralized data-sharing mechanism in which each player decides whether to participate and how much privacy noise to add locally to their sensitive data before sharing. Privacy choices induce a fundamental trade-off: higher privacy lowers individual privacy costs but reduces data utility and statistical accuracy for the coalition. These choices generate externalities across players, making both participation and privacy levels strategic. Our goal is to understand which coalitions are stable, how privacy choices shape equilibrium outcomes, and how fully decentralized data-sharing compares to a centralized, socially optimal benchmark when the number of players is large. We provide a comprehensive analysis across multiple privacy-cost regimes corresponding to different attack/observation models in differential privacy, showing that full decentralization is highly inefficient in terms of both social welfare and estimator accuracy. Surprisingly, we find that a simple partially decentralized mechanism (where players still retain participation agency, but a central designer chooses a fixed privacy noise level for everyone) closes this efficiency gap down to constant factors across all privacy-cost regimes.
\end{abstract}

\section{Introduction}\label{sec:intro}

Many organizations need access to large amounts of data in order to produce accurate statistical estimates or to build their own machine learning models. However, individual organizations often do not possess sufficient data on their own. This is an especially important bottleneck in settings with rare events or small populations, such as hospitals or research organizations studying rare diseases or healthcare interventions. In such environments, meaningful statistical inference is only possible if multiple data holders share data and pool their information~\citep{chen2025data}. As a result, data-sharing cooperatives or coalitions have become increasingly common and allow participants to aggregate data to build more powerful models and computations. For example, in healthcare, the European Network of Genetic and Genomic Epidemiology~\citep{engage} is a consortium of 24 leading research organizations that pooled genomic data of over 600,000 individuals across Europe, leading to better identification of genetic biomarkers that contribute to diseases like type 2 diabetes. In education, the Higher Education Data Sharing Consortium~\citep{heds} facilitates data sharing across institutions to build models for understanding and predicting student success. In risk modeling in finance and credit, the main credit bureaus pool consumer data to estimate their models for risk of defaulting on loans and mortgages. Experian, for example, has ``more than 12,000 data contributors across the nation"~\citep{experian}. 

Despite their potential benefits, data cooperatives can face significant obstacles~\citep{oracle,data-coops}, with a major one being \emph{data privacy}. Contributing sensitive data can lead to privacy leakages that enable inference or re-identification attacks, resulting in potential societal~\citep{genomicprivacy,nyt_location}, legal~\citep{netflix_lawsuit,ftc_facebook}, and economic harms~\citep{ghosh_selling_privacy,gkatzelis2015pricing,acemoglu2022}. 
Conversely, Differential Privacy (DP) has emerged as the gold standard for guaranteeing privacy for statistical analyses and model training~\citep{dwork2014algorithmic}. By injecting carefully calibrated noise into data, algorithms, or query responses, DP provides strong worst-case guarantees. It limits the impact of any single participant’s data on the released model or statistic, providing an information-theoretic notion of privacy. Importantly, differential privacy can help to reduce the privacy cost of participation, making data sharing more feasible. At the same time, privacy imposes a trade-off: the added noise degrades the accuracy of statistical estimates and computed models. This fundamental tradeoff between privacy protection and data utility raises the following question: \textit{how should we add noise to computations to balance accuracy and privacy?}

One way that data cooperatives often form is through \emph{decentralized} data-sharing environments as proposed, for example, by \citep{hardjono2019data}. There are several real-life initiatives that promote such decentralization. In the case of healthcare, new initiatives are emerging that enable data to remain locally controlled while supporting collaborative model development across institutions. One example is \citet{moonshot} which leverages a distributed network of hospital-owned machines to build privacy-preserving, real-time AI without requiring centralization of patient data. Related efforts in healthcare data cooperatives further emphasize patient-centric control, where individuals can decide how their data is shared and used for research and clinical purposes \citep{hafen2014health}, highlighting how ``members determine which data they want to share for example with doctors or to contribute to research for the benefit of their health and that of society''. In the context of power systems and the U.S. energy infrastructure, similar efforts emphasize decentralized and federated control over data, with architectures that enable coordination without centralizing ownership. For example, the U.S. Department of Energy's Genesis initiative highlights a move toward ``\emph{federated data governance}'' and ``standing up a federated governance model that (...) includes fine-grained access control, licensing, auditability, and digital provenance". More broadly, data space architectures such as the International Data Spaces framework promote interoperable, decentralized data sharing across energy stakeholders while preserving sovereignty over local data \citep{idsa_energy}.

Despite the push to decentralize data sharing, one aspect that makes the picture more challenging is that different players may want different privacy guarantees. This is because privacy attitudes are often heterogeneous \citep{ghosh_selling_privacy,personalized_privacy}, with different individuals or organizations having different understanding of privacy risks, risk tolerance, ownership of sensitive data, etc. This has been observed in large-scale surveys, with individuals self-reporting various degrees of privacy attitudes and concerns, as well as in empirical behavioral studies on privacy preferences and disclosure behavior \citep{pew2015privacy,berendt2005privacy} (see surveys for a more complete description of privacy attitudes at \citep{barth2017privacy,acquisti2015privacy}). While privacy preferences can be complex, some research (e.g. \cite{nanayakkara2023epsilon,dekel2022privacy}) shows that individuals can still understand and quantify their own (differential) privacy preferences and can trade off privacy with benefits obtained from sharing their data, meaning that such preferences do need to be taken into account.

These considerations lead to new fundamental questions: when data cooperatives form in a decentralized fashion and privacy preferences are endogenous, what does stable participation look like? Which coalitions can sustain themselves in equilibrium, and how does coalition composition depend on participants’ privacy costs? How do individual privacy decisions affect the accuracy of the shared estimate and overall social welfare? Crucially, how do outcomes under decentralized data sharing compare in terms of efficiency to an idealized world in which a centralized designer can choose participation and privacy levels to maximize social welfare? And, if they are indeed inefficient, can we do better while still giving participants some agency over decision-making?

In this paper, we consider a model of coalition formation for data-sharing under differential privacy that captures these considerations. Players have heterogeneous privacy costs and decide, autonomously, whether to participate in the coalition and the level of noise to add to their data. The shared statistic is then computed from the pooled, privatized data of the participants, and only those participants get a benefit from the final estimate. We study the equilibrium behavior induced by fully decentralized decisions, focusing on the structure of stable coalitions, the resulting privacy–accuracy tradeoffs, and the efficiency of decentralized outcomes in terms of social welfare and accuracy of the computed estimate. Our main results show that full decentralization is \textit{highly inefficient} compared to the baseline fully centralized setting over a large range of privacy cost regimes: this is primarily due to players choosing stringent privacy levels, which either prevents the formation of large coalitions or degrades their utility (in terms of social welfare and estimator variance) when they do come to exist. Also, we show that there exists a simple data sharing mechanism --- a \textit{partially decentralized} one --- where players still retain agency over participation decisions but the privacy level is chosen centrally, which can close the efficiency gap across all privacy regimes. Thus, \emph{the main source of inefficiency is letting players choose their own $\epsilon$}.  \\

\noindent 
\textbf{Summary of contributions.} We summarize our main contributions below: 
\begin{itemize}
\item In Section~\ref{sec:model}, we introduce a new model of decentralized data sharing through coalition formation under differential privacy. Our setting differs from standard coalition formation games because on top of choosing whether to participate in a coalition, our players must also decide on an endogenous privacy level. 
A central element to our model is that we introduce and reason across different regimes of privacy costs that correspond to different observation and attack models in differential privacy. One of our most important contributions is understanding how the specific privacy regime affects coalition formation under different levels of decentralization. To the best of our knowledge, we are the first to introduce such considerations regarding the privacy regime to the economics of data sharing literature. 
\item In Section~\ref{sec:baseline}, we study a baseline setting where a fully centralized designer controls both i) who participates in the coalition and ii) what privacy level is to be used by each participant, and minimizes social cost (equivalent to maximizing social welfare). We fully characterize i) the social cost and ii) the accuracy (measured by variance) of the final estimate computed by the \textit{socially optimal} coalition, across all privacy cost regimes, in particular, identifying regimes where it achieves non-trivial improvements across both metrics.  
\item In Section~\ref{sec:autonomous}, we provide a detailed analysis of the setting where players act in a fully decentralized fashion and form \textit{stable} coalitions. Here, players make simultaneous decisions across two dimensions: i) whether they want to participate in the coalition but also ii) what privacy level to add to their own data. Beyond characterizing the properties of the \textit{optimal stable} coalition and the best-case social cost and estimator accuracy achievable in this setting, we also provide insights on what stable coalitions look like (under different notions of stability), and when they are well-behaved given the parameters of the problem. 
\item In Section~\ref{sec:partial}, we consider a natural intermediate or hybrid mechanism of data sharing where the privacy level is imposed centrally (for example by an central designer or by a joint optimization by the coalition itself), but players still decide whether they want to participate in the coalition (hence, partially decentralized). Importantly, this allows us to separate the impact of players' i) autonomy of participation decisions and ii) autonomy in the choice of privacy level on social cost and accuracy of the final estimate. 
\item Finally, in Section~\ref{sec:pos}, we provide a detailed comparative analysis across all $3$ mechanisms. First, we characterize the ``Price of Stability'' between the fully decentralized mechanism and the centralized baseline mechanism in terms of social cost and variance; we highlight that i) the regimes of privacy costs, where the social cost and accuracy improvements (over no data sharing) are non-trivial, varies greatly between the two settings; and ii) even in cost regimes where both settings offer non-trivial improvements, a significant efficiency gap still persists. However, surprisingly, the partially decentralized mechanism is found to recover (up to constant factors) the optimal social cost and estimator accuracy achieved by the baseline mechanism across all privacy cost regimes. The central insight is that \emph{efficiency gaps arise not due to agency in participation decisions but because of how players selfishly pick their privacy levels}. Selfish players over-focus on their own privacy costs, not accounting for the \emph{joint impact} of their choice of privacy levels on the accuracy of the shared estimator.
\end{itemize}

\subsection{Related work}\label{sec:related}

\paragraph{Markets for Data.} There has been a significant amount of work studying and designing markets for data across computer science, operations research, and economics communities~\citep{anish_data_marketplace,liu2016learning,liu2017sequential,liu2018surrogate,dekel2010incentive,meir2011strategyproof,meir2012algorithms,perote2003impossibility,satml2023,fallah2023optimal,cummings2015accuracy,cai2015optimum,abernethy2015low,roth2012conducting,dataacquisition, gkatzelis2015pricing,bergemann2018design,bergemann2019information,shuran_yiling,choi2019privacy,acemoglu2022,bergemann2022economics,liao2022privacy,sen2025equilibria,fallah2024three,acemoglu2023good}. Here, there is a centralized designer who decides how to design the market and facilitates interactions between agents and data transactions. This line of work is vast, and studies data exchanges across different types of incentives around data sharing (receiving a payment vs receiving a useful model or service), different dis-utilities from sharing data (such as privacy concerns and increased competition or loss of economic power in game-theoretic settings), different levels of intervention from the centralized designer (designing all transactions vs only acting as a matching platform), to only name a few. For a comprehensive survey, refer to~\citet{bergemann_data_markets}.

\paragraph{Decentralized data coalition formation for data cooperatives.} Real-life data-sharing settings are often substantially more decentralized. In data cooperatives agents may have complete control over who they share data with, as well as the amount or quality of data to share, without the direct intervention of a centralized coordinator~\citep{hardjono2019data,hafen2014health,data-coops,ligett2023we}. There has also been recent work that explores the stability, economic viability and fairness aspects of data-sharing consortia~\citep{song2025existence,akrami2024theoretical,castro2023data}. To the best of our knowledge, none of these works account for what happens when privacy-conscious participants protect their data through sophisticated privacy mechanisms (like differential privacy) before sharing it across with the consortia.  

A related line of work has studied incentives in data-sharing in decentralized settings. From the game theoretic perspective, some works draw on tools from hedonic game theory, where the goal is to study which coalition structures are stable~\citep{bogomolnaia2002stability}. Works that specifically study decentralized data sharing from a theoretical perspective include~\citet{donahue2021model, hasan2021incentive, shadmy2024reimagining,chen2025r}. Other works also study incentives in coalition formation in settings beyond just data sharing (e.g., with other forms of cost sharing, such as compute, vaccines etc.)~\citep{Guazzone_2014,Anglano2018AGA,Nagalapatti_Narayanam_2021, blum2021one, donahue2021optimality,gui2026market}.

\paragraph{Data transactions under privacy constraints.} Our work also subscribes to the long literature on how to transact data under differential privacy constraints~\citep{buyingwithoutverification,ghosh_selling_privacy,fallah2022bridging,fallah2023optimal,satml2023}. 
One prominent direction in this literature involves designing mechanisms for truthful elicitation of private data from individuals or users with heterogenous privacy preferences.
In this problem setting, a key modeling choice concerns where privacy is enforced, either through noise addition on the data directly (local DP) or through addition of noise to the computation itself and its output (central DP). While some works like \citet{cummings2015accuracy} consider a local-DP model where the user adds noise to their data locally before sharing, others like \citet{satml2023,ghosh_selling_privacy,buyingwithoutverification} consider noise addition at the central level by the trusted curator. 
Works like~\citet{fallah2022bridging,fallah2023optimal} have tried to bridge these differences by considering mechanism design for both local and central-DP models. While in most of these cases, the decision of whether to participate and share data or not is fully exogenous and depends on each individual user, in some settings like that of~\citet{ghosh2013privacy}, it can be endogenous and depend the actions of the other participants. Despite some similarities of our work with all of this literature, there are some major differences: in particular, our primary setting is to study how players can spontaneously form coalitions for exchange of data to compute some population-level statistic, instead of sharing it on a third party or platform.

\paragraph{Non-cooperative game theory and coalition formation.}
The study of coalition formation lies at the intersection of non-cooperative and cooperative game theory. This line of work is based on the fundamental work on strategic behavior of \citet{Nash51}. This led to a rich literature on endogenous coalition formation in economics and operations research \citep{hart1983endogenous,rayvohra,nagarajan2009coalition,nguyen2015coalitional,gairing2019computing}. Classical cooperative notions of stability, such as the core, require that no subset of agents can jointly deviate to form a coalition that makes all its members strictly better off \citep{gillies1953some,aumann1961core,shapley1965balanced,debreuscarf}, and have been extensively studied in both abstract settings~\citep{pisinger1999core,deng1999algorithmic,bietenhader2006core} and applied settings~\citep{banerjee2001core,bichler2022core,goetzendorff2015compact}. More recent work has focused on stability notions grounded in unilateral deviations, particularly in hedonic games where agents have preferences over coalition membership. Prominent among these are Nash stability and individual stability, which capture whether agents can profitably deviate to another coalition with or without the consent of its members \citep{BOGOMOLNAIA2002201}. We specifically adopt the notion of Nash-stability (and its variant) in our work. Finally, to quantify the inefficiency induced by strategic behavior, we use the notion of the price of stability, which measures the best-case gap between equilibrium outcomes and the socially optimal solution~\citep{anshelevich2008price,feldman2012conflicting,christodoulou2015price,scarsini2018dynamic}. Another closely related and widely used concept is the price of anarchy which measures the worst-case gap~\citep{koutsoupias1999worst,roughgarden2002bad,roughgarden2005selfish}.

\paragraph{Novelty of our work.} Importantly, and to the best of our knowledge, we are the first to \textit{synthesize} several key research threads together: i) data exchanges, ii) differential privacy used as an economic knob to facilitate data sharing, and iii) coalition formation under different levels of decentralization. Additionally, as described in the contributions earlier, we are the first to study the impact of different privacy attack/observation models on the efficiency of data-sharing mechanisms.

\section{Differential Privacy Preliminaries}\label{sec:dp_prelims}

We now briefly review the elements of differential privacy essential for our analysis (for a comprehensive presentation, see \citet{dwork2014algorithmic}). The goal is to answer a real-valued query $q(.)$ on data $(x_1,\ldots,x_n)$ while maintaining the privacy of the data, where $x_i$ is the data point of agent $i$.

\paragraph{Neighboring datasets.} Differential Privacy (DP) asks: what happens in a counterfactual world in which a single agent changed their dataset entry? To reason about this, we need to define the concept of dataset and of neighboring datasets. A dataset is a vector $\mathbf{x}=(x_1,\dots,x_n)$, where $x_i$ denotes the data of agent $i$. Intuitively, neighboring datasets represent two possible worlds that differ only in the data contributed by a single agent; formally:

\begin{definition}[Neighboring Datasets]
Two datasets $\mathbf{x}$ and $\mathbf{x}'$ are said to be \emph{neighbors with respect to agent $i$} if they differ only in the $i$th coordinate, i.e., $x_j = x'_j$ for all $j \neq i$. 
\end{definition}

\paragraph{Differential privacy.} Differential privacy requires that the \emph{distribution} of outputs of a randomized algorithm be nearly indistinguishable on neighboring datasets. If outputs look similar across two neighboring datasets, then it is impossible to identify which dataset is the ``true'' one, hence to recover a specific agent's data point, only from a single observation\footnote{Privacy guarantees degrade smoothly in terms of the number of observations or queries thanks to powerful composition theorems.}. 

\begin{definition}[Differential Privacy]
A randomized mechanism $\mathcal{M}$ is \emph{$\epsilon$-differentially private with respect to agent $i$} if, for all pairs of neighboring datasets $\mathbf{x},\mathbf{x}'$ with respect to $i$, and for all measurable sets of outputs $\mathcal{O}$,
$
\Pr[\mathcal{M}(\mathbf{x}) \in \mathcal{O}]
\;\le\;
e^{\epsilon}\Pr[\mathcal{M}(\mathbf{x}') \in \mathcal{O}].
$
\end{definition}

The parameter $\epsilon > 0$ quantifies the strength of the privacy guarantee: smaller values of $\epsilon$ correspond to stronger privacy protection, as a smaller $\epsilon$ requires the distribution of outputs on neighboring datasets to be harder to distinguish. $\epsilon = 0$ yields perfect privacy (but cannot be obtained with non-trivial accuracy); $\epsilon \to +\infty$ makes the privacy constraint vacuous and leads to no privacy. 

\paragraph{Sensitivity and DP primitives.} The privacy guarantee obtained by an agent depends on how much the output of the computation can change in this agent's data. Intuitively, the more a computation depends on an agent's data (the higher the sensitivity), the harder privacy is to obtain. Formally:

\begin{definition}[Sensitivity]
For a real-valued query $q(\mathbf{x})$, the sensitivity for agent $i$ is
\[
\Delta_i q
\;=\;
\max_{\mathbf{x},\mathbf{x}' \text{ neighbors wrt } i}
|q(\mathbf{x}) - q(\mathbf{x}')|.
\]
\end{definition}

This sensitivity formally captures the maximum influence that agent $i$’s data can have on the output of the computation. Given this sensitivity, a standard method for achieving differential privacy for numerical queries is to add Laplace noise calibrated to the sensitivity of the query.

\begin{definition}[Laplace Mechanism]
Given a query $q(\mathbf{x})$ and parameter $\eta>0$, the Laplace mechanism outputs
$
\mathcal{M}(\mathbf{x}) = q(\mathbf{x}) + Z,~\text{where }~Z \sim \mathrm{Lap}(\eta).
$
\end{definition}
The Laplace mechanism is $\epsilon$-differentially private with respect to agent $i$ when $\eta = \Delta_i q / \epsilon$. Equivalently, for fixed noise scale $\eta$, the privacy level obtained by agent $i$ is
$
\epsilon_i = \frac{\Delta_i q}{\eta}.
$

\paragraph{Privacy and accuracy.}
Because the Laplace mechanism is unbiased, the accuracy of a private estimator can be measured through its variance. If $Z \sim \mathrm{Lap}(\eta)$, then $\mathrm{Var}(Z) = 2\eta^2$. Thus, for an unbiased estimator $\hat{q} = q(\mathbf{x}) + Z$, we have
$
\mathrm{Var}(\hat{q}) = \mathrm{Var}(q(\mathbf{x})) + 2\eta^2.
$

\paragraph{Post-processing immunity.} A key property of differential privacy is that privacy guarantees are preserved under arbitrary data-independent post-processing. I.e., if a data product is released differentially privately, no post-hoc attack 
can break the DP guarantee. Formally:

\begin{theorem-nono}[Post-processing \citep{dwork2014algorithmic}]
If a mechanism $\mathcal{M}$ is $\epsilon$-differentially private with respect to agent $i$, then for any (possibly randomized) function $f$ independent of the data, the composed mechanism $f(\mathcal{M}(\cdot))$ is also $\epsilon$-differentially private with respect to agent $i$.
\end{theorem-nono}
Intuitively, one can see differential privacy as an \emph{information-theoretic} measure of what information is encoded in the data; post-processing a differentially private statistic or model cannot increase the amount of information encoded in the release.

\section{Model}\label{sec:model}

We first introduce the major components of our model: 

\paragraph{Players.} We consider a set of $n$ players indexed by $i \in \{1,\dots,n\}$. Each player $i$ holds a private data point $x_i \in [0,1]$, drawn independently from a distribution $\mathcal{D}$ with unknown mean but \emph{known} variance $\sigma^2$. (Note that the assumption of known variance is the same that is made in the most closely related works~\citep{cummings2015accuracy,fallah2022bridging,satml2023,fallah2023optimal}. Other related works use similar assumptions that control the variance through bounded data or sub-gaussianity). Then, the agents' goal is to estimate the mean through data sharing. Estimation problems (specifically, mean estimation) have been extensively studied in the literature that deals with data acquisition under privacy constraints~\citep{ghosh_selling_privacy,ghosh2013privacy,buyingwithoutverification,roth2012conducting,satml2023,cummings2015accuracy}. This particular variant (with known variance) also arises in many real-world contexts where the variance is well-understood from historical data, but the mean evolves over time. Each player's data is sensitive and must be protected prior to any sharing. Each player $i$ is characterized by a single-dimensional privacy cost parameter $c_i > 0$, which captures her privacy attitude and impacts how costly it is for that player to share her data. In our setting, players are allowed to have \textit{heterogeneous privacy attitudes}. We assume that all $c_i$'s come from the bounded interval $[c_{min}, c_{max}]$ with $c_{min} > 0$. Without loss of generality, the cost parameter $c_i$'s are ordered, i.e., $c_1 \leq c_2 \leq ..... \leq c_n$.

\paragraph{Coalitions and $\epsilon$-profiles.} We consider a setting with at most one coalition. Let $S \subseteq \{1,\dots,n\}$ denote the set of participating players, or the \emph{coalition}. Players outside the coalition do not share their data and do not benefit from the joint computation. We do not consider the formation of multiple coalitions; players decide only whether or not to join the single coalition $S$ (this can be seen as a non-cooperative game theory problem where players simultaneously decide whether to participate, without being able to coordinate with others). Any non-trivial coalition $S$ is of size $|S| \geq 2$. When no data sharing occurs between the players, the coalition is \textit{empty}, hence trivial, and has size $0$. Unless otherwise specified, a `coalition' in our setting refers to a non-trivial one.

If player $i$ participates, i.e., if $i \in S$, she locally perturbs her data using the Laplace mechanism, and obtains a noisy data point $\tilde{x}_i$, as follows:
\[
\tilde{x}_i = x_i + Z_i,~~\text{where}~Z_i\sim\text{Lap} \left(\frac{1}{\epsilon_i}\right),
\]
where $\epsilon_i>0$ is the privacy level of player $i$. The $\epsilon_i$ could either be chosen for the player by the trusted central designer or is chosen endogenously by the player themselves (we will elaborate on this aspect later). We write $\vec{\epsilon}(S)$ the vector of $\epsilon_i$'s for players $i \in S$, and call this vector the \emph{$\epsilon$-profile of a coalition}.

\paragraph{Statistical computation:} The coalition aims to estimate the mean of distribution $\mathcal{D}$. Given a coalition $S$ and a privacy profile $\vec \epsilon=(\epsilon_i)_{i\in S}$, the estimator is given by 
\[
\hat \mu(S,\vec{\epsilon}) \;=\; \frac{1}{|S|}\sum_{i\in S} (x_i + Z_i).
\]
Conditioned on $S$ and $\vec \epsilon$, the resulting variance of the estimator is then given by:
\begin{align}\label{eq:variance}
\text{Var}(\hat \mu (S, \vec \epsilon)) =  \frac{\sigma^2}{|S|} + \frac{2}{|S|^2} \sum_{i\in S}\frac{1}{\epsilon_i^2}.
\end{align}
Note that this variance is \emph{decreasing} in each $\epsilon_i$: the less noise is added for privacy, the more accurate the final computation is. In turn, privacy does not come for free, and the decreased accuracy at increased privacy levels drives the main tension behind our paper. Players who are not part of the coalition, however, only see their own data point. The estimate for a player $i \notin S$, i.e., who does not join the coalition, is $\hat \mu_i = x_i$, and has variance $\sigma^2$.

\paragraph{Player ``burden'':} Each player experiences a ``burden'', which depends on whether they participate in the coalition; this burden results from two types of effects. The smaller the burden, the bigger the benefit to the player\footnote{Minimizing a player's burden is equivalent to maximizing her utility; we present all results in terms of social cost for simplicity of exposition, and note that this is without loss of generality.}.  We assume players' burdens are additive in the accuracy and privacy costs defined below, as is common in the literature ~\citep{fallah2023optimal,satml2023,liao2022privacy,acemoglu2022,nissim2012privacy}:
\begin{enumerate}
\item \textit{Accuracy cost:} \emph{All players}, regardless of whether they participate, incur an \emph{accuracy cost}---the better the accuracy of their own estimate, the lower this cost. For players in the coalition, this cost is just given by the variance of the estimator (note that higher variance means less accuracy, hence higher cost and lower utility), i.e.,
\[
\frac{\sigma^2}{|S|} + \frac{2}{|S|^2} \sum_{i\in S}\frac{1}{\epsilon_i^2}.
\]
This cost is identical across all players in $S$. On the flip side, players not in $S$ all get the same cost of $\sigma^2$, representing the (in)accuracy cost of making a prediction with only their own data. 

\item \textit{Privacy cost:} \emph{Only players who join the coalition} incur a \emph{privacy cost}. This privacy cost comes from other players seeing their data, or a version or transcript of their data (which could be the final statistic itself) and depends on $\epsilon_i$. I.e., player $i\in S$ additionally incurs a privacy cost of the form:
\[
c_i\cdot  f(|S|)\cdot  \epsilon_i.
\]
This privacy cost is increasing and linear in $\epsilon_i$, as introduced by \citet{ghosh_selling_privacy} and then used in much of the literature cited above that studies data acquisition with differential privacy, including some of the most related works~\citep{fallah2022bridging,fallah2023optimal,satml2023}).  
Note that a larger $\epsilon_i$ means less noise hence less privacy (as per Preliminary Section~\ref{sec:dp_prelims}), hence a higher privacy cost. $f(|S|)$ is a nonnegative function capturing how the players' privacy risks scale with the size of the coalition. We introduce this to ensure that they cover different common types of privacy observation models and attacks. 
In particular, the three regimes below are of particular interest:
\begin{itemize}  
\item $f(|S|) = O(1)$ corresponds to the standard \emph{local information-theoretic} differential privacy model. Because differential privacy satisfies post-processing immunity (see Section~\ref{sec:dp_prelims}), no amount of additional computation on someone's data can weaken their privacy guarantee (including a large number of players pooling their resources together). This is also how privacy accounting works in the local differential privacy model. 
\item $f(|S|)$ decreasing in $S$ often arises in \emph{differentially private federated learning} settings, where noise is often added locally, then the data is aggregated and the statistic computed securely---often through SecAgg protocols~\citep{bonawitz2016practical}---, but the privacy guarantee is evaluated centrally\footnote{Theoretically, it is better to add noise centrally. However, in practice, privacy noise is added locally for engineering (in particular, dropouts and colluding players can break the guarantees given by central addition of noise)~\citep{vithana2025correlated}, communication complexity (randomized compression schemes often provide some privacy locally)~\citep{kairouz_compression,youn2023randomized}, and trust (players may not trust a centralized designer to properly privatize their data) reasons.}, at the level of the computed model---see for example~\citet{kairouz2021distributed}.  There, each player in the computation gets a privacy amplification factor of $1/|S|$ for ``free'' from measuring privacy \emph{after} the data is aggregated, compared to their local noise. 
\item $f(|S|)$ increasing in $|S|$ can happen in \emph{fully adversarial} or \emph{privacy leakage} models. In this model, each user in the coalition is an adversary; one incurs harm for every other user that infers information about them. For example: the more players know something about a player's valuation in an auction, the more of them can strategize against her. The privacy leakage model is that of an outside attacker, where more people having access to one's data increases the amount or probability of leakage\footnote{E.g., it is more likely than an adversary can run a successful attack that recovers part of someone's data; though, the maximum amount of information about the data that can be leaked in case of any number of successful attacks is bounded by the differential privacy and post-processing guarantee.}.
\end{itemize}
In order to simplify the exposition and for analytical tractability, we will consider the following functional form of $f(\cdot)$: $f(|S|) = |S|^{\alpha}$ where $\alpha$ can vary in the interval $[-1, 1]$. Note that our function $f(\cdot)$ smoothly interpolates over all of the motivating examples mentioned above. Note that all players in the game have the same exogenous $\alpha$; this $\alpha$ is a product of the environment, in particular of the way the data is shared, and the privacy observation model. 
\end{enumerate}

As explained earlier, the \emph{burden} of a player is additive in the accuracy and privacy costs. So, we define a player's burden formally as follows:

\begin{definition}[Player's burden]
Consider a coalition $S$ with $\epsilon$-profile given by $\vec \epsilon(S)$. Then, the burden $B_i(c_i,\vec \epsilon(S) )$ of a player $i$ with cost $c_i$ is defined as follows:
\begin{itemize}
    \item \emph{Players inside the coalition:} If $i \in S$, $B_i(c_i,\vec \epsilon ) = \frac{\sigma^2}{|S|} + \frac{2}{|S|^2} \sum_{i\in S}\frac{1}{\epsilon_i^2} + c_i f(|S|) \epsilon_i$.
    \item \emph{Players outside the coalition:} If $i \notin S$, $B_i(c_i,\vec \epsilon ) = \sigma^2$.
\end{itemize}
\end{definition}

\paragraph{Social Cost:} Finally, we define our concept of \emph{social cost}, which measures the total cumulative burden of the population of $n$ players. We will henceforth refer to social cost, along with variance of the shared estimator, as our measures of ``performance'' of a coalition. Formally:
\begin{definition}
Given a coalition $S$ of size $|S| \geq 2$ with $\epsilon$-profile $\vec{\epsilon}$, the social cost is given by 
\begin{align*}
\textsf{SC}(S, \vec \epsilon) 
\triangleq \sum_{i \in [n]} B_i(c_i,\vec{\epsilon}) 
= \sum_{j \in S}\left(\frac{\sigma^2}{|S|} + \frac{2}{|S|^2} \sum_{i \in S} \frac{1}{\epsilon_i^2} + c_j f(|S|)\epsilon_j\right) + (n-|S|) \sigma^2 
\end{align*}
Further, if no coalition forms (i.e., $S = \emptyset$), then we have: $SC(\emptyset) = n \sigma^2$.
\end{definition}
Importantly, the social cost includes the burdens of not only players in the coalition $S$, but also of the $n-|S|$ players \emph{outside the coalition}, given by $(n-|S|) \sigma^2$.\\

\paragraph{A simple illustration.} We now provide a toy example to highlight how the degree of decentralization, and how much decision power is left to the players, affects the outcomes.

\begin{example}
Consider a coalition of \(k=3\) players with identical privacy costs \(c_i=1\), suppose that \(f(|S|)=1\), and fix \(\sigma^2=1\). A simple calculation shows that if the three players choose their privacy levels selfishly, then each chooses \(\epsilon_i^d=(4/9)^{1/3}\), whereas a centralized designer minimizing social cost for the same coalition chooses \(\epsilon_i^c=(4/3)^{1/3}\). Thus, even holding the coalition fixed, the decentralized players choose more stringent privacy levels, with \(\epsilon_i^d=3^{-1/3}\epsilon_i^c\). The resulting estimator variance under decentralization is \(1/3+2/(3(\epsilon_i^d)^2)\approx 1.48\), while under the centralized choice it is \(1/3+2/(3(\epsilon_i^c)^2)\approx 0.89\). Similarly, the social cost is \(SC_d\approx 6.72\), compared to \(SC_c\approx 5.95\). 

The key tension here is that each player only anticipates the impact of their \emph{own} decision on the accuracy of the final estimator. As such, they under-estimate their joint impact on the variance, and overfocus on minimizing their privacy costs. As a result, decentralized privacy choices are too stringent, leading to both higher estimator variance and higher social cost. We will formalize all these ideas later. 
\end{example}

\subsection{Coalition Formation under Full Decentralization} 
We first consider a mechanism where players have \textit{complete autonomy}. Each player decides: 
\begin{enumerate}
    \item whether to participate in the coalition; and,
    \item if participating, what \textit{endogenous} level of noise $\epsilon_i$ to add to their private data point before sharing it with the coalition. We note that local addition of noise is a standard and minimal trust model in the local differential privacy literature~\citep{duchi2013local}. In federated learning systems, faithful execution of the chosen privacy mechanism can be enforced via cryptographic verification schemes such as zero-knowledge proofs of privacy~\citep{biswas2022verifiable,narayan2015verifiable}.
\end{enumerate}
Since players are strategic and make decisions in a decentralized manner, the only coalitions that will exist are those that are \textit{stable}.  
Note that we operate under the assumption that the cost profile $\vec{c}$ is common knowledge. All players are assumed to be rational who can anticipate each other's best responses given the cost profile. Further, we assume common knowledge of rationality, as is standard in non-cooperative game theory~\citep{Nash51,aumann76}.

\paragraph{Stability of Coalitions:} Before we talk about stability at the coalition level, we need to understand how player $i \in S$ chooses their privacy parameter $\epsilon_i$, if a coalition $S$ were fixed (with respect to entry and exit of players). For each player $i \in S$, we define: 
\[
    \epsilon_i^*(S) = \arg \min_{\epsilon_i} B_i(c_i, (\epsilon_i, \vec \epsilon_{-i}) ).
\]
Recall that each player's burden $B_i(\cdot)$ is additively separable in the $\epsilon$ choices of different players and convex in $\epsilon_i$, which means that $\epsilon_i^*(S)$ exists, is unique and is the same as player $i$ selfishly optimizing their own burden (without worrying about other players' choices of privacy levels). We write $\vec \epsilon^*(S)$ as the vector of these player-wise $\epsilon_i^*(S)$'s.  
Further, note that given a fixed coalition $S$, $\vec \epsilon^*(S)$ constitutes the unique Nash equilibrium of the sub-game where burden-minimizing players in $S$ are trying to choose their own privacy levels. From now on, we assume that in the fully decentralized setting, all players \textit{when acting autonomously} try to minimize their cost/burden. So, once $S$ is fixed, any player $i \in S$ always chooses $\epsilon_i = \epsilon_i^*(S)$ (we derive this in closed form in Section~\ref{sec:autonomous} Lemma~\ref{lem:autonomous_eps}). 

We can now define stability at the coalition level.  We introduce two definitions of coalition stability, both of which rely on a similar idea: no external player wants to enter the coalition and no existing player wants to leave. Our first definition is a natural extension of standard Nash stability, to account for the fact that players not only make decisions about joining or leaving the coalition, \emph{but also on their choice of privacy level}. In particular, it refers to stability in the sense that no player $j$ wants to unilaterally change their participation decision or their own choice of privacy level, assuming that all other player decisions are fixed.

\begin{definition}[Nash-stable coalition]
A coalition $S$ with privacy profile $\vec{\epsilon}^*(S)$ is Nash-stable if and only if: 
\begin{itemize}
\item For all $j \in S$, $\frac{\sigma^2}{|S|} + \frac{2}{|S|^2} \left(\sum_{i\in S}\frac{1}{(\epsilon_i^*(S))^2}\right) + c_j f(|S|) \epsilon_j^*(S) \leq \sigma^2$; \hfill \emph{(No player wants to leave)}
\item For all $j \notin S$, \hfill \emph{(No outside player wants to join)}
\\$\min_{\epsilon_j} \frac{\sigma^2}{|S|+1} + \frac{2}{(|S|+1)^2} \left(\frac{1}{\epsilon_j^2} + \sum_{i\in S}\frac{1}{(\epsilon_i^*(S))^2}\right) + c_j f(|S|+1) \epsilon_j > \sigma^2$. 
\end{itemize}
\end{definition}
\noindent
The first condition encodes that a player currently in the coalition cannot leave and get a lower cost. The second condition encodes that a player outside the coalition does not get a better cost by entering the coalition. Importantly, because the standard definition of Nash-stability considers unilateral deviations, a player $j \notin S$ makes an entry decision \emph{assuming that the $\vec{\epsilon}$ of players currently inside the coalition is fixed}.

Beyond the standard Nash definition, it is natural to ask in our setting: ``what happens to the privacy choices of players inside the coalition when a new player attempts to join?'' In principle, sophisticated players should be able to anticipate the change between $\vec \epsilon^*(S)$ and $\vec \epsilon^*(S \cup \{j\})$ when an outside player $j$ enters the coalition $S$. There are two things to consider here: i) whether the entrant $j$ still wants to join the coalition (taking into account how her entry will change the coalition's privacy parameters)---in particular, a player may be able to enter given the current profile of $\epsilon$'s, but may no longer find it favorable to join if other players adjusted their $\epsilon$; ii) whether all current players $i \in S$ would want to remain in the coalition after $j$'s entry (accounting for how it would affect everyone's privacy parameters and their own costs). If the augmented coalition $S \cup \{j\}$ can exist without anyone wanting to leave, player $j$ would be allowed to enter $S$ --- we call this \textit{a valid entry into $S$}. However, if there exists even a single player $i \in S \cup \{j\}$ who would be unhappy with the arrangement (because they face a higher cost than the outside option) and would want to leave, we assume that the entry of player $j$ will be vetoed. 

We thus define a new notion of stability as follows: a coalition is stable iff i) no player wants to leave the coalition; and ii) no player from outside can \textit{validly} enter it. We call such coalitions \emph{equilibria that are robust to valid entry}, or \emph{robust equilibria} in short. Note that the robust definition is identical to the Nash definition when it comes to players in the coalition $S$. However, a player outside the coalition now faces a harsher condition to join, which helps the coalition remain stable for longer. Formally, 

\begin{definition}[Equilibrium coalition robust to valid entry]\label{def:valid_entry}
Given a coalition $S$, a player $j \notin S$ can make a \textbf{valid entry} into $S$ if and only if, for all players $l \in S \cup j$,
\[
\frac{\sigma^2}{|S|+1} + \frac{2}{(|S|+1)^2} \left(\sum_{i \in S \cup {j}}\frac{1}{(\epsilon_i^*(S \cup \{j\}))^2 }\right) + c_l f(|S| + 1) \epsilon_l^*(S \cup \{j\}) \leq \sigma^2.
\]

A coalition $S$ with privacy profile $\vec{\epsilon}^*(S)$ is robust-stable if and only if:
\begin{itemize} 
\item For all $j \in S$,  $\frac{\sigma^2}{|S|} + \frac{2}{|S|^2}\left( \sum_{i\in S}\frac{1}{(\epsilon_i^*(S))^2}\right) + c_j f(|S|) \epsilon_j^*(S) \leq \sigma^2$;  \hfill \emph{(No player wants to leave)}
\item For all $j \notin S$, $\exists$ $l \in S \cup \{j\}$ such that:  \hfill \text{\emph{(No player $j \notin S$ can validly enter)}}
\\$\frac{\sigma^2}{|S|+1} + \frac{2}{(|S|+1)^2} \left(\sum_{i\in S \cup {j}}\frac{1}{(\epsilon_i^*(S \cup \{j\}))^2}\right) + c_l f(|S|+1) \epsilon_l^*(S \cup \{j\}) > \sigma^2$.  
\end{itemize}
\end{definition}

We can formalize the relationship between our two notions of stability as follows:

\begin{claim}\label{clm:Nash_stronger}
Given $(\vec c, \sigma^2)$ and $\alpha \in [-1, 1]$, any $S \subseteq [n]$ which is a Nash-stable coalition must also be a robust-stable coalition. 
\end{claim}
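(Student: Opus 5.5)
The plan is to check the two conditions of Definition~\ref{def:valid_entry} separately. The first condition, that no player wants to leave, is word-for-word the same in the Nash-stable and robust-stable definitions, so it transfers immediately. All the work is in the entry condition. For every outsider $j \notin S$, I would exhibit the witness $l = j$: I will show that the entrant herself would be strictly worse off than $\sigma^2$ in the augmented coalition $S \cup \{j\}$ once everyone plays $\vec\epsilon^*(S\cup\{j\})$.

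First I compute $\epsilon_i^*(S)$ in closed form; this is the content of Lemma~\ref{lem:autonomous_eps}, and it can be rederived directly. Player $i$'s burden depends on $\epsilon_i$ only through $\frac{2}{|S|^2\epsilon_i^2} + c_i |S|^{\alpha}\epsilon_i$. This is strictly convex in $\epsilon_i$, and the first-order condition gives
\[
\epsilon_i^*(S) = \left(\frac{4}{c_i\,|S|^{2+\alpha}}\right)^{1/3}.
\]
The key structural fact is that $\alpha \ge -1$ forces $2+\alpha \ge 1 > 0$. Hence $\epsilon_i^*(\cdot)$ is strictly decreasing in coalition size, so $\epsilon_i^*(S\cup\{j\}) < \epsilon_i^*(S)$ for every incumbent $i \in S$. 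In words, when the coalition grows, incumbents add more noise.

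Now fix $j \notin S$ and write $k = |S|$. Player $j$'s burden in $S\cup\{j\}$ under the new equilibrium profile is
\[
\frac{\sigma^2}{k+1} + \frac{2}{(k+1)^2}\left(\frac{1}{\epsilon_j^*(S\cup\{j\})^2} + \sum_{i\in S}\frac{1}{\epsilon_i^*(S\cup\{j\})^2}\right) + c_j (k+1)^{\alpha}\epsilon_j^*(S\cup\{j\}).
\]
I bound this from below in two steps. First, by the monotonicity above, replacing each $\epsilon_i^*(S\cup\{j\})$ with the larger $\epsilon_i^*(S)$ can only decrease the sum. Second, the resulting expression, evaluated at $\epsilon_j = \epsilon_j^*(S\cup\{j\})$, is at least its minimum over $\epsilon_j$. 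That minimum is exactly the quantity the Nash entry condition asserts is $> \sigma^2$. Chaining these inequalities shows that $j$'s burden exceeds $\sigma^2$. So $l=j$ witnesses that $j$ cannot validly enter, and $S$ is robust-stable.

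The only step needing care is the monotonicity of $\epsilon_i^*$ in $|S|$. It relies on the exponent $2+\alpha$ being positive, which holds on the whole range $\alpha\in[-1,1]$; this is where the hypothesis on $\alpha$ enters. Everything else is a direct comparison of the two definitions.
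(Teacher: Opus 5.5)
Your proposal is correct and is essentially the paper's proof. The paper works with the closed-form equilibrium conditions of Claims~\ref{clm:eq_conditions_Nash} and~\ref{clm:eq_conditions_robust}, uses the entrant herself as the witness (this is where the $2(c_l^2/2)^{1/3}$ term comes from, before it is weakened to the max), and replaces $k^{(2\alpha+4)/3}$ by $(k+1)^{(2\alpha+4)/3}$ "since $\alpha>-2$". That replacement is exactly your observation that $\epsilon_i^*$ decreases in coalition size; you carry out the same comparison directly on the burdens instead of on the closed-form conditions.
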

The above relationship is consequential as many properties of the equilibria established under the Nash definition immediately extend to the robust definition. 

\begin{remark}
Note that beyond the two notions of stability we are using, there are several other possible ways to think of stability in the context of coalitions. For example, 
\emph{individual stability} requires that the entry of a new player is vetoed if any existing player in the coalition risks facing a higher burden than the existing arrangement. 
Similarly, there are other notions of stability studied in the cooperative game theory literature like \textit{core stability}~\citep{gillies1953some,shapley1965balanced,aumann1961core} where the arrangement needs to be stable against any subset of current participants jointly breaking out and forming their own coalition.  

However, we argue that our main insights are not driven by the specific choice of notion of stability. In fact, we will show subsequently that the major cause of inefficiency in the fully decentralized setting is due to players choosing their individual $\epsilon$'s sub-optimally; this is evidenced by the fact that partial decentralization (Section~\ref{sec:partial-def}), where all agents must use a single, externally imposed $\epsilon$, recovers the social cost and estimator variance of the fully centralized baseline mechanism (Section~\ref{sec:baseline-def}). We also show that players' choices of privacy levels under full decentralization are completely unaffected by the definition of stability, and differ in order from those chosen by the designer in the partially decentralized (Section~\ref{sec:partial}) or fully centralized settings (Section~\ref{sec:baseline}). 
\end{remark}

\subsection{An Intermediate Mechanism: Coalition Formation under Partial Decentralization}\label{sec:partial-def}
We now introduce a partially decentralized data sharing mechanism. In this mechanism, there is a central designer who chooses a fixed privacy level $\epsilon > 0$ at the beginning. Players still retain the agency to make individual participation decisions like the fully decentralized mechanism described earlier, however, anybody who chooses to participate receives the same privacy level of $\epsilon$. In this sense, our mechanism is \textit{partially} decentralized, given the centralized privacy level.

\paragraph{Stability of Coalitions:} 
In this mechanism, the notion of stability is greatly simplified because players can only make participation decisions: informally, a coalition is considered \textit{stable} if no player inside the coalition has an incentive to leave unilaterally and no player outside the coalition has an incentive to join unilaterally. As a result, the standard notion of Nash-stability applies directly~\citep{bogomolnaia2002stability}. Formally,

\begin{definition}[Nash-stable coalition at fixed privacy level $\epsilon$] 
Given a fixed privacy level $\epsilon$, a coalition $S$ is Nash-stable if and only if: 
\begin{itemize}
    \item For all $j \in S$, $\frac{1}{|S|}\left(\sigma^2 + \frac{2}{\epsilon^2} \right) + c_j f(|S|)\epsilon \leq \sigma^2$; \hfill \emph{(No player wants to leave)}
    \item For all $j \notin S$, $\frac{1}{|S|+1}\left(\sigma^2 + \frac{2}{\epsilon^2} \right) + c_j f(|S|+1)\epsilon > \sigma^2$. \hfill \emph{(No outside player wants to join)} 
\end{itemize}
\end{definition}

\subsection{The Baseline Mechanism: Coalition Formation under Full Centralization}\label{sec:baseline-def}
We finally introduce the baseline mechanism against which we will measure the performance of both of our decentralized mechanisms. In this setting, no player has decision-making authority. Instead, all decisions are made by the central designer who chooses: 
\begin{enumerate}
    \item which players to include in the coalition; and
    \item what privacy level $\epsilon_i$ to assign to each participant in the coalition. 
\end{enumerate}

It should be clear that the centralized mechanism always achieves the \textit{best outcome} in terms of minimizing social cost compared to all other mechanisms. We can then define a measure of efficiency called the \textit{Price of Stability} (\textsf{PoS}) with respect to performance metric $X$ (like social cost or estimator variance), using the centralized mechanism as a baseline: 
\[
    \textsf{PoS}(X) = \frac{\text{Value($X$) Achieved by Optimal Coalition under Decentralization at $(\vec c, \sigma^2)$} }{ \text{Value($X$) Achieved by Optimal Coalition under Centralization at $(\vec c, \sigma^2)$} }.
\]
Note here that the above definition applies to both our fully decentralized and partially decentralized mechanisms. 
Additionally, the optimal coalition under centralization versus decentralization can be very different: the \textit{optimal} coalition in the centralized setting is the one which minimizes social cost \textit{across all possible coalitions}, while the \textit{optimal} coalition in the decentralized setting (full or partial) is the one which minimizes social cost only \textit{out of those coalitions which are stable}. 
In the above definition, we use the convention that in case no stable coalitions exist under decentralization, we default to the setting where there is no data sharing between players. 
We consider two natural choices for the performance metric $X$: 
\begin{itemize}
\item the social cost $\textsf{SC}$;
\item the variance $\textsf{Var}$ of the estimator computed by the coalition. 
\end{itemize}
Note that, by definition, $\textsf{PoS}(\textsf{SC}) \geq 1$. However, this may not be true for $\textsf{PoS}(\textsf{Var})$: the optimal coalition in terms of social cost may not always be the optimal coalition in terms of variance.

\section{What Happens under Full Centralization?}\label{sec:baseline}

In this section, we explore the baseline model (with full centralization) in greater detail, investigating how a central designer, interested in minimizing the social cost, would choose to assign privacy levels to players if they had complete authority. We are also interested in understanding how these decisions would affect the composition and performance of the social-cost-minimizing coalition across different regimes of $\alpha$. This performance is again defined in terms of both social cost and variance.  

In order to obtain the social cost minimizing coalition in the centralized setting, the designer needs to jointly optimize over the size and composition of the coalition and the privacy levels of the players.
To do so, we first solve the lower-level problem of deciding, assuming we know the coalition size $k$, the optimal set of participants such that $|S| = k$ as well as the corresponding optimal privacy levels; then, we solve the higher-level problem of optimizing over the coalition size $k$.
\subsection{Optimal Decisions at a Fixed Coalition Size} 
Our first main result solves the lower-level optimization problem for the designer. We define the \emph{optimal coalition at size $k$} as the coalition that incurs the lowest social cost among all coalitions of size $k$.  
\begin{lemma}\label{lem:centralized}
Given ($\vec c, \sigma^2$) and integer $k$ ($n \geq k \geq 2$),  
\begin{itemize}
    \item the optimal coalition at size $k$ is downward-closed, i.e., it includes the $k$ players with the first $k$ smallest costs (we denote this coalition by $S_k$);
    \item the designer's optimal choice of privacy level for each player $i \in S_k$ is given by $\epsilon_i = \left(\frac{4}{k^{(1+\alpha)} c_i} \right)^{1/3}$; 
    \item the estimator variance at the optimal coalition at size $k$ is given by:
    \begin{align}\label{eq:varbest_c}
             \textsf{Var}_c(\hat \mu~|~k) = \frac{1}{k}\left[\sigma^2 + \frac{ k^{2(\alpha+1)/3}}{2^{1/3}} \left(\sum_{i \in S_k}c_i^{2/3}/k \right) \right]; \quad \text{and}
    \end{align}
    \item the social cost achieved at the optimal coalition at size $k$ (also the best achievable social cost at size $k$) is given by
    \begin{align}\label{eq:scbest_c}
        \textsf{SC}_{c}(k) = (n+1)\sigma^2 - \left[ k\sigma^2 - \left(\frac{3}{2^{1/3}} \right)k^{2(\alpha +1)/3}\left(\sum_{i \in S_k}c_i^{2/3}/k \right) \right].
    \end{align}
\end{itemize}
\end{lemma}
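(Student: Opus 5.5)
The plan is to fix a coalition $S$ with $|S| = k$ and first solve the inner minimization over the $\epsilon$-profile in closed form. The resulting minimum social cost will depend on $S$ only through $\sum_{i\in S} c_i^{2/3}$, and that reduces the choice of members to a trivial sorting argument.

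\textbf{Step 1: rewrite the social cost for fixed $S$.} Writing $f(k)=k^\alpha$ in the definition of $\textsf{SC}$ gives
\[
\textsf{SC}(S,\vec\epsilon) = (n-k+1)\sigma^2 + \sum_{i\in S}\left(\frac{2}{k}\cdot\frac{1}{\epsilon_i^2} + k^{\alpha} c_i \epsilon_i\right).
\]
Here the $k$ copies of $\sigma^2/k$ collapse to a single $\sigma^2$, and the $k$ copies of the common noise term $\frac{2}{k^2}\sum_i \epsilon_i^{-2}$ collapse to $\frac{2}{k}\sum_i\epsilon_i^{-2}$. The key point is that this objective is additively separable across $i\in S$.

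\textbf{Step 2: optimize each $\epsilon_i$.} Each summand $g_i(\epsilon)=\frac{2}{k}\epsilon^{-2}+k^\alpha c_i\epsilon$ is strictly convex on $(0,\infty)$ and tends to $+\infty$ at both ends. So its unique minimizer is given by the first-order condition $-\frac{4}{k}\epsilon^{-3}+k^\alpha c_i=0$, which yields $\epsilon_i=\left(4/(k^{1+\alpha}c_i)\right)^{1/3}$. Substituting back, the accuracy part equals $2^{-1/3}k^{(2\alpha-1)/3}c_i^{2/3}$ and the privacy part equals $2^{2/3}k^{(2\alpha-1)/3}c_i^{2/3}$. Their sum is $3\cdot 2^{-1/3}k^{(2\alpha-1)/3}c_i^{2/3}$. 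Summing over $i\in S$ and writing $\sum_{i\in S}c_i^{2/3}=k\cdot(\sum_{i\in S} c_i^{2/3}/k)$ gives
\[
\min_{\vec\epsilon}\textsf{SC}(S,\vec\epsilon)=(n-k+1)\sigma^2+\frac{3}{2^{1/3}}\,k^{2(\alpha+1)/3}\left(\sum_{i\in S}c_i^{2/3}/k\right).
\]
This is exactly \eqref{eq:scbest_c} once $S=S_k$. For the variance, I plug the same $\epsilon_i$ into \eqref{eq:variance}. This gives $\frac{\sigma^2}{k}+\frac{1}{k}\cdot 2^{-1/3}k^{(2\alpha-1)/3}\sum_{i}c_i^{2/3}$, which rearranges to \eqref{eq:varbest_c}.

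\textbf{Step 3: choose the members.} The minimized cost above is strictly increasing in $\sum_{i\in S}c_i^{2/3}$ for fixed $k$. Since $t\mapsto t^{2/3}$ is increasing, this sum is minimized over $|S|=k$ by taking the $k$ smallest costs, i.e.\ $S_k=\{1,\dots,k\}$. Ties give equal values, so the statement holds up to tie-breaking. Because the inner optimization was exact for every $S$, the pair ($S_k$, the above $\epsilon$'s) is jointly optimal among all coalitions of size $k$.

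I expect no step to be a real obstacle; the only care needed is bookkeeping. In particular, the shared variance term is counted $k$ times in the social cost, which produces the factor $2/k$ rather than $2/k^2$ in the first-order condition. This is precisely what separates the centralized choice from the selfish one. The exponent arithmetic $(2\alpha-1)/3+1=2(\alpha+1)/3$ and the constants $2^{-1/3}+2^{2/3}=3\cdot 2^{-1/3}$ should be double-checked.
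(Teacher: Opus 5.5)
Your proposal is correct and follows essentially the same route as the paper. You rewrite the social cost for fixed $S$ as $(n+1-k)\sigma^2+\sum_{i\in S}\bigl(\frac{2}{k\epsilon_i^2}+k^{\alpha}c_i\epsilon_i\bigr)$, minimize each separable convex term via its first-order condition, and observe that the minimized cost depends on $S$ only through $\sum_{i\in S}c_i^{2/3}$, which $S_k$ minimizes; your constants and exponents all check out, and your strict-convexity/coercivity remark is slightly more careful than the paper's bare first-order-condition step.
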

We make the following observations: firstly, the above result tells us a key structural property of the optimal coalition at size $k$---this reduces the designer's higher-level optimization from a combinatorial problem to one over a single-dimensional variable (the size of the coalition).   
Secondly, even when minimizing social cost, it is in the designer's best interest to respect the heterogenous privacy attitudes of individual players, i.e., participants with a higher privacy sensitivity $c_i$ obtain a more stringent privacy level (smaller $\epsilon_i$). Further, all participants obtain \emph{more stringent} privacy levels as the coalition size increases for any $\alpha \in (-1, 1]$---this is perhaps counterintuitive in the case where $\alpha < 0$ where having more agents in the coalition leads to improved privacy costs. The best $\epsilon$ choices become independent of coalition size only at $\alpha = -1$. .  

\subsection{What is the Optimal Coalition Size, and What are its Implications for Variance and Social Cost?}
We now solve the higher-level problem for the designer: finding the optimal coalition size $k^*$.  
\begin{align}\label{opt:kstar}
        k^* = \begin{cases}
        &\arg \min_{k \in \{2,3...n\}} \textsf{SC}_c(k), \quad \text{if } \min_{k \in \{2,3...n\}}\textsf{SC}_c(k) < n\sigma^2\\
        &0, \quad \text{o/w.}
        \end{cases}
\end{align}

\noindent 
The value of $k^*$ depends greatly on the regime of $\alpha$ we are in and the parameters of the problem like the number of players $n$, the cost profile $\vec c$ and the variance $\sigma^2$. For any arbitrary problem instance, $k^*$ can be found efficiently by solving the one-dimensional optimization problem (Program~\eqref{opt:kstar}). Note that $k^*$ can be equal to $0$ if no coalition of size $\geq 2$ offers an improvement over the trivial setting of the empty coalition. We are particularly interested about the value of $k^*$ where the number of players $n$ is large.

\begin{thm}\label{thm:kstar_c}
When the number of players $n$ is sufficiently large,
\begin{enumerate}[label=\roman*)]
    \item For $\alpha \in \left[-1, \frac{1}{2}\right)$, we have $k^* = n$, i.e., the grand coalition is optimal. This achieves $\textsf{Var}_c(\hat \mu~|~k^*) = \Theta(n^{(2\alpha-1)/3})$ and $\textsf{SC}_c(k^*) = \Theta(n^{2(\alpha+1)/3})$.
    \item For $\alpha = \frac{1}{2}$, $k^*$ can take any value in $\{0,2,3,...n\}$. In all cases, $\textsf{Var}_c(\hat \mu~|~k^*) = \Theta(1)$ and $\textsf{SC}_c(k^*) = \Theta(n)$.
    \item Finally, for $\alpha \in \left(\frac{1}{2}, 1\right]$, we either have $k^* = 0$ or $k^* = \Theta(1)$. In both cases, we obtain $\textsf{Var}_c(\hat \mu~|~k^*) = \Theta(1)$ and $\textsf{SC}_c(k^*) = \Theta(n)$.  
\end{enumerate}
\end{thm}
\begin{proofsketch}
We now provide some intuition. From Equation~\eqref{eq:scbest_c}, we note that the social cost is minimized when the function $g(k) = k\sigma^2 - \left(\frac{3}{2^{1/3}}\right)\left(\sum_{i \in S_k}c_i^{2/3}/k\right) k^{2(\alpha+1)/3}$ is maximized. $g(k)$ consists of a first term that increases linearly in $k$ and a second term which decreases in $k^{2(\alpha+1)/3}$---note that by assumption on the support of the costs, $\sum_{i \in S_k}c_i^{2/3}/k = \Theta(1)$. 

The value of $k^*$ will therefore be determined by the interplay between these two terms. When $\alpha < \frac{1}{2}$, the second term is sub-linear which means that for sufficiently large $n$, the first term would always dominate, implying that $g(k)$ will be maximized at $n$. Therefore $k^* = n$ for $\alpha \in \left[-1, \frac{1}{2}\right)$. $\alpha = \frac{1}{2}$ is a special case where both terms are linear in $k$ and the location of the maximizer is determined by the net sign. 
Finally, for $\alpha > \frac{1}{2}$, the second term begins to dominate and the function $g(k)$ either becomes monotonically decreasing or concave, both of which lead to a unique maximizer independent of $n$. This implies that $k^* = 0$ or $\Theta(1)$. The values of $k^*$ in the different regimes then drive the orders of the optimal social cost and the optimal variance.  
\end{proofsketch}
\noindent 
While Theorem~\ref{thm:kstar_c} completely characterizes the optimal coalition characteristics for all regimes of $\alpha$ for large $n$, we specifically want to draw the reader's attention to the following key takeaway: \textit{$\alpha \in \left[-1, \frac{1}{2} \right)$ is the only regime where the optimal accuracy of the estimator achieves non-trivial improvement over the outside option $\sigma^2$ (which is $\Theta(1)$) and the optimal social cost grows sublinearly in $n$}. This indicates that the rate at which privacy costs grow in the size of the coalition may have implications for how effective data sharing is. When costs grow rapidly in coalition size, the central designer is required to add more noise to each data point (by offering smaller $\epsilon$'s), thereby making the accuracy of the computation worse. This further dissuades the designer from including players with high privacy sensitivities ($c_i$'s) into the coalition explaining the formation of coalitions of size $\Theta(1)$ or no coalition at all.

\section{What Happens under Full Decentralization?}\label{sec:autonomous}

In this section, we turn our attention back to coalition formation through players making self-interested decisions in a fully decentralized way. We aim to answer three key questions.
\begin{enumerate}
\item How do participants in a stable coalition $S$ choose their individual privacy levels? What would be the social cost and estimator variance achieved by $S$ in this case? 
\item What can we say about the existence of stable coalitions and what would be some of their properties? (Section~\ref{subsec:stability})
\item For different regimes of $\alpha$, what are the characteristics of the optimal (social cost minimizing) 
stable coalition? (Section~\ref{subsec:optimal_eq})
\end{enumerate} 
\noindent 
First, we investigate how players make decisions when they are a part of an fixed coalition and how it affects the estimator variance and the social cost at said coalition. Lemma \ref{lem:autonomous_eps} examines a single coalition and describes the privacy level that each player autonomously chooses for themselves when optimizing their own burden, and consequentially, the variance and social cost associated with this coalition. Note that Lemma \ref{lem:autonomous_eps} would  apply to any coalition (hence to both Nash and robust stable coalitions). The specific notion of stability affects \emph{which} agents join the coalition, not their privacy level \emph{conditioned} on a specific coalition (Section~\ref{sec:model}). Once a player is part of the coalition, they pick the $\epsilon$ level that unilaterally minimizes their social burden.

\begin{lemma}\label{lem:autonomous_eps}
Given $(\vec c, \sigma^2)$ and coalition $S \subseteq [n]$ with $k \triangleq |S| \geq 2$ with each player in $S$ choosing their own privacy level autonomously, 
\begin{itemize}
\item for any player $i \in S$, their optimal choice of privacy level is given by $\epsilon_i^* = \left(\frac{4}{k^{(2+\alpha)} c_i} \right)^{1/3}$;
\item the estimator variance achieved by pooling data in coalition $S$ is given by: 
\begin{align}\label{eq:var_d}
      \textsf{Var}_{d}(\hat \mu~|~S) = \frac{1}{k}\left[ \sigma^2 + \frac{ k^{2(\alpha+2)/3} }{2^{1/3}}\left(\sum_{i \in S}c_i^{2/3}/k \right) \right]; \quad \text{and}
\end{align}
\item at coalition $S$, the social cost is given by:
\begin{align}\label{eq:sc_d}
     \textsf{SC}_{d}(S) = (n+1)\sigma^2 - \left[k\sigma^2 -\left(\frac{1}{2^{1/3}} \right) (k+2)k^{(2\alpha+1)/3} \left(\sum_{i \in S} c_i^{2/3}/k\right) \right]; 
\end{align}
\end{itemize}
\end{lemma}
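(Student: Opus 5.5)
Fix the coalition $S$ with $k=|S|\ge 2$ and treat each player's burden as a function of her own $\epsilon_i$, holding $\vec\epsilon_{-i}$ fixed. Since $B_i$ is additively separable across the $\epsilon_j$'s, the only terms depending on $\epsilon_i$ are $\frac{2}{k^2}\epsilon_i^{-2}$ and $c_i k^{\alpha}\epsilon_i$. This one-variable function is strictly convex on $(0,\infty)$, diverges to $+\infty$ at both ends, and so has a unique minimizer, given by the first-order condition $-\frac{4}{k^2}\epsilon_i^{-3}+c_ik^\alpha=0$. Solving gives $\epsilon_i^{*3}=\frac{4}{k^{2+\alpha}c_i}$, which is the first bullet. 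Because the minimizer does not depend on $\vec\epsilon_{-i}$, this profile is also the unique Nash equilibrium of the privacy subgame, as the model section already notes.

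For the variance, substitute $\epsilon_i^{*-2}=\left(\frac{k^{2+\alpha}c_i}{4}\right)^{2/3}=4^{-2/3}k^{2(2+\alpha)/3}c_i^{2/3}$ into \eqref{eq:variance}. The noise term becomes
\[
\frac{2}{k^2}\cdot 4^{-2/3}k^{2(2+\alpha)/3}\sum_{i\in S}c_i^{2/3}=\frac{1}{k}\cdot\frac{k^{2(\alpha+2)/3}}{2^{1/3}}\Big(\sum_{i\in S}c_i^{2/3}/k\Big),
\]
using $2\cdot 4^{-2/3}=2^{-1/3}$ and $k^{-2}\cdot k=k^{-1}$ after pulling out the average. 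Adding $\sigma^2/k$ gives \eqref{eq:var_d}.

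For the social cost, sum the burdens:
\[
\textsf{SC}_d(S)=k\,\textsf{Var}_d+\sum_{i\in S}c_ik^\alpha\epsilon_i^*+(n-k)\sigma^2 .
\]
Compute the privacy term as $c_ik^\alpha\cdot 4^{1/3}k^{-(2+\alpha)/3}c_i^{-1/3}=2^{2/3}k^{(2\alpha-2)/3}c_i^{2/3}$. Summing over $S$ and writing the sum as $k$ times the average yields
\[
2^{2/3}k^{(2\alpha+1)/3}\Big(\sum c_i^{2/3}/k\Big)=\frac{2}{2^{1/3}}k^{(2\alpha+1)/3}\Big(\sum c_i^{2/3}/k\Big).
\]
Likewise, $k\,\textsf{Var}_d=\sigma^2+2^{-1/3}k^{(2\alpha+4)/3}(\cdot)=\sigma^2+2^{-1/3}k\cdot k^{(2\alpha+1)/3}(\cdot)$. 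Combining, the $c$-terms have total coefficient $2^{-1/3}(k+2)k^{(2\alpha+1)/3}$, and the $\sigma^2$ terms give $\sigma^2+(n-k)\sigma^2=(n+1)\sigma^2-k\sigma^2$. This matches \eqref{eq:sc_d}.

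There is no real obstacle here. The only delicate step is the exponent bookkeeping: rewriting $k^{(2\alpha+4)/3}$ as $k\cdot k^{(2\alpha+1)/3}$ so that the variance and privacy contributions merge into the $(k+2)$ factor. I would double-check this step carefully, including the constants $2\cdot4^{-2/3}$ and $4^{1/3}$.
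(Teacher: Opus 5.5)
Your proposal is correct and follows essentially the same route as the paper. You solve each player's first-order condition, using separability and convexity of $B_i$ in $\epsilon_i$, then substitute $\epsilon_i^*$ into the variance and the summed burdens, merging the accuracy and privacy terms via $k^{(2\alpha+4)/3}=k\cdot k^{(2\alpha+1)/3}$ to obtain the $(k+2)$ factor. Your constants $2\cdot 4^{-2/3}=2^{-1/3}$ and $4^{1/3}=2^{2/3}$ and all exponents check out, and noting that the objective diverges at both ends of $(0,\infty)$ slightly tightens the paper's uniqueness argument.
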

At this juncture, it is important to draw some comparisons with the baseline setting (Lemma~\ref{lem:centralized}). While a participant's optimal choice of privacy level $\epsilon_i^*$ has the same dependence on their privacy cost parameter $c_i$ in both mechanisms, in the decentralized setting, the coalition size exerts a larger influence on $\epsilon_i^*$: specifically, note that the $\epsilon_i^*$ in Lemma \ref{lem:autonomous_eps} is larger than the $\epsilon_i^*$ in Lemma \ref{lem:centralized} by a factor proportional to $k^{1/3}$, the size of the coalition. When players make decisions in a decentralized manner, in particular, \emph{everyone opts for more stringent privacy levels} (smaller $\epsilon$'s). 
This implies a perhaps surprising observation: as we will see later in Section~\ref{sec:pos}, decentralization hurts efficiency, not only in terms of social cost but also \emph{in terms of variance of the shared estimator}.

\subsection{Existence of Stable Coalitions: Some Special Cases}\label{subsec:stability}

\paragraph{Existence of Stable Coalitions: Nash-stability vs Robust-stability.} We first highlight that the standard notion of Nash stability can often be \textit{brittle} in our problem setting. In particular, we can show that the existence of Nash-stable coalitions may be non-monotonic in the variance level $\sigma^2$. I.e., there can be intermediate regimes of variance $\sigma^2$ where no Nash-stable coalitions exist. Our robust-stability definition addresses some of these issues with the Nash definition. We can show that it has nice monotonicity properties in terms of both existence and coalition size --- i.e., if a robust-stable coalition of a certain size $k_1$ exists at variance level $\sigma_1^2$, then for any larger variance $\sigma_2^2 > \sigma_1^2$, we are guaranteed to find a robust-stable coalition of size $k_2 \geq k_1$. We elaborate on these properties with detailed proofs and examples in Appendix~\ref{app:extra}.  

\paragraph{Challenges of Computing Stable Coalitions in General Cases.} 
Before we talk about existence of stable coalitions in special cases, it is also important to note some properties of stable coalitions that make efficient numerical computation very difficult. For example, we can show that there exist problem instances $(\alpha, \vec c, \sigma^2)$ where we have stable coalitions which are non-monotonic in player costs (Lemma~\ref{lem:non_monotone_cost}, Appendix~\ref{app:extra}). i.e., such coalitions include higher cost players at the expense of some lower cost players (which is counterintuitive) and are still stable. This often leads to multiplicity of stable coalitions (which might even be of different sizes) (Lemma~\ref{lem:multiplicity}, Appendix~\ref{app:extra}). This lack of structure at stability and the multiplicity property implies that finding stable coalitions for a given problem instance often requires brute-force enumeration over an exponential ($2^n$) number of options which is obviously inefficient as $n$ grows large. This motivates us to find conditions under which some special coalitions of interest are guaranteed to be stable.  

\paragraph{Special case I: Stability of Grand Coalition.} 
The grand coalition (the coalition that consists of all players) is often a desirable outcome for many data-sharing mechanisms because it indicates the maximum extent of data sharing between players. Claim \ref{clm:high_var} provides sufficient conditions for when the \emph{grand coalition} is a stable outcome of the fully decentralized setting. 

\begin{claim}\label{clm:high_var}
Given cost profile $\vec c$ and variance $\sigma^2$, the grand coalition is (Nash/robust) stable if: 
\begin{itemize}
    \item $2 \leq n \leq \left[ \frac{\sigma^2}{4(c_n^2/2)^{1/3}} \right]^{3/(2\alpha+1)}$, for $\alpha > - \frac{1}{2}$; and 
    \item $n \geq \max \left\{ 2, \left[ \frac{4(c_n^2/2)^{1/3}}{\sigma^2} \right]^{-3/(2\alpha+1)} \right\}$, for $\alpha < - \frac{1}{2}$.
\end{itemize}
\end{claim}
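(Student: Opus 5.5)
For the grand coalition $S=[n]$ there are no outside players, so both stability notions reduce to the single condition that no member wants to leave. Claim~\ref{clm:Nash_stronger} then lets us check Nash-stability only. So the plan is to show that, under the stated conditions on $n$, every $j\in[n]$ satisfies
\[
\frac{\sigma^2}{n} + \frac{2}{n^2}\sum_{i=1}^n \frac{1}{(\epsilon_i^*)^2} + c_j n^{\alpha}\epsilon_j^* \le \sigma^2 ,
\]
where $\epsilon_i^*=\left(4/(n^{2+\alpha}c_i)\right)^{1/3}$ is given by Lemma~\ref{lem:autonomous_eps}.

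First, substitute the closed form. For the variance term, $1/(\epsilon_i^*)^2 = (n^{2+\alpha}c_i/4)^{2/3}$, so
\[
\frac{2}{n^2}\sum_i (\epsilon_i^*)^{-2}=\frac{n^{2(2+\alpha)/3}}{2^{1/3}n^2}\sum_i c_i^{2/3}.
\]
For the privacy term, $c_j n^\alpha\epsilon_j^* = 4^{1/3}c_j^{2/3}n^{\alpha-(2+\alpha)/3}=4^{1/3}c_j^{2/3}n^{(2\alpha-2)/3}$. Next, bound every $c_i$ by $c_n$, which is the worst case since the burden is increasing in all costs. This makes the variance term at most $2^{-1/3}c_n^{2/3}n^{(2\alpha+1)/3}$. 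The privacy term is $4^{1/3}c_n^{2/3}n^{(2\alpha-2)/3}\le 4^{1/3}c_n^{2/3}n^{(2\alpha+1)/3}$, since $n\ge1$. Writing $4^{1/3}=2\cdot 2^{-1/3}$, the two terms together are at most $3(c_n^2/2)^{1/3}n^{(2\alpha+1)/3}$.

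It therefore suffices to show
\[
\frac{\sigma^2}{n} + 3\left(\frac{c_n^2}{2}\right)^{1/3} n^{(2\alpha+1)/3}\le \sigma^2 .
\]
Since $n\ge2$, we have $\sigma^2/n\le\sigma^2/2$, so it is enough to require $3(c_n^2/2)^{1/3}n^{(2\alpha+1)/3}\le\sigma^2/2$. The constant $4$ in the claim suggests the intended slack. A cleaner split is to require each cost term to be at most $\sigma^2/4$, using $(c_n^2/2)^{1/3}n^{(2\alpha+1)/3}\le\sigma^2/4$. That single inequality covers the variance term directly. The privacy term is $2(c_n^2/2)^{1/3}n^{(2\alpha-2)/3}$, which carries an extra factor $n^{-1}\le 1/2$, so it is also at most $\sigma^2/4$. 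The total is then at most $\sigma^2/2+\sigma^2/4+\sigma^2/4=\sigma^2$.

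Finally, solve $n^{(2\alpha+1)/3}\le \sigma^2/(4(c_n^2/2)^{1/3})$ for $n$, splitting on the sign of the exponent. For $\alpha>-1/2$ the exponent is positive, which gives the upper bound $n\le[\sigma^2/(4(c_n^2/2)^{1/3})]^{3/(2\alpha+1)}$. For $\alpha<-1/2$, raising to the negative power $3/(2\alpha+1)$ flips the inequality. This yields $n\ge[\sigma^2/(4(c_n^2/2)^{1/3})]^{3/(2\alpha+1)}=[4(c_n^2/2)^{1/3}/\sigma^2]^{-3/(2\alpha+1)}$, and combined with $n\ge2$ this is the stated condition.

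The main obstacle is the constant bookkeeping. We must check that the claim's constant $4$ actually absorbs both the variance term and the privacy term, which depends on using the $n^{-1}$ factor and $n\ge2$ exactly as above. We must also check that the sign flip in the $\alpha<-1/2$ case is handled correctly.
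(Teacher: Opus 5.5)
Your proof is correct and is essentially the paper's argument. After substituting $\epsilon_i^*$ and bounding every cost by $c_n$, the no-exit condition reduces to the sufficient inequality $(n+2)(c_n^2/2)^{1/3}n^{(2\alpha+1)/3}\le (n-1)\sigma^2$. The paper secures this via $\min_{n\ge 2}\frac{n-1}{n+2}=\frac14$, and you secure it via the equivalent split $\sigma^2/2+\sigma^2/4+\sigma^2/4$ using $n\ge 2$. Your earlier cruder bound (factor $3$ against $\sigma^2/2$) is unnecessary and would only give the constant $6$. The sign flip for $\alpha<-\frac12$ is handled exactly as in the paper.
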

Note that the conditions in Claim \ref{clm:high_var} for the grand coalition to be stable are sufficient but not necessary. Yet, they are useful for understanding situations in which all players can stably pool data together. The only requirement to sustain the grand coalition is that no player has an incentive to leave the coalition: i.e., nobody's burden for participation exceeds the cost of non-participation $\sigma^2$. For the grand coalition, the burden of participation is $\Theta(n^{(2\alpha+1)/3})$. This means that in the regime where $\alpha > -\frac{1}{2}$, large values of $n$ make it difficult to sustain the grand coalition. On the other hand, in the $\alpha < - \frac{1}{2}$ regime, increasing the number of players actually lowers privacy costs and the burden of participation for everyone, maintaining stability of the grand coalition as more and more players keep joining. 

\paragraph{Special Case II: Stable Coalitions When All Players Have Identical Costs.}
Finally, we discuss the setting where all players have identical cost parameters $c$. While this setting is simple, it will be particularly useful for establishing matching lower bounds on the best achievable social cost in Theorem~\ref{thm:opt_eq} (because it is one of the few instances where all the equilibrium coalitions that exist at $(c, \sigma^2)$ can be accounted for, which enables finding the one that minimizes social cost). Again, these results will highlight differences between Nash and robust stability. 

First, Claim \ref{clm:equal_cost_Nash} shows that when players have identical cost parameter $c$, the grand coalition is the only coalition that can be Nash-stable for the entire allowable range of $\alpha$. 
\begin{claim}\label{clm:equal_cost_Nash}
Suppose, we are in a setting where all players have the identical cost parameter $c$. Then, there exists no Nash-stable coalition of intermediate size $k$ ($2 \leq k < n$) for any $\alpha \in [-1, 1]$. 
\end{claim}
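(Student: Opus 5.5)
The plan is a contradiction argument. Suppose all players share the cost $c$ and some coalition $S$ with $2 \le k = |S| < n$ is Nash-stable. Because $k<n$, there is an outsider $j \notin S$, and $j$ has the same cost $c$ as every insider. I will show that the minimal entry burden of $j$ is at most the burden of an insider. The first (``no player wants to leave'') condition bounds the insider burden by $\sigma^2$. So $j$'s entry burden would be $\le \sigma^2$, contradicting the strict inequality $> \sigma^2$ that Nash-stability requires of outsiders. Only the Nash definition is needed, so the insiders' privacy levels stay fixed when $j$ deviates.

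\textbf{Step 1 (insider burden).} By Lemma~\ref{lem:autonomous_eps}, every insider uses the common level $\epsilon$ with $\epsilon^3 = 4/(k^{2+\alpha}c)$. The insider burden is therefore
\[
B_{\mathrm{in}} = \frac{1}{k}\Big(\sigma^2 + \frac{2}{\epsilon^2}\Big) + c\,k^{\alpha}\epsilon,
\]
and stability gives $B_{\mathrm{in}} \le \sigma^2$.

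\textbf{Step 2 (upper bound on the entrant's burden).} The entrant minimizes over $\epsilon_j$, so its optimal burden is at most the value at the particular choice $\epsilon_j = \epsilon$. That value is
\[
B_{\mathrm{ent}}(\epsilon) = \frac{1}{k+1}\Big(\sigma^2 + \frac{2}{\epsilon^2}\Big) + c\,(k+1)^{\alpha}\epsilon .
\]
Subtracting,
\[
B_{\mathrm{in}} - B_{\mathrm{ent}}(\epsilon) = \frac{\sigma^2 + 2/\epsilon^2}{k(k+1)} - c\,\epsilon\big((k+1)^\alpha - k^\alpha\big).
\]
For $\alpha \le 0$ the privacy term is nonpositive, so the difference is positive and we are done. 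For $\alpha \in (0,1]$, Bernoulli's inequality gives $(k+1)^\alpha - k^\alpha \le \alpha k^{\alpha-1} \le k^{\alpha-1}$. It then suffices to show
\[
\frac{\sigma^2 + 2/\epsilon^2}{k(k+1)} \ge c\,\epsilon\, k^{\alpha-1}.
\]
Dropping $\sigma^2$ and substituting $2/\epsilon^3 = k^{2+\alpha}c/2$, this reduces to $k^2 \ge 2(k+1)$, which holds for every $k \ge 3$.

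\textbf{Step 3 (the case $k=2$, the expected obstacle).} For $k=2$ the pure privacy/noise comparison fails, since $4 < 6$. Here I will use the $\sigma^2$ term. The ``no player wants to leave'' condition gives $\sigma^2(1-1/k) \ge 2/(k\epsilon^2) + c k^\alpha \epsilon$, which at $k=2$ reads
\[
\sigma^2 \ge \frac{2}{\epsilon^2} + 2\cdot 2^{\alpha} c\,\epsilon .
\]
Substituting this into $\sigma^2/(k(k+1)) = \sigma^2/6$ should supply the missing slack. Concretely, I need
\[
\frac{\sigma^2 + 2/\epsilon^2}{6} \ge c\,\epsilon\,(3^\alpha - 2^\alpha),
\]
using $c\epsilon^3 = 4/2^{2+\alpha}$, and this looks comfortably true. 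If the bound from $\epsilon_j = \epsilon$ is still too weak, the fallback is to plug in the entrant's actual optimizer $\epsilon_j = \big(4/((k+1)^{2+\alpha}c)\big)^{1/3}$, i.e.\ minimize $\frac{2}{(k+1)^2\epsilon_j^2} + c(k+1)^\alpha\epsilon_j$ exactly. This gives a strictly smaller entrant burden and adds slack.

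\textbf{Conclusion.} In all cases the entrant's optimal burden is at most $B_{\mathrm{in}} \le \sigma^2$. This contradicts the outsider condition of Nash-stability, so no intermediate-size coalition is Nash-stable for any $\alpha \in [-1,1]$.
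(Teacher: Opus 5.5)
Your proof is correct, and it takes a genuinely different route from the paper.

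\textbf{The paper's route.} The paper specializes the closed-form conditions of Claim~\ref{clm:eq_conditions_Nash} to identical costs. Write $r = (c^2/2)^{1/3}/\sigma^2$. The no-exit condition becomes $r \le g(k) = \frac{k-1}{(k+2)k^{(2\alpha+1)/3}}$. The no-entry condition, with the entrant's exact optimizer substituted, becomes $r > h(k) = \frac{k(k+1)}{3(k+1)^{(2\alpha+4)/3} + k^{(2\alpha+7)/3}}$. The paper then simply asserts $g(x) < h(x)$ for all $x \ge 2$ and $\alpha \in [-1,1]$, without further justification.

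\textbf{Your route.} You use a mimicking argument. Under a unilateral Nash deviation the insiders' levels are frozen. So an identical-cost outsider can copy their common $\epsilon$ and gain from the larger pool. The only thing left to control is the privacy-cost increase $c\epsilon((k+1)^\alpha - k^\alpha)$. This amounts to proving $g(k) \le h(k)$, but it actually proves it rather than asserting it.

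\textbf{Comparison.} Your route never needs the entrant's optimal level, and it makes the economic mechanism transparent. It also shows that the conclusion hinges on insiders' levels staying fixed. That is exactly what fails under robust stability, where intermediate equilibria can exist (Claim~\ref{clm:equal_cost_robust}). The paper's route stays within the closed-form-threshold template it uses throughout. However, as written it rests on an asserted inequality between two explicit two-parameter functions.

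\textbf{The $k=2$ case.} This is the only step you leave unfinished. It closes directly with the ingredient you chose, so your fallback is unnecessary. Since $c\epsilon^3 = 2^{-\alpha}$, you have $c\,2^\alpha\epsilon = 1/\epsilon^2$. The exit condition therefore reads $\sigma^2 \ge 4/\epsilon^2$, which gives $(\sigma^2 + 2/\epsilon^2)/6 \ge 1/\epsilon^2$. On the other side, $c\epsilon(3^\alpha - 2^\alpha) = ((3/2)^\alpha - 1)/\epsilon^2 \le 1/(2\epsilon^2)$ for $\alpha \le 1$. You should write these two lines out in place of ``looks comfortably true''.
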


By contrast, Claim \ref{clm:equal_cost_robust} shows that under the robust definition, the existence of a stable coalition of size smaller than the grand coalition is possible and depends on $\alpha$: 
\begin{claim}\label{clm:equal_cost_robust}
Consider a setting where all players have the identical cost parameter $c$. Then, at any given variance level $\sigma^2$, 
\begin{itemize}
    \item there exists no robust equilibrium of intermediate size $k$ ($2 \leq k < n$) for any $\alpha \in \left[-1, -\frac{1}{2}\right]$;
    \item there exists at most one robust equilibrium of intermediate size $k$ ($2 \leq k < n$) for $\alpha \in \left(-\frac{1}{2}, 1 \right]$.
\end{itemize}
\end{claim}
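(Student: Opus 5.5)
The plan is to exploit symmetry. With identical costs, every coalition of size $k$ is equivalent, so everything reduces to a one-dimensional function of $k$. We then read both bullets off the shape of that function.

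\emph{Step 1: the common burden.} Set $c_i = c$ in Lemma~\ref{lem:autonomous_eps}, so every member of a size-$k$ coalition picks $\epsilon^* = (4/(k^{2+\alpha}c))^{1/3}$. Substituting into the burden, a direct computation gives the common burden
\[
B(k) = \frac{\sigma^2}{k} + \frac{c^{2/3}}{2^{1/3}}\left(k^{(2\alpha+1)/3} + 2k^{(2\alpha-2)/3}\right).
\]
As a sanity check, $k\,B(k) + (n-k)\sigma^2$ reproduces~\eqref{eq:sc_d}.

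\emph{Step 2: reduce robust stability to a threshold crossing.} In Definition~\ref{def:valid_entry}, after an entry the coalition $S\cup\{j\}$ again has identical costs. Hence every $l \in S\cup\{j\}$ has burden $B(k+1)$ under $\vec\epsilon^*(S\cup\{j\})$. So a coalition of intermediate size $k$ ($2\le k<n$) is robust-stable iff
\[
B(k)\le\sigma^2 \quad\text{and}\quad B(k+1)>\sigma^2 .
\]
Rearranging $B(k)\le \sigma^2$ as $\sigma^2(1-1/k) \ge B(k)-\sigma^2/k$, this is equivalent to $\phi(k)\le\sigma^2<\phi(k+1)$, where
\[
\phi(x) = \frac{c^{2/3}}{2^{1/3}}\,\psi(x), \qquad \psi(x) = \frac{x^{a+1}+2x^{a}}{x-1}, \qquad a \triangleq \frac{2\alpha+1}{3}\in\left[-\tfrac13,1\right].
\]
So intermediate robust equilibria correspond exactly to "upcrossings" of the level $\sigma^2$ by the sequence $\phi(2),\phi(3),\dots,\phi(n)$.

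\emph{Step 3: shape of $\psi$ on $(1,\infty)$.} This is the main step. Differentiate $\psi$ and pull out the positive factors $x^{a-1}$ and $(x-1)^{-2}$. The sign of $\psi'(x)$ is then the sign of
\[
q(x) = a x^2 + (a-3)x - 2a .
\]
\begin{itemize}
\item If $a\le 0$, i.e.\ $\alpha\le -\tfrac12$, then $q(x)<0$ for all $x>0$ (when $a=0$, $q=-3x$). So $\phi$ is strictly decreasing on $(1,\infty)$. Then $\phi(k+1)<\phi(k)\le\sigma^2$, which contradicts $\sigma^2<\phi(k+1)$. Hence no intermediate robust equilibrium exists, giving the first bullet.
\item If $a>0$, i.e.\ $\alpha>-\tfrac12$, then $q$ is a convex quadratic with $q(0)=-2a<0$, so it has exactly one positive root $x_0$. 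Thus $\psi$ is decreasing on $(1,x_0]$ and increasing on $[x_0,\infty)$, i.e.\ quasiconvex. The integer sequence $\phi(k)$ inherits this: its consecutive differences $\phi(k+1)-\phi(k)$ change sign at most once, from negative to positive. At an upcrossing $\phi(k)\le\sigma^2<\phi(k+1)$ we have $\phi(k+1)>\phi(k)$, so $k$ lies in the increasing part. From then on the sequence stays strictly increasing and above $\sigma^2$, so no later $k'$ can satisfy $\phi(k')\le\sigma^2$. Before $k$, the sequence is either still decreasing (no upcrossing possible) or already increasing with values below $\phi(k)$, hence not exceeding $\sigma^2$ at $k$. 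Therefore at most one upcrossing exists, giving the second bullet.
\end{itemize}

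The only delicate points are the algebraic reduction of $\psi'$ to $q$ and the passage from a continuous unimodal function to the discrete "at most one upcrossing" statement. I will write the discrete argument out carefully, as in the second case above, rather than appeal to it loosely. The boundary case $k=n$ needs no treatment, since the claim concerns only $k<n$, where the candidate entrant $k+1\le n$ exists.
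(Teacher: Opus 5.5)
Your proof is correct and takes essentially the paper's route. The paper also reduces robust stability with identical costs to the interval condition $\frac{k+2}{k-1}k^{(2\alpha+1)/3}\le 2^{1/3}\sigma^2/c^{2/3}<\frac{k+3}{k}(k+1)^{(2\alpha+1)/3}$, which is your $\phi(k)\le\sigma^2<\phi(k+1)$, but it only asserts incompatibility for $\alpha\le-\frac12$ and non-overlap of the admissible intervals for $\alpha>-\frac12$; your sign analysis of $q$ actually proves both.

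One small correction: for $a<0$ it is false that $q(x)<0$ for all $x>0$, since $q(0)=-2a>0$. Writing $q(x)=a(x+2)(x-1)-3x$ shows $q<0$ on $(1,\infty)$, which is all you use.
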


\subsection{Optimal Stable Coalition}\label{subsec:optimal_eq}

Now that we have some understanding of coalition stability in the decentralized setting, our goal is to characterize properties of the optimal stable coalition. But given the complexities in both notions of stability explored in the previous section (including non-monotonicity, multiplicity, and non-existence), we highlight that finding the optimal coalition is likely to be difficult. So instead, we focus on the setting when the number of players is sufficiently large, complementing our results on the optimal coalition in the centralized setting (Theorem~\ref{thm:kstar_c}). We put particular emphasis on the existence of the grand coalition. 

\begin{thm}\label{thm:opt_eq}
When the number of players $n$ is sufficiently large, 
\begin{enumerate}[label=\roman*)]
    \item For $\alpha \in \left[-1, -\frac{1}{2}\right)$, the grand coalition is both Nash-stable and robust-stable and achieves $\textsf{Var}_d(\hat \mu~|~S_n) = \Theta(n^{(2\alpha+1)/3})$ and $\textsf{SC}_d(S_n) = \Theta(n^{(2\alpha+4)/3})$. Further, the estimator variance and the social cost at the optimal stable coalition in this regime are also $\Theta(n^{(2\alpha+1)/3})$ and $\Theta(n^{(2\alpha+4)/3})$ respectively.
    \item For $\alpha = -\frac{1}{2}$, the grand coalition may not be stable. In this regime, the estimator variance and social cost that can be achieved by the optimal stable coalition (if it exists) are $\Theta(1)$ and $\Theta(n)$ respectively.   
    \item For $\alpha \in \left(-\frac{1}{2}, 1\right]$, the grand coalition is never stable under any definition of stability. Further, in this regime, the estimator variance and social cost that can be achieved at the optimal stable coalition (if it exists) are $\Theta(1)$ and $\Theta(n)$ respectively.  
\end{enumerate}
\end{thm}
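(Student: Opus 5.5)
The plan is to work directly from the closed forms in Lemma~\ref{lem:autonomous_eps}, together with the leave condition that both stability notions share. Fix a coalition $S$ with $|S|=k$ and write $\bar c_S \triangleq \sum_{i\in S}c_i^{2/3}/k$. Since every $c_i\in[c_{min},c_{max}]$, we have $\bar c_S=\Theta(1)$.

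Plugging $\epsilon_j^*$ into the burden, the burden of player $j\in S$ is
\[
\frac{\sigma^2}{k}+\frac{k^{(2\alpha+1)/3}}{2^{1/3}}\,\bar c_S+4^{1/3}c_j^{2/3}k^{(2\alpha+1)/3}.
\]
The key scaling fact is that this burden is at least a positive constant times $k^{(2\alpha+1)/3}$. The leave condition requires it to be at most $\sigma^2=\Theta(1)$. Consequently, whenever $\alpha>-\tfrac12$, every stable coalition, under either notion, must satisfy $k\le K_0$ for a constant $K_0$ that does not depend on $n$.

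\textbf{Case (iii), $\alpha>-\tfrac12$.} For $n>K_0$ the grand coalition violates the leave condition, so it is never stable. Any stable $S$ therefore has $k=O(1)$. Equation~\eqref{eq:var_d} then gives $\textsf{Var}_d\ge\sigma^2/k=\Theta(1)$, and trivially $\textsf{Var}_d\le\sigma^2$ by the leave condition. Equation~\eqref{eq:sc_d} gives $\textsf{SC}_d(S)\ge(n-k)\sigma^2=\Theta(n)$, and $\textsf{SC}_d\le n\sigma^2$. The same bounds hold for the empty default.

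\textbf{Case (i), $\alpha<-\tfrac12$.} The exponent $(2\alpha+1)/3$ is negative, so for large $n$ every player's burden in $S_n$ is at most $\sigma^2$. This is exactly the second condition of Claim~\ref{clm:high_var}, so $S_n$ is Nash-stable, and by Claim~\ref{clm:Nash_stronger} it is also robust-stable; there are no outsiders to check. Reading off \eqref{eq:var_d} and \eqref{eq:sc_d}, the dominant terms are $k^{(2\alpha+1)/3}\bar c$ in the variance and $(k+2)k^{(2\alpha+1)/3}\bar c=\Theta(n^{(2\alpha+4)/3})$ in the social cost. The $\sigma^2$ terms cancel in the social cost at $k=n$, giving $\textsf{SC}_d(S_n)=\sigma^2+\Theta(n^{(2\alpha+4)/3})$. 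Note that $(2\alpha+4)/3\in[2/3,1)$, so this dominates the constant.

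For the optimality claim in case (i) I need a matching lower bound over all stable coalitions. For any $S$ of size $k$, the social cost is $\ge (n-k)\sigma^2+\Omega(k^{(2\alpha+4)/3})$. This expression is minimized over $k\in[2,n]$ at $k=n$ once $n$ is large, because its derivative in $k$ is $-\sigma^2+\Theta(k^{(2\alpha+1)/3})<0$ for large $k$, while small $k$ give cost $\Theta(n)$. Hence the optimum is $\Theta(n^{(2\alpha+4)/3})$.

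For the variance of the optimal stable coalition, I would argue that any stable coalition achieving social cost $o(n)$ must have $n-k=o(n)$, hence $k=\Theta(n)$ and variance $\Theta(n^{(2\alpha+1)/3})$.

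\textbf{Case (ii), $\alpha=-\tfrac12$.} Here the exponent is $0$ and the burden is $\sigma^2/k+\Theta(1)$. Whether this is below $\sigma^2$ depends on constants, which is why the grand coalition may fail to be stable. For any stable $S$ I would still get variance $\ge \bar c_S/2^{1/3}=\Theta(1)$ and at most $\sigma^2$. For the social cost, \eqref{eq:sc_d} contains $\Theta(k)$ privacy-induced terms plus $(n-k)\sigma^2$, so it is at least $\Omega(\min(1,\cdot))\cdot n$, i.e.\ $\Theta(n)$, and at most $n\sigma^2$.

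\textbf{Expected main obstacle.} The upper bound $\textsf{Var}_d\le\sigma^2$ and the corresponding social cost upper bound follow from the leave condition, and the "if it exists" convention handles non-existence. The delicate step is the lower bound in case (i): the optimal stable coalition is an arbitrary stable set, possibly not downward-closed and not of size $n$. I plan to handle this purely through the size-based bound above, which uses only $c_i\ge c_{min}$. If needed, I would appeal to the identical-cost instance via Claim~\ref{clm:equal_cost_Nash} and Claim~\ref{clm:equal_cost_robust} for tightness.
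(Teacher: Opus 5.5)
Your proof is correct, and for the lower bounds it takes a genuinely different route from the paper.

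The paper gets the upper bounds the same way you do: the grand coalition in case (i), and the trivial $\mathcal{O}(n)$ and $\mathcal{O}(1)$ bounds otherwise. It establishes the matching lower bounds only as worst-case tightness. It builds identical-cost instances in which, by Claims~\ref{clm:equal_cost_Nash} and~\ref{clm:equal_cost_robust}, the stable coalition is unique: the grand coalition for $\alpha\le-\frac12$, and a single robust-stable coalition of size $k'(\alpha)=\Theta(1)$ for $\alpha>-\frac12$. It then transfers the robust bound to Nash stability via Claim~\ref{clm:Nash_stronger}.

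You instead prove lower bounds valid on every instance, directly from the closed forms and $c_i\ge c_{min}>0$:
\begin{itemize}
\item For $\alpha>-\frac12$, the size cap $k\le K_0$ follows from the shared leave condition.
\item In case (i), you use $\textsf{SC}_d(S)\ge (n-k)\sigma^2+2^{-1/3}c_{min}^{2/3}k^{(2\alpha+4)/3}$. Its right-hand side is concave in $k$ because $(2\alpha+4)/3<1$. So its minimum over $[2,n]$ is at an endpoint, and for large $n$ it is the $k=n$ value; this makes your derivative argument rigorous.
\item At $\alpha=-\frac12$, you use $\textsf{Var}_d\ge 2^{-1/3}c_{min}^{2/3}$ and $\textsf{SC}_d\ge\min(\sigma^2,2^{-1/3}c_{min}^{2/3})\,n$.
\end{itemize}

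This buys three things. First, a strictly stronger conclusion: the stated orders hold for the optimal stable coalition of every instance with costs in $[c_{min},c_{max}]$, not just in the worst case. Second, it handles Nash and robust stability at once, without Claim~\ref{clm:Nash_stronger}. Third, it sidesteps the paper's Nash step in case (iii), whose identical-cost instance has no Nash-stable coalition at all for large $n$. In exchange, the paper's construction exhibits instances where stable coalitions exist and are unique, so the ``if it exists'' qualifiers are not vacuous, at least for robust stability.

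One computational slip: the privacy term in the burden is $c_jk^{\alpha}\epsilon_j^*=4^{1/3}c_j^{2/3}k^{(2\alpha-2)/3}$. That is your expression divided by $k$; compare Claim~\ref{clm:eq_conditions_Nash}, which multiplies the burden by $k$. The slip is harmless. Every lower bound you use comes from the variance term $2^{-1/3}\bar c_S\,k^{(2\alpha+1)/3}$ alone, and every upper bound comes from the leave condition.
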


\begin{proofsketch}
We start with the case when $\alpha < -\frac{1}{2}$. To prove our upper bound, we remark by Claim~\ref{clm:high_var} that the grand coalition is both Nash-stable and robust stable. At the grand coalition, we have $\textsf{Var}_d(\hat \mu~|~S_n) = \Theta(n^{(2\alpha+1)/3})$ and $\textsf{SC}_d(S_n) = \Theta(n^{(2\alpha+4)/3})$. 
To prove our lower bounds, we study the special case of instances where all players have identical costs. Using Claims~\ref{clm:equal_cost_Nash} and \ref{clm:equal_cost_robust}, we argue that in this special problem instance, the grand coalition is the unique stable coalition for $\alpha < -\frac{1}{2}$ which implies optimality with variance $\sim \Theta(n^{(2\alpha+1)/3})$ and social cost $\sim \Theta(n^{(2\alpha+4)/3})$. 

The next two regimes are then similar in flavor to the first regime. We start by verifying 
whether the grand coalition for $\alpha \in \left[-\frac{1}{2}, 1\right]$ is stable. We then note that the estimator variance and social cost that can be achieved at the optimal stable coalition (if it exists) are $\mathcal{O}(1)$ and $\mathcal{O}(n)$ respectively, which allows us to obtain our upper bounds. We again match these bounds by constructing specific problem instances using the identical cost setting to complete the proof. 
\end{proofsketch}
Importantly, $\alpha \in \left[-1, -\frac{1}{2}\right)$ is the only regime where decentralization achieves non-trivial improvement in estimator variance, as well as a social cost that grows sub-linearly in $n$. This is in sharp contrast to the centralized setting where non-trivial improvements in variance and social cost can be achieved for a much wider range of $\alpha$ ($\alpha \in \left[-1, \frac{1}{2}\right)$). This shows that decentralization introduces inefficiencies into the mechanism of forming coalitions even under complete information.

\section{The Middle Ground: What Happens under Partial Decentralization?}\label{sec:partial}

Our last mechanism of interest is the intermediate mechanism which has partial decentralization. 
Before moving on to the analysis, we reiterate how the partially decentralized setting differs from the baseline setting of full centralization. While in both mechanisms the centralized designer picks out privacy levels (in the baseline setting, the designer has even more flexibility---they can choose personalized $\epsilon$'s for each participant), the main difference is with respect to \textit{stability}. \textit{Under full centralization, there is no requirement for stability}. We can think of the designer as a benevolent dictator who has full control over the system and can form coalitions whichever way they like. On the other hand, the partially decentralized mechanism obviously has a stability requirement---since players now make participation decisions themselves, they will break out of the coalition if doing so makes them better off. Note that stability does not necessarily align with minimizing the social cost, hence the two settings can genuinely be different: a player may enter or leave the coalition if it is good for them, but their entry or exit decision affects the accuracy (variance) of the estimator obtained by all other players in the coalition, which can, in turn, increase the total social cost. 

\begin{example}
We can illustrate with an example: Suppose, there are $5$ players with costs $\vec c= [0.1, 0.1, 0.1, 0.1, 0.4]$. Additionally, the outside variance is $\sigma^2 = 1$ and we are in the regime where $\alpha = 0$. The centralized designer chooses $\epsilon \approx 1.71$ which is the best fixed $\epsilon$ to minimize social cost for the grand coalition if there are no stability requirements ($\epsilon = \left(\frac{4}{5\times \bar c_5}\right)^{1/3} = 5^{1/3} \approx 1.71$). In this case, the grand coalition $S_5$ also achieves the best social cost (value $\approx 3.05$) compared to all other coalitions (next best $3.36$). However, $S_5$ is not stable at this $\epsilon$. Because player $5$ has a burden of participation given by: $\frac{1}{5}\left(1  + \frac{2}{1.71^2}\right) + 0.4\times 1.71 = 1.02$ which is worse than the outside option. This shows that the designer choosing the $\epsilon$ optimally for social cost does not lead to stability. 
\end{example}

From a technical point of view, our approach to the partially decentralized setting is relatively different, and generally more complex, compared to that of the fully decentralized setting. Under full decentralization, the $\epsilon$ levels were entirely determined by the size of the coalition. Here, the picture is flipped: the coalition is determined by the chosen $\epsilon$. Our approach must therefore characterize, given any coalition $S$, the set $\mathcal{R}_{\epsilon}(S)$ of feasible $\epsilon$'s for which this coalition is stable. We then optimize over $\mathcal{R}_{\epsilon}(S)$ to find the best $\epsilon^*(S)$ (that minimizes social cost) for said $S$.

\begin{lemma}\label{lem:partial_general}
Given $(\vec c, \sigma^2)$ and an integer $k \in \{2,3,...n\}$, under partial decentralization, 
\begin{itemize}
    \item the set of feasible $\epsilon$'s at which a coalition $S$ with $|S| = k$ is Nash-stable is given by: 
    \[
         \mathcal{R}_{\epsilon}(S) = \left\{ \epsilon > 0:  ~\frac{2}{k\epsilon^2} + \frac{(k+1)^{1+\alpha}}{k}\cdot \epsilon \min_{i \notin S}c_i > \sigma^2 \geq \frac{2}{(k-1)\epsilon^2} + \frac{k^{1+\alpha}}{k-1}\cdot \epsilon \max_{i \in S}c_i \right\} \quad (\text{when }k < n);
    \]
    \[
          \mathcal{R}_{\epsilon}(S) = \left\{ \epsilon > 0:  ~\sigma^2 \geq \frac{2}{(k-1)\epsilon^2} + \frac{k^{1+\alpha}}{k-1}\cdot \epsilon \max_{i \in S}c_i \right\} \quad (\text{when }k = n);
    \]
    \item if $S$ is Nash-stable at $\epsilon$, then $S_k$ is also Nash-stable at the same $\epsilon$, i.e., $\mathcal{R}_{\epsilon}(S) \subseteq \mathcal{R}_{\epsilon}(S_k)$; 
    \item the best social cost achievable by a Nash-stable coalition of size $k$ (if it exists) is given by: 
    \[
        \textsf{SC}_{f}(k, \epsilon_{(k)}^*) = (n+1)\sigma^2 - \left[k \sigma^2 - \frac{2}{\epsilon_{(k)}^{*2} } - k^{1+\alpha}\cdot \epsilon_{(k)}^* \cdot \left(\sum_{i \in S_k}c_i/k \right) \right],
    \]
    where $\epsilon_{(k)}^* = \arg \min_{\epsilon \in \mathcal{R}_{\epsilon}(S_k)} \textsf{SC}_f(k, \epsilon)$. 
    \item Finally, the estimator variance achieved by the same coalition would be given by: 
    \[
         \textsf{Var}_f(\hat \mu ~|~k, \epsilon_{(k)}^*) = \frac{1}{k}\left(\sigma^2 + \frac{2}{\epsilon_{(k)}^{*2}} \right).
    \]
\end{itemize}
\end{lemma}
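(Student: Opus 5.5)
The proof unpacks the Nash-stability definition at a fixed $\epsilon$ and rearranges each condition into the stated form. For $j\in S$ with $|S|=k$, the ``no player wants to leave'' condition reads $\frac{1}{k}(\sigma^2+\frac{2}{\epsilon^2})+c_j k^{\alpha}\epsilon\le\sigma^2$. Multiplying by $k$ and moving $\sigma^2$ to one side gives $(k-1)\sigma^2\ge \frac{2}{\epsilon^2}+k^{1+\alpha}c_j\epsilon$. Dividing by $k-1\ge 1$ yields
\[
\sigma^2\ge \frac{2}{(k-1)\epsilon^2}+\frac{k^{1+\alpha}}{k-1}\,\epsilon\, c_j .
\]
The right-hand side is increasing in $c_j$, so requiring this for all $j\in S$ is equivalent to requiring it for $\max_{i\in S}c_i$.

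The ``no outside player wants to join'' condition for $j\notin S$ is handled the same way. Multiplying $\frac{1}{k+1}(\sigma^2+\frac{2}{\epsilon^2})+c_j(k+1)^{\alpha}\epsilon>\sigma^2$ by $k+1$ gives $k\sigma^2<\frac{2}{\epsilon^2}+(k+1)^{1+\alpha}c_j\epsilon$. Dividing by $k$, and using that the right-hand side is increasing in $c_j$, the condition for all $j\notin S$ reduces to the one at $\min_{i\notin S}c_i$. When $k=n$ there are no outsiders, so only the first inequality remains. This gives both descriptions of $\mathcal{R}_\epsilon(S)$.

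For the inclusion $\mathcal{R}_\epsilon(S)\subseteq\mathcal{R}_\epsilon(S_k)$, recall that the costs are sorted and $S_k=\{1,\dots,k\}$. Any $S$ with $|S|=k$ therefore satisfies $\max_{i\in S}c_i\ge c_k=\max_{i\in S_k}c_i$, since $S$ contains $k$ distinct indices and hence at least one index $\ge k$. Likewise $\min_{i\notin S}c_i\le c_{k+1}=\min_{i\notin S_k}c_i$, since the complement of $S$ contains some index $\le k+1$. The upper constraint (a $\ge$ with a $\max$ on the right) is monotone in the favorable direction, and so is the lower constraint (a $>$ with a $\min$ on the left). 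Hence any $\epsilon$ feasible for $S$ is feasible for $S_k$.

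For the social cost, substitute $\epsilon_i\equiv\epsilon$ into the definition of $\textsf{SC}$:
\[
\textsf{SC}(S,\epsilon)=\sigma^2+\frac{2}{\epsilon^2}+k^{\alpha}\epsilon\sum_{i\in S}c_i+(n-k)\sigma^2 .
\]
Indeed, the accuracy term summed over $k$ players is $k\cdot\frac1k(\sigma^2+\frac{2}{\epsilon^2})$. Rewriting gives the stated form $(n+1)\sigma^2-[k\sigma^2-\frac{2}{\epsilon^2}-k^{1+\alpha}\epsilon\,\bar c]$, where $\bar c$ is the average cost over $S$.

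Among size-$k$ coalitions at a fixed $\epsilon$, this cost is minimized by minimizing $\sum_{i\in S}c_i$, which is achieved by $S_k$. It remains to combine this with feasibility. For any stable pair $(S,\epsilon)$ with $|S|=k$, we have $\epsilon\in\mathcal{R}_\epsilon(S_k)$ by the inclusion, and $\textsf{SC}(S_k,\epsilon)\le\textsf{SC}(S,\epsilon)$. The best achievable cost is therefore $\inf_{\epsilon\in\mathcal{R}_\epsilon(S_k)}\textsf{SC}_f(k,\epsilon)$, attained at $\epsilon^*_{(k)}$ when the minimum exists.

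The variance expression follows directly from \eqref{eq:variance} with all $\epsilon_i=\epsilon^*_{(k)}$.

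The main point requiring care is this combination of feasibility and cost minimization: the minimization must be over pairs $(S,\epsilon)$ jointly, and the inclusion is what lets us restrict to $S_k$ without losing any feasible $\epsilon$. A secondary issue is attainment. $\mathcal{R}_\epsilon(S_k)$ may be non-closed, because of the strict inequality, or empty. I would handle this by interpreting $\epsilon^*_{(k)}$ as a minimizer when one exists, consistent with the phrase ``if it exists'' in the statement.
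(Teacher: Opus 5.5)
Your proof is correct and follows the same route as the paper's. Both derive the two Nash-stability inequalities at a fixed $\epsilon$, then use $\max_{i\in S}c_i\ge c_k$ and $\min_{i\notin S}c_i\le c_{k+1}$ to get the inclusion, and finally observe that $S_k$ both minimizes $\sum_{i\in S}c_i$ at each $\epsilon$ and has the largest feasible set. Your further remark is a valid refinement that the paper's $\arg\min$ notation glosses over: because of the strict outsider inequality, the infimum over $\mathcal{R}_\epsilon(S_k)$ need not be attained.
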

While the above lemma provides general insights for any given coalition size $k$, we are also interested in investigating some special cases of $k$ --- in particular, $k = n$ (the grand coalition) and $k = \Theta(1)$ (coalitions of small size in comparison to $n$). Our main objects of interest are: i) when is the set $\mathcal{R}_{\epsilon}(S_k)$ non-empty? ii) if $\mathcal{R}_{\epsilon}(S_k) \neq \emptyset$, how does the optimal $\epsilon^*$ depend on the size of the coalition $k$? The following result answers these questions: 

\begin{lemma}\label{lem:partial_grand}
When the number of players $n$ is sufficiently large and under partial decentralization,
\begin{itemize}
    \item For $\alpha < \frac{1}{2}$, there exists $\epsilon$'s for which the grand coalition $S_n$ is Nash-stable. In particular, $\epsilon_{(n)}^* = \left(\frac{4}{n^{1+\alpha} \bar c_n} \right)^{1/3}$, where $\bar c_n = \sum_{i=1}^n c_i/n$. 
    \item For $\alpha > \frac{1}{2}$, any Nash-stable coalition that exists has size $\Theta(1)$ and exists at $\epsilon$ of order $\Theta(1)$. 
\end{itemize}
\end{lemma}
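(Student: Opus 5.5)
The plan is to work directly with the stability region $\mathcal{R}_{\epsilon}(S_k)$ characterized in Lemma~\ref{lem:partial_general}. Two facts drive everything. First, the condition that no player wants to leave, $\sigma^2 \geq \frac{2}{(k-1)\epsilon^2} + \frac{k^{1+\alpha}}{k-1}\epsilon \max_{i\in S}c_i$, is a convex function of $\epsilon$ of the form $a/\epsilon^2 + b\epsilon$. Second, the bounds $c_{min} \le c_i \le c_{max}$ make every cost average $\Theta(1)$. For the grand coalition, only this exit condition matters, since there is no outside player.

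\textbf{Case $\alpha<\frac12$.} I would first find the unconstrained minimizer of $\textsf{SC}_f(n,\epsilon)$. Up to constants, this is minimizing $2/\epsilon^2 + n^{1+\alpha}\bar c_n\,\epsilon$. Setting the derivative $-4/\epsilon^3 + n^{1+\alpha}\bar c_n$ to zero gives exactly $\epsilon_{(n)}^* = \left(\frac{4}{n^{1+\alpha}\bar c_n}\right)^{1/3}$.

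It then remains to show this point lies in $\mathcal{R}_{\epsilon}(S_n)$ for large $n$. Plugging it into the exit condition, the first term is
\[
\frac{2}{(n-1)\epsilon^2} = \Theta\big(n^{2(1+\alpha)/3}/n\big) = \Theta\big(n^{(2\alpha-1)/3}\big),
\]
using $\bar c_n \in [c_{min},c_{max}]$. The second term is
\[
\frac{n^{1+\alpha}}{n-1}\,\epsilon\, c_n = \Theta\big(n^{\alpha}\cdot n^{-(1+\alpha)/3}\big) = \Theta\big(n^{(2\alpha-1)/3}\big),
\]
using $c_n \le c_{max}$. Both terms tend to $0$ when $\alpha<\frac12$, so for $n$ large the right-hand side drops below $\sigma^2$. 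Hence $\mathcal{R}_{\epsilon}(S_n)\neq\emptyset$, and since the unconstrained minimizer is feasible, it is also the constrained optimum $\epsilon_{(n)}^*$.

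\textbf{Case $\alpha>\frac12$.} Take any Nash-stable $S$ of size $k$ at some $\epsilon$. Its exit condition, with $\max_{i\in S}c_i \ge c_{min}$, gives
\[
\sigma^2 \ge \frac{2}{(k-1)\epsilon^2} + \frac{k^{1+\alpha}c_{min}}{k-1}\,\epsilon \ge \min_{\epsilon>0}\Big(\frac{a}{\epsilon^2}+b\epsilon\Big),
\]
with $a = 2/(k-1)$ and $b = k^{1+\alpha}c_{min}/(k-1)$. This minimum equals $\frac{3}{2^{2/3}}a^{1/3}b^{2/3} = \Theta\big(k^{-1/3}\cdot k^{2(1+\alpha)/3}k^{-2/3}\big) = \Theta\big(k^{(2\alpha-1)/3}\big)$. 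Because the exponent is positive, this forces $k \le K(\sigma^2,c_{min},\alpha)$, a constant independent of $n$. So $k=\Theta(1)$, using $k\ge 2$ for the lower bound.

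For the $\epsilon$ bounds, each of the two nonnegative terms must separately be at most $\sigma^2$.
\begin{itemize}
\item From $2/((k-1)\epsilon^2)\le\sigma^2$ we get $\epsilon \ge \sqrt{2/((K-1)\sigma^2)} = \Omega(1)$.
\item From $k^{1+\alpha}c_{min}\epsilon/(k-1)\le\sigma^2$ and $k^{1+\alpha}/(k-1)\ge 1$ we get $\epsilon \le \sigma^2/c_{min} = O(1)$.
\end{itemize}
Therefore $\epsilon=\Theta(1)$.

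\textbf{Main obstacle.} The delicate step is the first case: showing that the exit constraint is slack at $\epsilon_{(n)}^*$ even for the highest-cost player $c_n$, whose cost can differ from the average $\bar c_n$ that sets $\epsilon^*_{(n)}$. This needs the bounded-support assumption to keep the ratio $c_n/\bar c_n \le c_{max}/c_{min}$ constant, and it needs "$n$ sufficiently large" to be quantified in terms of $\sigma^2$, $c_{min}$ and $c_{max}$. I would make this explicit by writing the threshold as $n^{(1-2\alpha)/3} \ge C(c_{max},c_{min})/\sigma^2$.
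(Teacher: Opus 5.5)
Your proposal is correct and follows essentially the same route as the paper. For $\alpha<\frac12$ you show that the unconstrained social-cost minimizer $\epsilon'=\left(\frac{4}{n^{1+\alpha}\bar c_n}\right)^{1/3}$ satisfies the grand coalition's exit condition for large $n$. The paper does this by exhibiting a sandwich $C_3 n^{-1/2}<\epsilon'<C_1 n^{-\alpha}$ on which each term is below $\sigma^2/2$, while you show directly that both terms are $\Theta(n^{(2\alpha-1)/3})\to 0$. For $\alpha>\frac12$ you use only the exit condition with $c_{min}$ to force $k=O(1)$ and $\epsilon=\Theta(1)$, as the paper does; your exact minimization of $a/\epsilon^2+b\epsilon$ is a slightly cleaner form of the paper's two-sided bound $C_3k^{-1/2}<\epsilon<C_4k^{-\alpha}$, and it handles the grand coalition and smaller coalitions in one step instead of treating the grand coalition separately.
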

While the above result does not argue about the optimal stable coalitions in different $\alpha$ regimes, it provides key building blocks for Theorem~\ref{thm:partial_cent}. For example, the first part shows that under partial decentralization, for $\alpha < \frac{1}{2}$, we can definitely find a Nash-stable coalition (the grand coalition) which is i) of the same size as the optimal coalition in the fully centralized setting; and ii) exists at a value of $\epsilon$ which exactly matches the order of the privacy levels chosen by the designer under full centralization (recall that, at the optimal coalition under full centralization, the designer chooses personalized privacy levels for each participant, but all these privacy levels have order $\Theta(n^{-(1+\alpha)/3})$). This provides clear intuition that there cannot possibly exist a large efficiency gap between the fully centralized and partially decentralized setting at least in this regime. Formally,  

\begin{thm}\label{thm:partial_cent}
When the number of players $n$ is sufficiently large, 
\begin{enumerate}
    \item for $\alpha \in \left[-1, \frac{1}{2}\right)$, the optimal Nash-stable coalition in the partially decentralized setting achieves variance and social cost of the orders $\Theta(n^{(2\alpha-1)/3})$ and $\Theta(n^{2(\alpha+1)/3})$ respectively; 
    \item for $\alpha \in \left[\frac{1}{2}, 1\right]$, the optimal Nash-stable coalition in the partially decentralized setting (if it exists) achieves variance and social cost of the orders $\Theta(1)$ and $\Theta(n)$ respectively. 
\end{enumerate}
\end{thm}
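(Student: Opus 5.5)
For the regime $\alpha \in [-1,\frac12)$ I will sandwich the partially decentralized optimum between the centralized optimum and the grand coalition. The lower bound on social cost comes for free. The centralized designer can pick any coalition and any (even personalized) $\epsilon$'s, so $\textsf{SC}_f \ge \textsf{SC}_c(k^*) = \Theta(n^{2(\alpha+1)/3})$ by Theorem~\ref{thm:kstar_c}.

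For the upper bound I invoke Lemma~\ref{lem:partial_grand}. It gives that $S_n$ is Nash-stable for large $n$, and that the optimum over $\mathcal{R}_\epsilon(S_n)$ is $\epsilon^*_{(n)} = (4/(n^{1+\alpha}\bar c_n))^{1/3} = \Theta(n^{-(1+\alpha)/3})$, since $\bar c_n \in [c_{min},c_{max}]$. Substituting into Lemma~\ref{lem:partial_general}, the variance is
\[
\frac1n\left(\sigma^2 + 2/\epsilon^{*2}_{(n)}\right) = \Theta\!\left(n^{-1}\, n^{2(1+\alpha)/3}\right) = \Theta\!\left(n^{(2\alpha-1)/3}\right),
\]
which uses $2(1+\alpha)/3 > 0$ so the noise term dominates $\sigma^2$. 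The social cost is
\[
\sigma^2 + 2/\epsilon^{*2} + n^{1+\alpha}\epsilon^*\bar c_n = \Theta\!\left(n^{2(\alpha+1)/3}\right).
\]
This matches the lower bound, so the optimal stable social cost is $\Theta(n^{2(\alpha+1)/3})$.

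The variance statement needs separate care, because the optimal stable coalition (in social cost) might not be $S_n$. The argument is as follows.
\begin{itemize}
\item Any coalition of size $k$ with common $\epsilon$ has social cost $(n-k)\sigma^2 + \Theta(\cdot)$. So the optimum must have $n-k = O(n^{2(\alpha+1)/3}) = o(n)$, giving $k = \Theta(n)$.
\item Its noise term satisfies $2/\epsilon^2 \le \textsf{SC}_f = O(n^{2(\alpha+1)/3})$. Hence the variance is $O(n^{(2\alpha-1)/3})$.
\item For the matching lower bound I use AM-GM on $2/\epsilon^2 + k^{1+\alpha}\epsilon\,\bar c$. This gives $2/\epsilon^2 \ge \Omega(k^{2(1+\alpha)/3})$ unless the privacy term is itself $\omega(n^{2(\alpha+1)/3})$.
\item If the privacy term were that large, the social cost would contradict optimality. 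So $\epsilon = O(n^{-(1+\alpha)/3})$, and the variance is $\Omega(n^{(2\alpha-1)/3})$.
\end{itemize}
This step, pinning the variance of the cost-optimal stable coalition rather than just of $S_n$, is the one I expect to be the main obstacle. The key inequality is that $\epsilon$ cannot be much larger than $\Theta(n^{-(1+\alpha)/3})$ without the privacy term blowing up the social cost.

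For $\alpha \in [\frac12,1]$ the upper bounds are trivial. Any stable coalition has every member's burden at most $\sigma^2$, so the variance is at most $\sigma^2$ and the social cost is at most $n\sigma^2$. With the empty-coalition convention, these give $O(1)$ and $O(n)$.

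For the lower bounds I again use the centralized benchmark, Theorem~\ref{thm:kstar_c}(ii)--(iii): $\textsf{SC}_f \ge \textsf{SC}_c(k^*) = \Theta(n)$. For variance, any stable coalition has $\textsf{Var} \ge \frac{1}{k}(\sigma^2 + 2/\epsilon^2)$, and I apply this separately on each range of $\alpha$.
\begin{itemize}
\item For $\alpha > \frac12$, Lemma~\ref{lem:partial_grand} says stable coalitions have $k = \Theta(1)$ and $\epsilon = \Theta(1)$, so the variance is $\Theta(1)$.
\item For $\alpha = \frac12$, I use the stay condition of the largest-cost member: $\frac1k(\sigma^2 + 2/\epsilon^2) + c\,k^{3/2}\epsilon \le \sigma^2$.
\item Minimizing over $\epsilon$, the left side is $\Omega(k^{-1}\cdot k^{(3/2)\cdot(2/3)}) = \Omega(1)$. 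Combining this with variance $\ge 2/(k\epsilon^2)$, I show that either $2/(k\epsilon^2)$ or $\sigma^2/k$ is bounded below by a constant fraction of the minimized burden. This gives $\Omega(1)$, with constants depending on $c_{min}, c_{max}, \sigma^2$.
\end{itemize}
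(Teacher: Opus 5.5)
Your proposal is correct and is essentially the paper's argument. For $\alpha<\frac12$ you sandwich the partially decentralized optimum between $\textsf{SC}_c(k^*)$ and the grand coalition at $\epsilon^*_{(n)}$, then force $k=n-o(n)$ and $\epsilon=\Theta(n^{-(1+\alpha)/3})$ to pin the variance; for $\alpha\ge\frac12$ your stay-condition step is the paper's bound $\epsilon<C_4k^{-\alpha}$ in disguise, and the centralized benchmark plus the trivial stability bound give the social-cost orders.

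Two small repairs. At $\alpha=\frac12$ a member's privacy term is $c_i k^{1/2}\epsilon$, not $c\,k^{3/2}\epsilon$. Also, the minimized burden by itself does not tell you which term is large, so finish directly: $c_{min}k^{1/2}\epsilon<\sigma^2$ gives $\frac{2}{k\epsilon^2}>\frac{2c_{min}^2}{\sigma^4}$. The same applies to your AM--GM bullet for $\alpha<\frac12$: AM--GM only says one of the two terms is large, and the step that actually works is your next one, bounding $\epsilon$ via the privacy term.
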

The theorem shows something even stronger. When $n$ is sufficiently large, there exists no gap between the fully centralized and partially decentralized data sharing mechanisms in terms of efficiency (both in social cost and estimator variance) beyond constant factors \textbf{across all privacy cost regimes}. I.e., the orders of social cost and estimator variance are exactly matched (for comparison, see Theorem~\ref{thm:kstar_c}). We will formalize this later through the Price of Stability in Section~\ref{sec:pos}.

\section{A Comparative Analysis across Different Centralization Regimes}\label{sec:pos}

We conclude by building on the results of Sections~\ref{sec:baseline},~\ref{sec:autonomous} and \ref{sec:partial} to explicitly compare how mechanisms with different degrees of decentralization fare in terms of efficiency of outcomes against the baseline mechanism where the central designer forms coalitions in a socially optimal way.

\subsection{Price of Stability for the Fully Decentralized Mechanism}

\paragraph{Social Cost Comparison.} Our first result in this section computes the price of stability of the fully decentralized mechanism with respect to \emph{social cost} across different $\alpha$ regimes. Note that these results apply to both the Nash and robust stability definitions.

\begin{thm}\label{thm:pos}
The Price of Stability for the fully decentralized mechanism with respect to social cost satisfies:
\begin{itemize}
    \item $\textsf{PoS}(\textsf{SC}) = \Theta\left(n^{2/3} \right)$ when $\alpha \in \left[-1, -\frac{1}{2}\right]$;
    \item $\textsf{PoS}(\textsf{SC}) = \Theta\left(n^{(1-2\alpha)/3} \right)$ when $\alpha \in \left(-\frac{1}{2}, \frac{1}{2}\right]$; and finally, 
    \item $\textsf{PoS}(\textsf{SC}) = \Theta(1)$ when $\alpha \in \left(\frac{1}{2}, 1\right]$. In particular, for this regime of $\alpha$, we have: 
     \[
        1 \leq \textsf{PoS}(\textsf{SC}) < \max \left[\frac{4}{3}, \left(\frac{2^{1/3}\sigma^2}{3c_{min}^{2/3}} \right) \right]. 
     \]
\end{itemize}
\end{thm}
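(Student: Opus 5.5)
The plan is to take ratios of the orders already established: Theorem~\ref{thm:opt_eq} gives the social cost of the optimal stable coalition under full decentralization, and Theorem~\ref{thm:kstar_c} gives the centralized optimum. We split into the three $\alpha$ regimes. One technical point is that $\Theta$ for a price of stability should be read as matching upper and lower bounds. The upper bound must hold for every instance once $n$ is large. The lower bound needs a family of instances attaining it. For the lower bound we will use the identical-cost instances from Claims~\ref{clm:equal_cost_Nash} and \ref{clm:equal_cost_robust}, exactly as in the proof of Theorem~\ref{thm:opt_eq}. On these instances the set of stable coalitions is fully known, so the decentralized optimum is pinned down. Everything applies to both Nash and robust stability: robust-stable coalitions contain the Nash-stable ones (Claim~\ref{clm:Nash_stronger}), and the bounds in Theorem~\ref{thm:opt_eq} are stated for both notions.

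\textbf{Regime $\alpha\in[-1,-\tfrac12)$.} Decentralized social cost is $\Theta(n^{(2\alpha+4)/3})$ by Theorem~\ref{thm:opt_eq}(i), and centralized social cost is $\Theta(n^{2(\alpha+1)/3})$ by Theorem~\ref{thm:kstar_c}(i). The ratio is $\Theta(n^{2/3})$.

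\textbf{Regime $\alpha\in(-\tfrac12,\tfrac12)$, together with the endpoint $\alpha=-\tfrac12$.} Here the decentralized side is $\Theta(n)$. The upper bound is trivial, since any stable coalition, or the empty default, has $\textsf{SC}\le n\sigma^2$. The matching instances come from Theorem~\ref{thm:opt_eq}(ii)--(iii). The centralized side is $\Theta(n^{2(\alpha+1)/3})$. This gives $\Theta(n^{(1-2\alpha)/3})$. At $\alpha=-\tfrac12$ this equals $n^{2/3}$, so the first bullet extends to the closed endpoint.

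\textbf{Endpoint $\alpha=\tfrac12$.} Theorem~\ref{thm:kstar_c}(ii) gives a centralized cost of $\Theta(n)$, and the formula gives $n^0=\Theta(1)$, which is consistent.

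\textbf{Regime $\alpha\in(\tfrac12,1]$.} Both costs are $\Theta(n)$, so only the explicit constant needs work. The lower bound $1$ is immediate. For the upper bound, first consider the case where no centralized coalition helps ($k^*=0$). Then the centralized cost is $n\sigma^2$, and the decentralized cost is at most $n\sigma^2$, so the ratio is $1$. Otherwise $k^*=\Theta(1)$ and $\textsf{SC}_c(k^*)=(n+1)\sigma^2-g(k^*)$, where $g(k^*)>\sigma^2$. Because the decentralized cost is at most $n\sigma^2$, we get
\[
\textsf{PoS}\le \frac{n\sigma^2}{(n+1)\sigma^2-g(k^*)}.
\]
We then need to bound $g(k^*)$ from above, using $c_i\ge c_{\min}$. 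Since $2(\alpha+1)/3>1$, the function $k\sigma^2-\tfrac{3}{2^{1/3}}c_{\min}^{2/3}k^{2(\alpha+1)/3}$ is maximized at a size bounded independently of $n$, which bounds $g(k^*)$ by a constant. That alone only gives a ratio tending to $1$ as $n\to\infty$. The stated bound $\max[\tfrac43,\,2^{1/3}\sigma^2/(3c_{\min}^{2/3})]$ suggests a different, finite-$n$ comparison that holds for every $n$. The natural guess is to compare against the grand or small coalitions: the best coalition of size $k$ saves at most $k\sigma^2$, while it costs at least $\tfrac{3}{2^{1/3}}c_{\min}^{2/3}k^{2(\alpha+1)/3}\ge \tfrac{3}{2^{1/3}}c_{\min}^{2/3}k$. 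This yields
\[
\textsf{SC}_c\ge n\sigma^2\min\Big\{1,\ \tfrac{3c_{\min}^{2/3}}{2^{1/3}\sigma^2}\Big\}\quad\text{(roughly)},
\]
and the $\tfrac43$ term handles small $n$, such as $k=n=2$.

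The main obstacle is this explicit-constant bound in the last regime, specifically turning the crude inequalities into exactly the stated maximum. I would first try the linear lower bound $k^{2(\alpha+1)/3}\ge k$ combined with a case split on whether $\sigma^2\le 3c_{\min}^{2/3}/2^{1/3}$.
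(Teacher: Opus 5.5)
Your treatment of the regimes $\alpha\le\frac12$, including both endpoints, matches the paper: you take the ratio of the orders from Theorems~\ref{thm:opt_eq} and \ref{thm:kstar_c}, and default to the empty coalition when nothing is stable.

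For $\alpha\in(\frac12,1]$ you take a different and simpler route, and the obstacle you flag is not real. Write $A=3c_{min}^{2/3}/2^{1/3}$. Your minorant $k^{2(\alpha+1)/3}\ge k$ gives $\textsf{SC}_c(k)\ge(n+1)\sigma^2-k(\sigma^2-A)$ for every $k\in\{2,\dots,n\}$.
\begin{itemize}
\item If $\sigma^2\le A$, no coalition beats $n\sigma^2$, so $\textsf{PoS}=1$.
\item If $\sigma^2>A$, every coalition costs at least $\sigma^2+nA>nA$.
\end{itemize}
So your display $\textsf{SC}_c(k^*)\ge n\sigma^2\min\{1,A/\sigma^2\}$ holds exactly, not just roughly. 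It directly gives $\textsf{PoS}(\textsf{SC})<\max\{\frac43,\sigma^2/A\}$, and in your route the $\frac43$ is pure slack.

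Your remark that the $\frac43$ ``handles small $n$'' is backwards. In the paper it comes from the large-$n$ case. The paper minorizes by the convex function $h(x)=Ax^{2(1+\alpha)/3}-(x-1)\sigma^2$, whose stationary point $x'$ does not depend on $n$. It then splits into three cases:
\begin{itemize}
\item $x'\le 2$, giving $\textsf{PoS}=1$;
\item $2<x'\le n$, giving $\textsf{PoS}<2(1+\alpha)/3\le\frac43$;
\item $x'>n$, giving $\textsf{PoS}<(\sigma^2/A)\,n^{(1-2\alpha)/3}<\sigma^2/A$.
\end{itemize}

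The two routes buy different things. The paper's curvature argument gives an $n$-dependent refinement: once $n\ge x'$, its bound no longer depends on $\sigma^2/c_{min}^{2/3}$. Your linear bound is uniform in $n$ and never improves, but it is a one-line proof with no calculus.

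Your side observation is actually sharper than both for large $n$. Since $(n+1)\sigma^2-\textsf{SC}_c(k^*)$ is at most an $n$-independent constant $G$, you get $\textsf{PoS}\le n\sigma^2/((n+1)\sigma^2-G)\to 1$. The paper does not state this; its $\frac43$ arises only because it replaces $x'$ by $n$ in the middle case.
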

\begin{proofsketch}
The order of the price of stability for the first two regimes follows directly from Theorems~\ref{thm:kstar_c} and \ref{thm:opt_eq}. However, these theorems just tell us that the price of stability for the regime $\left(\frac{1}{2}, 1 \right]$ is $\Theta(1)$ which is not that informative. Since $\textsf{PoS}(\textsf{SC}) \geq 1$ trivially, our goal is to derive a constant upper bound on the price of stability in the last regime. We do that by constructing lower bounds on the social cost of the optimal coalition in the centralized regime (using information about the curvature of the function $\textsf{SC}_c(k)$ from Equation~\eqref{eq:scbest_c}). For the social cost of the optimal stable coalition, we use the trivial upper bound of $n\sigma^2$. This is sufficient to obtain a constant upper bound of $\max \left[\frac{4}{3}, \left(\frac{2^{1/3}\sigma^2}{3c_{min}^{2/3}} \right) \right]$ on the $\textsf{PoS}$. 
\end{proofsketch}
There are a few key takeaways from the above theorem. First, there is a clear gap between the baseline and the fully decentralized mechanism for $\alpha < \frac{1}{2}$. The gap is intuitive for the intermediate regime of $\left(-\frac{1}{2}, \frac{1}{2}\right)$: there, the grand coalition is optimal for the centralized mechanism with sublinearly growing social costs; while the decentralized mechanism has social costs that grow linearly. The efficiency gap appears to be driven by the stability constraints of the decentralized setting. 

Interestingly, however, this gap persists and is the largest for $\alpha \leq - \frac{1}{2}$. This may be counter-intuitive because this is the regime where the grand coalition is optimal in the centralized setting and optimal (within constant factors) in the decentralized setting. This highlights that there are inefficiencies in the decentralized mechanism due to reasons beyond stability alone. The primary source of inefficiency is that when allowed autonomy, players opt for more stringent privacy levels (smaller $\epsilon$'s) compared to the centralized setting for the same coalition $S$. This increases accuracy costs for everyone while not saving appreciably on privacy costs, making the social cost worse. 

Finally, for $\alpha > \frac{1}{2}$, we highlight that for sufficiently large $n$, the upper bound on the $\textsf{PoS}$ is $\frac{4}{3}$. The other component of the upper bound becomes useful only when $n$ is small, in particular, if $n < \left( \frac{\sigma^2 }{(1+\alpha)(2c_{min})^{2/3}} \right)^{ 3/(2\alpha-1) }$.

\paragraph{Variance Comparison.} Similarly, we can compute the price of stability in terms of the \emph{estimator variance} (again for both our notions of stability):

\begin{thm}\label{thm:pos_var}
The Price of Stability for the fully decentralized mechanism with respect to estimator variance satisfies:
\begin{itemize}
    \item $\textsf{PoS}(\textsf{Var}) = \Theta\left(n^{2/3} \right)$ when $\alpha \in \left[-1, -\frac{1}{2}\right]$;
    \item $\textsf{PoS}(\textsf{Var}) = \Theta\left(n^{(1-2\alpha)/3} \right)$ when $\alpha \in \left(-\frac{1}{2}, \frac{1}{2}\right]$; and finally, 
    \item $\textsf{PoS}(\textsf{Var}) = \Theta(1)$ when $\alpha \in \left(\frac{1}{2}, 1\right]$. 
\end{itemize}
\end{thm}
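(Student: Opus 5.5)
The plan is to treat this as the variance analogue of Theorem~\ref{thm:pos}. Divide the variance orders from Theorem~\ref{thm:opt_eq} (fully decentralized) by those from Theorem~\ref{thm:kstar_c} (centralized), regime by regime. One subtlety needs care first. $\textsf{PoS}(\textsf{Var})$ compares the variance of the \emph{social-cost-optimal} stable coalition with the variance of the \emph{social-cost-optimal} centralized coalition. So I will not minimize variance directly. Instead I will use the variance values those two theorems attach to the respective SC-minimizers, with the convention that a missing stable coalition gives variance $\sigma^2 = \Theta(1)$. Both theorems give $\Theta(\cdot)$ statements, and the lower-bound instances (identical costs) are compatible across the two settings. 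Ratios of $\Theta$'s therefore give tight $\Theta$'s, provided the same instances witness both bounds. I will check this by evaluating \eqref{eq:varbest_c} and \eqref{eq:var_d} on the identical-cost instance.

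\emph{Regime $\alpha\in[-1,-\tfrac12)$.} By Theorem~\ref{thm:kstar_c}(i), $k^*=n$ and $\textsf{Var}_c=\Theta(n^{(2\alpha-1)/3})$. By Theorem~\ref{thm:opt_eq}(i), the optimal stable coalition has variance $\Theta(n^{(2\alpha+1)/3})$. The ratio is $\Theta(n^{2/3})$. The same comes directly from comparing \eqref{eq:var_d} with \eqref{eq:varbest_c} at $k=n$: the noise terms differ by exactly the factor $k^{2/3}$, and both noise terms dominate $\sigma^2$ since $\alpha>-1$ (or $\alpha=-1$, where $\sigma^2/n$ is of the same order).

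\emph{Boundary $\alpha=-\tfrac12$.} Theorem~\ref{thm:opt_eq}(ii) gives decentralized variance $\Theta(1)$, and Theorem~\ref{thm:kstar_c}(i) gives centralized variance $\Theta(n^{-2/3})$. The ratio is again $\Theta(n^{2/3})$.

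\emph{Regime $\alpha\in(-\tfrac12,\tfrac12)$.} Decentralized variance is $\Theta(1)$ by Theorem~\ref{thm:opt_eq}(iii). Centralized variance is $\Theta(n^{(2\alpha-1)/3})$. Hence $\textsf{PoS}(\textsf{Var})=\Theta(n^{(1-2\alpha)/3})$.

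\emph{Point $\alpha=\tfrac12$.} Both variances are $\Theta(1)$ by Theorem~\ref{thm:kstar_c}(ii) and Theorem~\ref{thm:opt_eq}(iii). This gives $\Theta(1)=\Theta(n^{0})$, consistent with the stated formula.

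\emph{Regime $\alpha\in(\tfrac12,1]$.} Both sides are $\Theta(1)$, so $\textsf{PoS}(\textsf{Var})=\Theta(1)$. Unlike in Theorem~\ref{thm:pos}, no explicit constant is claimed here, so nothing more is needed.

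The main obstacle is that $\textsf{PoS}(\textsf{Var})$ need not be $\ge 1$, and the $\Theta(1)$ statements for either setting must hold as both upper and lower bounds. In particular I must make sure the centralized variance at $k^*$ is bounded \emph{below} by a constant when $k^*=\Theta(1)$ or $0$. This holds because $\textsf{Var}_c(\hat\mu\,|\,k)\ge \sigma^2/k$ and $k^*$ is bounded.

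I must also make sure the decentralized variance is bounded \emph{above} by a constant. Any stable coalition satisfies the no-exit condition, so its variance is at most $\sigma^2$. With no stable coalition, the default variance is $\sigma^2$.

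Similarly, the $\Theta(n^{\cdot})$ lower-bound claims rely on the identical-cost instances from Claims~\ref{clm:equal_cost_Nash} and \ref{clm:equal_cost_robust}. There I would verify that the unique (or SC-best) stable coalition is the grand coalition, or has size $\Theta(1)$. Then its variance is exactly of the stated order by \eqref{eq:var_d}, while the centralized side on the same instance follows from \eqref{eq:varbest_c} with $\bar c$ constant.
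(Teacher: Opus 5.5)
Your proposal is correct and follows essentially the same route as the paper. The paper also gets $\textsf{PoS}(\textsf{Var})$ by dividing the decentralized variance orders of Theorem~\ref{thm:opt_eq} by the centralized ones of Theorem~\ref{thm:kstar_c}, regime by regime, and uses the empty-coalition default (variance $\sigma^2=\Theta(1)$) when no stable coalition exists. Your added checks are valid: the no-exit condition caps any stable coalition's variance at $\sigma^2$, and $\textsf{Var}_c \ge \sigma^2/k^*$ with bounded $k^*$ keeps the centralized variance $\Omega(1)$. Together they make explicit the two-sided bounds the paper's short argument leaves implicit.
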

It is interesting to note here that Theorems~\ref{thm:pos} and \ref{thm:pos_var} are identical in terms of the order of the price of stability. This is because the accuracy cost/variance term is either of the same order as the privacy cost term (as in the centralized mechanism), or strictly dominates the latter (as in the decentralized mechanism). Basically, the order of the variance
determines the order of the social cost in both centralized and decentralized mechanisms. 

Further, we can interpret the different $\alpha$ regimes as follows: i) When $\alpha \in \left[-1, -\frac{1}{2}\right)$, both centralized and decentralized mechanisms achieve non-trivial accuracy improvements, but there is a \textit{large efficiency gap} between centralized and decentralized. ii) When $\alpha \in \left[-\frac{1}{2}, \frac{1}{2}\right)$, only the centralized mechanism achieves non-trivial accuracy improvements, but the efficiency gap between centralized and decentralized diminishes with increasing $\alpha$. Finally, for $\alpha \in \left[\frac{1}{2}, 1\right]$, both centralized and decentralized mechanisms achieve trivial accuracy improvements over the outside option, but now they are within a constant factor of each other in terms of efficiency.  

\subsection{Price of Stability for the Partially Decentralized Mechanism}
Drawing from our earlier analysis, we developed the intuition that allowing players agency to choose their own privacy levels is the primary source of inefficiency in the fully decentralized mechanism. In this segment, we formally confirm that hypothesis: we show that opting for a partially decentralized mechanism where the players continue to make participation decisions, but the designer chooses privacy levels centrally, bridges the efficiency gap completely (down to constant factors). Formally, 

\begin{thm}\label{thm:pos_partial}
The Price of Stability for the partially decentralized mechanism with respect to both social cost and estimator variance is $\Theta(1)$ for all privacy cost regimes, i.e., $\textsf{PoS}(\textsf{SC}) = \Theta(1)$ and $\textsf{PoS}(\textsf{Var}) = \Theta(1)$ for all $\alpha \in [-1, 1]$.
\end{thm}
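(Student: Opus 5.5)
The plan is to divide the orders in Theorem~\ref{thm:partial_cent} by those in Theorem~\ref{thm:kstar_c}, regime by regime. Throughout we use the convention that if no stable coalition exists, the decentralized outcome is the empty coalition, with social cost $n\sigma^2$ and variance $\sigma^2$.

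\textbf{Regime $\alpha \in [-1, \frac{1}{2})$.} Theorem~\ref{thm:kstar_c} gives $\textsf{SC}_c(k^*) = \Theta(n^{2(\alpha+1)/3})$ and $\textsf{Var}_c = \Theta(n^{(2\alpha-1)/3})$. Theorem~\ref{thm:partial_cent} gives exactly the same orders for the optimal Nash-stable coalition under partial decentralization. Both ratios are therefore $\Theta(1)$.

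It is worth checking why Theorem~\ref{thm:partial_cent} holds here, since this regime carries the content. By Lemma~\ref{lem:partial_grand}, $S_n$ is stable at $\epsilon^*_{(n)} = (4/(n^{1+\alpha}\bar c_n))^{1/3}$. Plugging this into Lemma~\ref{lem:partial_general} gives:
\begin{itemize}
\item $2/\epsilon^2 = \Theta(n^{2(1+\alpha)/3})$;
\item a privacy term $n^{1+\alpha}\epsilon\,\bar c_n = \Theta(n^{2(1+\alpha)/3})$;
\item since $\alpha < \frac{1}{2}$, the quantity $k\sigma^2 = n\sigma^2$ dominates, so the social cost is $\sigma^2 + \Theta(n^{2(1+\alpha)/3})$.
\end{itemize}
The variance is $(\sigma^2 + \Theta(n^{2(1+\alpha)/3}))/n = \Theta(n^{(2\alpha-1)/3})$.

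For the lower bound direction, no stable coalition can beat the centralized optimum in social cost, which already has this order. For the variance, a lower bound of order $n^{(2\alpha-1)/3}$ follows because the optimal coalition in social cost (among stable ones) must achieve $\textsf{SC} = O(n^{2(\alpha+1)/3})$. Since $\textsf{SC} \ge (n-k)\sigma^2 + k\,\textsf{Var}$, this forces $k = n - O(n^{2(\alpha+1)/3})$. The accuracy term $2/\epsilon^2$ must also be $O(n^{2(\alpha+1)/3})$ and at least of that order by the AM–GM balance with the privacy term. This is what Theorem~\ref{thm:partial_cent} already encodes.

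\textbf{Regime $\alpha \in [\frac{1}{2}, 1]$.} Centralization gives $\Theta(n)$ social cost and $\Theta(1)$ variance. Partial decentralization gives the same orders if a stable coalition exists. Otherwise the default empty coalition has social cost $n\sigma^2 = \Theta(n)$ and variance $\sigma^2 = \Theta(1)$. So both ratios are $\Theta(1)$, with $\textsf{PoS}(\textsf{SC}) \ge 1$ trivially.

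The one place needing care is the variance ratio when the centralized optimum has $k^* = 0$ but a stable coalition exists, or vice versa. In every case both variances lie in $[c, \sigma^2]$ for a constant $c > 0$:
\begin{itemize}
\item the upper bound holds because any stable or optimal coalition has variance at most $\sigma^2$, as participants' burdens are at most $\sigma^2$ (for the centralized optimum, because its social cost is at most $n\sigma^2$ and $k = O(1)$);
\item the lower bound holds because Lemma~\ref{lem:partial_grand} bounds stable coalitions to size $\Theta(1)$ at $\epsilon = \Theta(1)$, and the centralized $k^*$ is $O(1)$, so $\sigma^2/k$ is bounded below.
\end{itemize}

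\textbf{Main obstacle.} The obstacle is the boundary case $\alpha = \frac{1}{2}$, where Theorem~\ref{thm:kstar_c} allows any $k^*$. The plan there is to use the same sandwich. The variance is at most $\sigma^2$. For a lower bound on the centralized variance, Equation~\eqref{eq:varbest_c} with $\alpha = \frac{1}{2}$ gives $\textsf{Var}_c = \frac{1}{k}\bigl[\sigma^2 + k\,\overline{c^{2/3}}/2^{1/3}\bigr] \ge c_{min}^{2/3}/2^{1/3}$, a constant. Likewise the partially decentralized variance is $\Theta(1)$ by Theorem~\ref{thm:partial_cent}. This yields $\Theta(1)$ for both metrics in all regimes.
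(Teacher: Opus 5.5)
Your proposal is correct and takes the same approach as the paper, whose proof simply combines the orders from Theorems~\ref{thm:kstar_c} and~\ref{thm:partial_cent} regime by regime, using the empty-coalition default when no stable coalition exists. Your extra checks (sandwiching both variances between positive constants and $\sigma^2$ for $\alpha \ge \frac{1}{2}$, and re-deriving the $\alpha < \frac{1}{2}$ orders) only make explicit what the paper leaves implicit.
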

This proof follows directly from Theorems~\ref{thm:kstar_c} and~\ref{thm:partial_cent}. 
We can also quantify the constants when it comes to the price of stability for the social cost. In particular, for $\alpha < \frac{1}{2}$ and for large $n$, $\textsf{PoS}(\textsf{SC}) < \left(\frac{c_{max}}{c_{min}} \right)^{2/3}$, while for $\alpha > \frac{1}{2}$, we can continue to have the same upper bound on $\textsf{PoS}(\textsf{SC})$ as in Theorem~\ref{thm:pos}. 

In summary, we believe this is to be a powerful result. Because it shows that when it comes to mean-estimation problems, many of the data-sharing protocols that are prevalent in practice and use similar partially decentralized mechanisms, are actually very close to \textit{optimal} --- as long as the designer chooses the central privacy level appropriately.

\section{Conclusion and Future Work}\label{sec:conclude}

We studied different models of coalition formation for data sharing under differential privacy when agents have heterogeneous privacy preferences and privacy costs depend on the observation/attack model (and thus on the size of the coalition). Our settings ranged from full decentralization (where players choose both participation and privacy levels) to intermediate or partial decentralization (where players just make participation decisions and a privacy parameter is chosen centrally by the designer) to full centralization (where the designer has full control over coalition composition and privacy levels). There are many practical  benefits to data sharing in a decentralized manner which motivated this study: i) it gives agents more control over their data and ii) removes the need for trust and control through a central entity. Our goal was to investigate: i) whether full decentralization is efficient, and ii) if not, what level of decentralization is. 

We show that under full decentralization, the picture can often be grim: non-trivial accuracies can be achieved (relative to not sharing any data at all) only when privacy costs decrease sufficiently fast with coalition size ($\alpha < -\frac{1}{2}$); that is, only under regimes in which there is a non-negligible level of privacy amplification through data aggregation as more agents join. If privacy is measured at the local level, or worse, if privacy costs increase as more participants observe a player’s data, then it is not possible to obtain non-trivial accuracy in a fully decentralized fashion. This is in sharp contrast to the fully centralized baseline, where non-trivial improvements can be achieved even when privacy costs increase with coalition size, up to $\alpha = \frac{1}{2}$. Even for $\alpha < -\frac{1}{2}$, where full decentralization can be somewhat useful, it still has a large efficiency gap compared to the baseline mechanism which grows in the number of players $n$. This happens because selfish agents choose their privacy parameters $\epsilon_i$ inefficiently, opting for privacy levels that are significantly more stringent than what would be socially optimal, leading to lower privacy costs but inaccurate estimates. Interestingly, if we sacrifice some degree of autonomy for players, specifically allowing the central designer to pick a fixed $\epsilon$ for everyone (while players still make their own participation decisions), then we show that it is possible to recover all the nice properties of the baseline mechanism while reducing the efficiency gap down to constant factors.  

\paragraph{Future work.} We conclude by highlighting several directions for future work. First, our analysis focuses on moment estimation tasks which are a central primitive in modern machine learning. In particular, many widely used learning algorithms---such as stochastic gradient descent---rely on repeatedly averaging local updates or gradients across agents. A natural next step is to extend our analysis to capture repeated aggregation tasks, in higher dimensions. 

A second important direction is to understand what happens when agents have less information about each other's privacy preferences. This can lead to novel interactions and interesting game-theoretic models such as dynamically learning other agents' privacy preferences over repeated interactions and designing mechanisms that ensure truthful reporting of privacy parameters. 

Finally, our model adopts differential privacy as a formal notion of privacy loss. While DP provides strong theoretical guarantees and is natural for statistical computations and model training, practitioners often rely on large variety of privacy-preserving mechanisms when sharing data, including restricted data access, synthetic data generation, legal privacy protections, to only name a few. Extending our analysis to alternative privacy notions is an important open direction.

\section*{Acknowledgement}
RB acknowledges support from the US National Science Foundation (NSF) under grants CAREER-2144532 and 2112471. Part of this work was conducted when he was a visiting faculty researcher at Google. KD acknowledges support through the MIT METEOR Post-doctoral Fellowship. JZ and DS acknowledge support from the US NSF under grants IIS-2504990 and IIS-2336236. Part of this work was also supported by the Simons Institute for the Theory of Computing, and conducted when authors JZ and KD were visiting the Institute. 
Any opinions and findings expressed in this material are those of the authors and do not reflect the views of their funding agencies.

\bibliographystyle{plainnat}
\bibliography{arxiv2_bib.bib}

\appendix


\section{Additional Results with Proofs}\label{app:extra}

\subsection{Computation of Variance in Equation~\eqref{eq:variance}}
Recall that $x_i \sim \mathcal{D}$ denotes player $i$'s data point with $\textsf{Var}(x_i) = \sigma^2$ for all $i \in [n]$. Player $i$ adds noise $Z_i \sim Lap\left(1/\epsilon_i\right)$ to their data point and shares $\widetilde x_i = x_i + Z_i$ with the coalition. Note that $Z_i \perp x_i$. Therefore, for coalition $S \subseteq [n]$, the variance of the population mean estimator $\hat \mu$ is given by: 
\begin{align*}
    \textsf{Var}(\hat \mu(S, \vec \epsilon)) &= \textsf{Var} \left( \frac{\sum_{i \in S}\widetilde x_i}{|S|} \right) \\
    &= \frac{1}{|S|^2}\sum_{i \in S}\left( \textsf{Var}(x_i) + \textsf{Var}(Z_i)\right) \quad \text{(since $Cov(x_i, Z_i) = 0$, $Cov(\widetilde x_i, \widetilde x_j) = 0$)} \\
    &= \frac{1}{|S|^2}\sum_{i \in S} \left(\sigma^2 + \frac{2}{\epsilon_i^2} \right)\\
    &= \frac{\sigma^2}{|S|} + \frac{2}{|S|^2}\left(\sum_{i \in S}\frac{1}{\epsilon_i^2} \right). 
\end{align*} 

\subsection{Equilibrium Conditions for Nash Stability}

\begin{claim}\label{clm:eq_conditions_Nash}
Given ($\vec c, \sigma^2$) and any $\alpha \in [-1, 1]$, a subset $S \subseteq [n]$ with $|S| \triangleq k \geq 2$ is a Nash-stable coalition if and only if it satisfies the following conditions: 
\[
    \frac{(k - 1)}{k^{(2\alpha+1)/3} } \geq \frac{\sum_{j \in S} \left(c_j^2/2 \right)^{1/3} + 2\max_{j \in S}(c_j^2/2)^{1/3}}{\sigma^2}, \quad \text{and}
\]
\[
    k(k+1) < \left(\frac{3\min_{j \notin S} (c_j^2/2)^{1/3} }{\sigma^2}\right) (k+1)^{(2\alpha+4)/3} +  \left(\frac{\sum_{j \in S}(c_j^2/2)^{1/3} }{\sigma^2}\right) k^{(2\alpha+4)/3}.
\]
\end{claim}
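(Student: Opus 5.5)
The plan is to write out both conditions in the definition of a Nash-stable coalition using the closed forms from Lemma~\ref{lem:autonomous_eps}, with $f(k)=k^{\alpha}$. Throughout I write $d_j \triangleq (c_j^2/2)^{1/3}$ and $D_S \triangleq \sum_{j\in S} d_j$. The single identity that converts every term into the $d_j$'s is $2\,(c^2/16)^{1/3} = (c^2/2)^{1/3}$, together with $4^{1/3}c^{2/3} = 2(c^2/2)^{1/3}$. The claim then reduces to two rearrangements: the first displayed inequality is the no-exit condition and the second is the no-entry condition.

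\emph{No-exit condition.} I substitute $\epsilon_i^*(S) = \left(4/(k^{2+\alpha}c_i)\right)^{1/3}$ into the burden of $j\in S$. The variance part becomes
\[
\frac{2}{k^2}\sum_{i\in S}\frac{1}{\epsilon_i^{*2}} = k^{(2\alpha-2)/3} D_S .
\]
The privacy part becomes
\[
c_j k^{\alpha}\epsilon_j^* = k^{(2\alpha-2)/3}\cdot 2d_j .
\]
So the first condition of the definition reads $\sigma^2/k + k^{(2\alpha-2)/3}(D_S+2d_j) \le \sigma^2$ for every $j\in S$. I multiply by $k/\sigma^2$ and rearrange. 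Because the left side is increasing in $d_j$, requiring it for all $j$ is equivalent to requiring it at $\max_{j\in S} d_j$. This gives exactly the first inequality.

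\emph{No-entry condition.} Fix $j\notin S$ and hold the other players' choices at $\vec\epsilon^*(S)$, computed for size $k$ and not $k+1$; this matches the definition of Nash stability. The burden of $j$ after entry is then convex and separable in $\epsilon_j$. The first-order condition gives the minimizer
\[
\epsilon_j = \left(\frac{4}{(k+1)^{2+\alpha}c_j}\right)^{1/3}.
\]
At this minimizer, $j$'s own noise-variance term plus privacy term equals $3(k+1)^{(2\alpha-2)/3}d_j$, contributing $d_j$ and $2d_j$ as in the exit computation. The incumbents' noise contributes
\[
\frac{2}{(k+1)^2}\sum_{i\in S}\frac{1}{\epsilon_i^{*2}} = \frac{k^{(2\alpha+4)/3}}{(k+1)^2}D_S ,
\]
since their $\epsilon$'s still carry the exponent for $k$. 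The strict inequality "minimal burden $>\sigma^2$" thus becomes
\[
\frac{\sigma^2}{k+1} + 3(k+1)^{(2\alpha-2)/3}d_j + \frac{k^{(2\alpha+4)/3}}{(k+1)^2}D_S > \sigma^2 .
\]
I multiply by $(k+1)^2/\sigma^2$ and move $(k+1)$ to the right. This gives $k(k+1) < \frac{3d_j}{\sigma^2}(k+1)^{(2\alpha+4)/3} + \frac{D_S}{\sigma^2}k^{(2\alpha+4)/3}$. The right side is increasing in $d_j$, so holding for all $j\notin S$ is equivalent to holding at $\min_{j\notin S} d_j$, which is the second inequality.

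Both steps are equivalences, so the two inequalities together are necessary and sufficient. The main point needing care is the entry step. There one must keep the incumbents' $\epsilon$'s at the size-$k$ values, which is why $k^{(2\alpha+4)/3}$ and $(k+1)^{(2\alpha+4)/3}$ appear side by side. One must also track the exponent bookkeeping when passing from $(k+1)^{-2}$ and $(k+1)^{(2\alpha-2)/3}$ to the cleared form. The remaining algebra is routine.
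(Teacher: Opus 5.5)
Your proposal is correct and follows essentially the same route as the paper's proof. You substitute the closed-form $\epsilon_i^*(S)$ into the exit condition and reduce it to the worst case $\max_{j\in S}$. For entry, you minimize the entrant's burden over $\epsilon_j$ with the incumbents frozen at their size-$k$ levels, substitute back, and reduce to $\min_{j\notin S}$. Your exponent bookkeeping checks out, and you correctly treat the entrant's problem as a minimization (the paper's text says ``max'' there, which is a typo).
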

\begin{proof}
Recall that given any coalition $S$, all burden-minimizing players in $S$ choose privacy levels according to $\vec \epsilon^*(S)$ (Lemma~\ref{lem:autonomous_eps}). Now, in order to derive the equilibrium conditions for $S$ to be Nash-stable, we will use the two conditions for Nash-stability. Firstly, $S$ must be stable to unilateral exit of players at $\vec \epsilon^*(S)$. This means that for any $i \in S$, 
\begin{align*}
       &  \frac{\sigma^2}{k} + \frac{2}{k^2}\left(\sum_{j \in S}\frac{1}{(\epsilon_j^*)^2}\right) + c_i(k) \epsilon_i^* \leq \sigma^2 \\
       \iff & \sigma^2 \left( \frac{k-1}{k} \right) \geq \frac{2}{k^2}\left(\sum_{j \in S}\frac{1}{(\epsilon_j^*)^2}\right) + c_i(k) \epsilon_i^* \\
       \iff & \sigma^2 \left(k-1 \right) \geq \frac{2}{k} \sum_{j \in S} \left( \frac{k^{2+\alpha} c_j }{4} \right)^{2/3} + k (k^{\alpha} c_i )\cdot \left( \frac{4}{k^{2+\alpha} c_i} \right)^{1/3} \quad \text{substituting $\epsilon_i^* = \left( \frac{4}{k^{(2+\alpha)}c_i} \right)^{1/3}$} \\
       \iff & \left(k-1 \right) \geq k^{(2\alpha+1)/3}   \left( \frac{\sum_{j \in S} \left(c_j^2/2 \right)^{1/3} + (4c_i^2)^{1/3}}{\sigma^2}  \right) \\
       \iff & \frac{(k - 1)}{k^{(2\alpha+1)/3}} \geq  \frac{\sum_{j \in S} (c_j^2/2)^{1/3} + 2(c_i^2/2)^{1/3}}{ \sigma^2}.
\end{align*}
Therefore, the unified condition for no player in $S$ to have incentives to leave, is given by: 
\begin{align*}
       \frac{(k-1)}{k^{(2\alpha+1)/3} } \geq \frac{\sum_{j \in S} \left(c_j^2/2 \right)^{1/3} + 2\max_{j \in S}(c_j^2/2)^{1/3}}{ \sigma^2}. 
\end{align*}
$S$ must also be stable with respect to outside players not having incentives to join unilaterally. This means that, for all $l \notin S$, we must have: 
\begin{align*}
    \max_{\epsilon_l \geq 0} \frac{\sigma^2}{(k+1)} + \frac{2}{(k+1)^2}\left( \frac{1}{\epsilon_l^2} + \sum_{j \in S}\frac{1}{(\epsilon_j^*)^2} \right) + c_l(k+1) \epsilon_l > \sigma^2
\end{align*}
The value of $\epsilon_l$ which maximizes the LHS is given by: 
\[
        \epsilon_l = \left( \frac{4}{(k+1)^2 c_l(k+1)} \right)^{1/3} = \left(\frac{4}{(k+1)^{(2+\alpha)}c_l} \right)^{1/3}.
\]
Plugging back, the unified condition is of the form: 
\begin{align*}
   k(k+1) < \left(\frac{3\min_{l \notin S} (c_l^2/2)^{1/3} }{\sigma^2}\right) (k+1)^{(2\alpha+4)/3} +  \left(\frac{\sum_{j \in S}(c_j^2/2)^{1/3} }{\sigma^2}\right) k^{(2\alpha+4)/3}.
\end{align*}
This concludes the proof of the claim. 
\end{proof}

\subsection{Equilibrium Conditions for Robust Stability}

\begin{claim}\label{clm:eq_conditions_robust}
Given $(\vec c, \sigma^2)$, a subset $S \subseteq [n]$ with $|S| \triangleq k \geq 2$ is a robust-stable coalition if and only if it satisfies the following conditions: 
\[
   \frac{(k - 1)}{k^{(2\alpha+1)/3} } \geq \frac{\sum_{j \in S} \left(c_j^2/2 \right)^{1/3} + 2\max_{j \in S}(c_j^2/2)^{1/3}}{\sigma^2}, \quad \text{and}
\]
\[
    \frac{k}{(k+1)^{(2\alpha+1)/3}} < \min_{l \notin S} \left[ \frac{ \sum_{j \in S\cup \{l\}} (c_j^2/2)^{1/3} + 2\max_{j \in S \cup \{l\}} (c_j^2/2)^{1/3} }{\sigma^2} \right].
\]
\end{claim}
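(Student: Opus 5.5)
The plan is to mirror the proof of Claim~\ref{clm:eq_conditions_Nash}, working directly from Definition~\ref{def:valid_entry}. By Lemma~\ref{lem:autonomous_eps}, every participant of any coalition $T$ of size $m$ plays $\epsilon_i^*(T) = \left(4/(m^{2+\alpha}c_i)\right)^{1/3}$. This holds both for the current coalition $S$ and for any hypothetical augmented coalition $S\cup\{l\}$. So both stability conditions reduce to inequalities in $k$, $\sigma^2$ and the quantities $(c_j^2/2)^{1/3}$.

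\textbf{First condition (no exit).} The robust definition is identical to the Nash definition for players already in $S$. So the first condition is exactly the exit condition from Claim~\ref{clm:eq_conditions_Nash}, and I will reuse that computation verbatim. As a reminder of the algebra, substituting $\epsilon_j^*(S)$ makes the accuracy term equal to
\[
\frac{2}{k^2}\sum_{j\in S}(\epsilon_j^*)^{-2} = k^{(2\alpha-2)/3}\sum_{j\in S}(c_j^2/2)^{1/3}.
\]
The privacy term becomes $c_i k^\alpha \epsilon_i^* = 2k^{(2\alpha-2)/3}(c_i^2/2)^{1/3}$. Multiplying the inequality (burden $\le \sigma^2$) by $k/\sigma^2$ and rearranging gives the stated condition, with $\max_{j\in S}$ arising because the condition must hold for every $i\in S$.

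\textbf{Second condition (no valid entry).} Fix $l\notin S$ and apply the same computation to the coalition $S\cup\{l\}$ of size $k+1$. The entry of $l$ is valid iff every member of $S\cup\{l\}$ has burden at most $\sigma^2$, evaluated at $\vec\epsilon^*(S\cup\{l\})$. This is exactly the exit condition for $S\cup\{l\}$, namely
\[
\frac{k}{(k+1)^{(2\alpha+1)/3}} \ge \frac{\sum_{j\in S\cup\{l\}}(c_j^2/2)^{1/3}+2\max_{j\in S\cup\{l\}}(c_j^2/2)^{1/3}}{\sigma^2}.
\]
Hence $l$ cannot validly enter iff this inequality fails, i.e.\ holds strictly reversed; by definition, some member $l'$ then has burden $>\sigma^2$. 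Requiring this for all $l\notin S$ yields the $\min_{l\notin S}$ form of the second condition.

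\textbf{Main obstacle.} The only delicate point is the bookkeeping. The equivalence "some player in $S\cup\{l\}$ is unhappy" $\iff$ "the worst-off player is unhappy" must be handled correctly. Since all members of $S\cup\{l\}$ share the same accuracy term, the burden is monotone in $c_i$, so the worst-off player is the one with the largest $(c_i^2/2)^{1/3}$. The strict versus non-strict inequalities also need care: exit uses $\le$, while failure of valid entry uses $>$. Finally, when $S=[n]$ the second condition is vacuous.
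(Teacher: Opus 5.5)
Your proposal is correct and takes essentially the same route as the paper. You reuse the exit computation from Claim~\ref{clm:eq_conditions_Nash} for the first condition; for the second, you observe that $l$'s valid entry is exactly the exit condition for $S\cup\{l\}$ at the re-optimized profile $\vec\epsilon^*(S\cup\{l\})$, that it fails precisely when the worst-off member (the $\max$ term) is unhappy, and that requiring this for every $l\notin S$ gives the $\min_{l\notin S}$ form. Your algebra for the accuracy and privacy terms checks out.
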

\begin{proof}
The first condition follows from our earlier proof of Claim~\ref{clm:eq_conditions_Nash} and comes from the condition that no player already in $S$ has an incentive to leave $S$ unilaterally. We just need to verify that the second condition is true if and only if no player $l \notin S$ can make a valid entry into $S$. We proceed as follows.

When some player $l \notin S$ tries to enter into $S$, all players in $S \cup \{l\}$ have an incentive to choose their privacy levels in a way that maximizes their individual utilities. From our earlier calculation, we know that: 
\[
    \forall~j \in S \cup \{l\}, \quad \epsilon_j^*(S \cup \{l\}) = \left(\frac{4}{(k+1)^2 c_j(k+1)} \right)^{1/3} = \left(\frac{4}{(k+1)^{(2+\alpha)}c_j} \right)^{1/3},
\]
where $k = |S|$. 
However, player $l$'s valid entry will be blocked if and only if there exists some $j' \in S \cup \{l\}$ for whom staying in the coalition is not favorable, even after the re-adjustment of privacy levels as above, i.e., 
\[
    \exists~j' \in S \cup \{l\} \quad \text{s.t. } \frac{k}{(k+1)^{(2\alpha+1)/3}}  < \left[ \frac{ \sum_{j \in S\cup \{l\}} (c_j^2/2)^{1/3} + 2(c_{j'}^2/2)^{1/3} }{ \sigma^2} \right],
\]
or equivalently, 
\[
    \frac{k}{(k+1)^{(2\alpha+1)/3}}  < \left[ \frac{ \sum_{j \in S\cup \{l\}} (c_j^2/2)^{1/3} + \max_{j \in S \cup  \{l\} } 2(c_j^2/2)^{1/3} }{\sigma^2} \right].
\]
Since the above must hold for all players $l \notin S$, we have the desired condition:
\[
    \frac{k}{(k+1)^{(2\alpha+1)/3}}  < \min_{l \notin S}\left[ \frac{ \sum_{j \in S\cup \{l\}} (c_j^2/2)^{1/3} + \max_{j \in S \cup  \{l\} } 2(c_j^2/2)^{1/3} }{\sigma^2} \right].
\]
This concludes the proof of the claim. 
\end{proof}

\subsection{Existence and Properties of Stable Coalitions under Full Decentralization}

\paragraph{Equilibrium multiplicity and non-monotonicity in player cost.}
Here, we present results illustrating some properties of stable coalitions in the fully decentralized setting under both definitions of stability. 

\begin{lemma}\label{lem:multiplicity}
There exist problem instances $(\alpha, \vec c, \sigma^2)$ where multiple (Nash/robust) equilibrium coalitions exist simultaneously.
\end{lemma}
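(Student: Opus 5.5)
This is an existence statement, so I will exhibit one explicit instance and verify it with the closed-form characterizations in Claims~\ref{clm:eq_conditions_Nash} and \ref{clm:eq_conditions_robust}. Both claims reduce stability of a set $S$ of size $k$ to two inequalities in the quantities $a_j \triangleq (c_j^2/2)^{1/3}/\sigma^2$.

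\textbf{Choice of instance.} I would take $\alpha=0$. The exit condition then reads
\[
\frac{k-1}{k^{1/3}} \geq \sum_{j\in S} a_j + 2\max_{j\in S} a_j .
\]
The Nash entry condition reads
\[
k(k+1) < 3\min_{l\notin S} a_l\,(k+1)^{4/3} + \Big(\sum_{j\in S} a_j\Big) k^{4/3}.
\]
The plan is a two-cost profile: $m$ ``cheap'' players with a small value $a_j=a$ and $n-m$ ``expensive'' players with a large value $a_j=b$. The two candidate coalitions are $S_2$ (two cheap players) and a second one that differs from it.

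\textbf{Candidate coalitions.} The simplest route is to look for two downward-closed coalitions $S_{k_1}$ and $S_{k_2}$ that are both stable, for instance $S_2$ and the grand coalition when $n=3$. Take $n=3$, costs $a_1=a_2=a$ and $a_3=b$.
\begin{itemize}
\item \emph{$S_2=\{1,2\}$ stable.} Exit requires $1/2^{1/3}\ge 4a$, i.e.\ $a\le 2^{-1/3}/4\approx 0.198$. Entry-blocking for player~3 requires $6 < 3b\cdot 3^{4/3} + 2a\cdot 2^{4/3}$, which holds for $b$ moderately large.
\item \emph{Grand coalition stable.} Its exit condition is $2/3^{1/3}\ge 2a+b+2b$, i.e.\ $3b+2a\le 1.387$, so $b\le 0.46$ roughly. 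The entry condition is vacuous.
\item \emph{Compatibility.} With $a$ small, the entry-blocking requirement for $S_2$ becomes $b>6/(3\cdot 4.327)\approx 0.462$. This conflicts with $b\le 0.46$, so the margins are tight in the wrong direction.
\end{itemize}

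\textbf{Fallback instance.} Because the margins fail, I would instead use $\alpha=1$. There, the exit burden grows as $k^{1}$ against $k-1$, so large coalitions become hard to sustain. Alternatively, I would use robust stability, where Claim~\ref{clm:equal_cost_robust} already allows intermediate coalitions.

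A cleaner route is to take $n$ larger with all costs equal to $c$ under the robust definition and $\alpha\in(-1/2,1]$. For suitable $\sigma^2$, both some intermediate $S_k$ and the empty/grand alternatives fail, so identical costs alone will not give multiplicity. So I would use two cost levels and two different sizes:
\begin{itemize}
\item a small coalition of cheap players that blocks entry because any entrant is expensive;
\item a larger coalition containing expensive players that is still exit-stable.
\end{itemize}
A more natural route is a non-downward-closed coalition, consistent with Lemma~\ref{lem:non_monotone_cost}: $S=\{1,2\}$ and $S'=\{2,3\}$ with nearly equal costs, each blocking entry of the third player. By symmetry this is easy. With $n=3$ and all costs equal, $S_2$ and any other pair are both stable whenever a pair is stable and the third player cannot enter. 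Under robust stability, Claim~\ref{clm:equal_cost_robust} gives existence of an intermediate robust equilibrium for $\alpha>-1/2$. Any relabeling of the pair is also stable, which gives $\binom{3}{2}$ distinct equilibrium coalitions.

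For Nash stability, Claim~\ref{clm:equal_cost_Nash} rules out equal costs, so I would perturb the costs slightly, e.g.\ $c_1<c_2<c_3$ close together. Because the inequalities are strict or continuous, $\{1,2\}$ and $\{1,3\}$ remain stable.

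\textbf{Main obstacle.} The hard step is finding explicit numbers that satisfy the strict entry inequality together with the exit inequality. For $n=3$ I would pick $\alpha=1$, set the common $a$ near the exit threshold $(k-1)/(4k)$ at $k=2$ (i.e.\ $a=1/8$), and then verify numerically that the third player's entry condition holds.
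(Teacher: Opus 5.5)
\textbf{There is a genuine gap.} The proposal never reaches a verified instance, and each construction you settle on provably fails.

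\emph{Misreading of Claim~\ref{clm:equal_cost_robust}.} That claim says robust equilibria of intermediate size are unique when they exist; it does not assert existence. With identical costs, its proof shows a robust equilibrium of size $k$ exists iff $\frac{k+2}{k-1}k^{p}\le 2^{1/3}\sigma^2/c^{2/3}<\frac{k+3}{k}(k+1)^{p}$, where $p=(2\alpha+1)/3\le 1$. Note that $2^{1/3}\sigma^2/c^{2/3}$ is $1/a$ in your notation. At $k=2$ the window is $[4\cdot 2^{p},\tfrac52\, 3^{p})$. This is empty, since $(3/2)^p\le 3/2<8/5$. So with $n=3$ and equal costs no pair is robust-stable, and there is nothing to relabel.

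\emph{The Nash perturbation runs backwards.} The proof of Claim~\ref{clm:equal_cost_Nash} exhibits a \emph{strict} gap $g(k)<h(k)$ between the exit and entry-blocking thresholds at identical costs. By the very continuity you invoke, sufficiently small perturbations still admit no Nash-stable intermediate coalition.

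\emph{$n=3$ cannot work under either notion.} First, nested coalitions $S\subset S\cup\{l\}$ can never both be stable. Stability of $S$ (robust by definition, Nash via Claim~\ref{clm:Nash_stronger}) says exactly that $S\cup\{l\}$ violates its exit condition. So your first attempt failed structurally, not because of tight margins. Second, take two pairs $\{x,y\}$ and $\{x,z\}$. Their exit conditions from Claim~\ref{clm:eq_conditions_robust}, with $a_x>0$, force $a_x+a_y+a_z+2\max_j a_j<\tfrac43\, 2^{-p}\le 2\cdot 3^{-p}$. Hence the grand coalition meets its exit condition, and the third player can validly enter either pair. Your final check also fails directly: with common $a=1/8$ at $\alpha=1$, $27a+8a=35/8<6$, so player~3 wants to join.

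\emph{What is missing.} Nash multiplicity needs $n\ge 4$ and substantial heterogeneity, not a small perturbation of equal costs. The needed structure has three parts:
\begin{enumerate}
\item a core of very cheap players;
\item two nearly interchangeable expensive players, either of whom can complete the core, but not both, since the larger coalition lowers everyone's $\epsilon^*$;
\item a still more expensive outsider.
\end{enumerate}
The paper takes $\alpha=1$, $\sigma^2=0.25$, and $\vec c=(1.80,2.15,2.20,15,15.5,17)\times 10^{-3}$. It verifies numerically that $\{1,2,3,4\}$ and $\{1,2,3,5\}$ are both Nash-stable, hence robust-stable by Claim~\ref{clm:Nash_stronger}.

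If you only wanted the robust half, your relabeling idea can be rescued by moving to $k=3$. With $\alpha=1$, $n=4$, identical costs, and $2^{1/3}\sigma^2/c^{2/3}\in[7.5,8)$, every $3$-subset is robust-stable. This cannot give the Nash half, by Claim~\ref{clm:equal_cost_Nash}.
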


\begin{lemma}\label{lem:non_monotone_cost}
There exist problem instances $(\alpha,\vec c, \sigma^2)$ which admit a (Nash/robust) equilibrium coalition $S$ where participation is not-monotonic in player costs. I.e., there exist $i, j \in [n]$ such that $c_i \leq c_j$ and player $i \notin S$, but player $j \in S$. 
\end{lemma}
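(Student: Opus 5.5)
The lemma is existential, so I would prove it by exhibiting one explicit small instance, for each notion of stability, and checking it against the closed-form characterizations in Claims~\ref{clm:eq_conditions_Nash} and~\ref{clm:eq_conditions_robust}. Throughout, write $a_j \triangleq (c_j^2/2)^{1/3}$. Then $a_j$ is increasing in $c_j$, and every stability condition becomes a polynomial inequality in the $a_j$'s, $\sigma^2$ and $k^{(2\alpha+1)/3}$.

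\emph{Shape of the instance.} I would take a coalition $S$ of size $k$ consisting of many cheap players plus one expensive player $j$. I would add one outsider $i$ whose cost lies strictly between the cheap group and $c_j$, so that $c_i \le c_j$ while $i\notin S$ and $j\in S$. For the robust definition the mechanism is clear from Claim~\ref{clm:eq_conditions_robust}. The term $2\max_{j\in S}a_j$ appears in both the exit condition for $S$ and the blocking condition for $S\cup\{i\}$, so the expensive member acts as the veto player. The plan is to choose $\sigma^2$ so that two things hold. First, the exit condition $\frac{k-1}{k^{(2\alpha+1)/3}} \ge \frac{\sum_S a + 2a_j}{\sigma^2}$ is (nearly) tight. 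Second, adding $a_i$ to the sum pushes the right-hand side above $\frac{k}{(k+1)^{(2\alpha+1)/3}}$, which blocks $i$'s entry. A quick check shows this fails for $k=2$ at $\alpha=1$, since the required $a_i$ would exceed the cheap players' $a$. So I would take $k$ larger, with a cheap group of size $k-1$. Then a single moderate $a_i$ can cover the gap
\[
\frac{k}{(k+1)^{(2\alpha+1)/3}}-\frac{k-1}{k^{(2\alpha+1)/3}},
\]
which behaves like $k^{(2-2\alpha)/3}$ times a constant. I would also consider $\alpha$ close to $1$, where this gap is $O(1)$ relative to $\sigma^2$ while the cheap players contribute little. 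I would concretely fix $\alpha=1$, $n=k+1$, $\sigma^2=1$, $a=\delta$ for the cheap players, $a_j=A$, $a_i=b$ with $\delta<b<A$, and solve the two linear inequalities in $(\delta,b,A)$.

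\emph{Nash version.} Here the blocking condition in Claim~\ref{clm:eq_conditions_Nash} depends on the outsider only through $3\min_{l\notin S}a_l\,(k+1)^{(2\alpha+4)/3}$. The entrant also takes the insiders' $\epsilon$'s as fixed at their size-$k$ values. I would use the same shape of instance, again with $n=k+1$ so that $i$ is the only outsider, and require
\[
k(k+1) < 3b\,(k+1)^{(2\alpha+4)/3} + \Big(\sum_S a\Big)\, k^{(2\alpha+4)/3},
\]
together with the exit condition. The large insider sum $\sum_S a$, driven by $A$ and multiplied by $k^{(2\alpha+4)/3}$, is what makes entry unattractive. It does so because the insiders' fixed noise is too large for a newcomer. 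An expensive insider increases this term without needing $i$ to be expensive. I would first try $k=2$ or $3$ and $\alpha=1$ with explicit numbers, and verify both inequalities numerically.

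\emph{Main obstacle.} The hard step is compatibility. The exit condition bounds $\sum_S a+2A$ from above, while blocking requires the same quantities, plus $b$ or $3b$, to be large. For small $k$ the two windows may fail to overlap once the ordering $b\le A$ is imposed, as the $k=2$, $\alpha=1$ robust computation already shows. If $\alpha=1$ does not work for the Nash case, I would scan $\alpha$ toward $-1$. In that regime the factor $k^{(2\alpha+1)/3}$ in the exit condition shrinks, which relaxes exit while leaving the entry term of order $k^{(2\alpha+4)/3}$. Finally, I would confirm that the same instance also witnesses multiplicity if convenient, for example by checking that $S_k$ is also stable. This is not needed for the present lemma.
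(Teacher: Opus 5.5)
\textbf{Comparison with the paper.} Your route is the paper's: exhibit one explicit instance at $\alpha=1$ and check it against the closed forms of Claims~\ref{clm:eq_conditions_Nash} and~\ref{clm:eq_conditions_robust}. Your shape (a cheap core, one expensive insider $j$, and an excluded outsider $i$ priced just below $j$) is exactly the mechanism behind the paper's $U_2=\{1,2,3,5\}$ in its six-player instance with $\sigma^2=0.25$.

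\textbf{The gap.} You never produce the witness. For an existence lemma the instance \emph{is} the proof. Your last paragraph leaves open exactly the point that must be shown: whether the exit and blocking windows overlap once $c_i\le c_j$ is imposed. So as written this is a search strategy, not a proof.

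Two further issues. First, your fallback points the wrong way. For your shape at fixed $k$, compatibility of the Nash conditions is lost, not gained, as $\alpha$ decreases; at $k=3$ they are compatible only for $\alpha\gtrsim 0.66$. The useful lever is $k$, not $\alpha$. Second, separate Nash and robust constructions are unnecessary. By Claim~\ref{clm:Nash_stronger} a Nash-stable witness is automatically robust-stable, which is how the paper covers both notions at once.

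\textbf{How to close it.} The gap closes inside your own framework. At $\alpha=1$ every factor $k^{(2\alpha-2)/3}$ equals $1$. Write $\Sigma=\sum_{l\in S}a_l$ and $A=a_j$, and take $n=k+1$. The conditions are then $\Sigma+2A\le\sigma^2\frac{k-1}{k}$ and $k(k+1)\sigma^2<3a_i(k+1)^2+k^2\Sigma$. Since $a_i\le A\le\Sigma$, the supremum of the blocking right-hand side is approached as $a_i\to A$ and $\Sigma\to A=\sigma^2\frac{k-1}{3k}$. There blocking reduces to $k^3-k^2-3k-3>0$. This fails at $k=2$, so your suspicion about small $k$ is right, but it holds at $k=3$.

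Concretely, take $\alpha=1$, $\sigma^2=\frac14$ (admissible for $[0,1]$-valued data), and $n=4$. Choose costs with $(a_1,a_2,a_3,a_4)=\frac14(10^{-3},10^{-3},0.215,0.22)$, i.e.\ $c_l=\sqrt{2a_l^3}$, and let $S=\{1,2,4\}$. After dividing by $\sigma^2$:
\begin{itemize}
\item the exit condition reads $0.662\le\frac23$;
\item blocking of player $3$ reads $12<3(0.215)(16)+9(0.222)=12.318$.
\end{itemize}
So $S$ is Nash-stable, hence robust-stable, with $c_3<c_4$, $3\notin S$ and $4\in S$.
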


In order to prove Lemmas~\ref{lem:multiplicity} and~\ref{lem:non_monotone_cost}, it is sufficient to provide one problem instance ($\alpha, \vec c, \sigma^2$) where we can find more than one (Nash/robust) equilibrium which, by default, would imply that at least one of them exhibits the non-monotonicity property. 
\begin{proof}
Consider the following problem instance with $\alpha = 1$, $\sigma^2 = 0.25$ and the cost profile given by
    $\vec c = [1.80 \times 10^{-3}, 2.15 \times 10^{-3}, 2.20 \times 10^{-3}, 15 \times 10^{-3}, 15.5 \times 10^{-3}, 17 \times 10^{-3}]$.
We will quickly verify that coalitions $U_1 = \{1,2,3,4\}$ and $U_2 = \{1,2,3,5\}$ are indeed both Nash equilibrium coalitions for this problem instance (therefore, also robust equilibria, by Claim~\ref{clm:Nash_stronger}). 
\begin{align*} 
    \vec \epsilon(U_1) &= [3.2624, 3.0748, 3.0513, 1.6091]; \quad \textsf{Var}(\hat \mu~|~U_1) = 0.1492; \\ 
    \quad B(U_1) &= [ 0.1727, 0.1756, 0.1760, 0.2457]; \quad B(U_1^c) = [0.2535, 0.2629].
\end{align*}
\begin{align*}
     \vec \epsilon(U_2) &= [3.2624, 3.0748, 3.0513, 1.5917]; \quad \textsf{Var}(\hat \mu~|~U_2) = 0.1502; \\ 
     \quad B(U_2) &= [0.1737, 0.1767, 0.1771, 0.2489]; \quad B(U_2^c) = [0.2510, 0.2636].\\
\end{align*}
For both $U_1$ and $U_2$, we compute the player-wise optimal $\epsilon$'s, the estimator variance, the participation burdens of players in the coalition (given by $B(U_1)$ and $B(U_2)$) and the participation burdens of players outside the coalition if they were to join unilaterally (given by $B(U_1^c)$ and $B(U_2^c)$). Observe that all participation burdens of players in coalitions are better than the outside option $\sigma^2 = 0.25$ which means that none of them have an incentive to leave, while external players who try to join unilaterally would always be worse than $\sigma^2$. 
\end{proof}

The intuition here is that the costs of players $4$ and $5$ are so close-by that they are effectively \textit{interchangeable} without affecting players $1$, $2$ and $3$. However, note that players $4$ and $5$ cannot simultaneously be part of an equilibrium coalition along with the first $3$ players---this would necessitate players $1$, $2$ and $3$ to choose more stringent privacy levels (because the coalition size is larger) which would make joining infeasible for player $5$. This leads to the existence of an equilibrium coalition ($U_2$) which is non-monotonic in player costs. At a high level, Lemmas \ref{lem:multiplicity} and \ref{lem:non_monotone_cost} demonstrate that stable coalitions often lack `nice structure' for general cost profiles, which makes the problem of finding stable coalitions (under both Nash and robust definitions) challenging.  

\begin{remark}
Intuitively, it does appear (based on the problem instance we provided before) that if the costs of consecutive players are not so close together, some of the problems articulated above (like equilibria being non-monotonic in player costs and equilibria multiplicity) may be alleviated to some extent. That is, indeed, true. We can show that if the cost profile $\vec c$ is ``well-separated", any Nash equilibrium that exists must be downward closed. In addition to solving the non-monotonicity issue, it gives us a better handle on equilibria multiplicity. Instead of having to search through all possible subsets of $[n]$ of size $\geq 2$, it suffices to check only the $n-1$ coalitions of the form $S_2, S_3,...S_n$. The details of the result (Claim~\ref{clm:separated_cost_Nash}) can be found in Appendix~\ref{app:extra}. 
\end{remark}

\paragraph{Equilibrium Existence.} 
We now highlight a structural property related to equilibrium existence that differ across our two notions of stability. This pertains to whether the existence of equilibrium coalitions is monotonic in the variance $\sigma^2$. Lemma \ref{lem:non_monotone_var} shows that existence of Nash-stable coalitions is non-monotonic in $\sigma^2$, there are settings where a Nash-stable coalition might exist at a lower variance level, but no coalition may exist at a higher variance level. On the other hand, such non-monotonicity is \emph{not} true of robust stability: as Lemma \ref{lem:robust_weak_monotonicity} shows, both the existence of a robust equilibrium and the maximum size of a robust equilibrium at any variance level $\sigma^2$ must always be monotonic in $\sigma^2$.

\begin{lemma}\label{lem:non_monotone_var}
The existence of a Nash equilibrium coalition may be non-monotonic in the variance level $\sigma^2$. That is, if a Nash equilibrium exists at $\sigma^2$, there may exist $\sigma' > \sigma$ where no Nash equilibrium exists. 
\end{lemma}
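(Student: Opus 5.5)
We will prove the lemma with an explicit small instance. The instance has $n=3$ players, $\alpha=1$, two very cheap players and one expensive player. The idea is that at a moderate variance the pair $S_2=\{1,2\}$ is Nash-stable. At a larger variance, player $3$ finds unilateral entry into $S_2$ attractive. Nash entry is evaluated at the \emph{old} privacy profile $\vec\epsilon^*(S_2)$, so it is cheaper than the true cost of being in $S_3$. Player $3$ would then want to leave $S_3$, because its members re-optimize to more stringent $\epsilon$'s (Lemma~\ref{lem:autonomous_eps}). This wedge between the Nash entry condition and the exit condition of the enlarged coalition is exactly what robust stability removes.

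Write $a_j=(c_j^2/2)^{1/3}$. By Claim~\ref{clm:eq_conditions_Nash}, each condition can be rewritten as a threshold on $\sigma^2$. A coalition $S$ of size $k$ has no profitable exit iff $\sigma^2\ge E(S):=k^{(2\alpha+1)/3}\bigl(\sum_{j\in S}a_j+2\max_{j\in S}a_j\bigr)/(k-1)$. An outsider $l$ does not want to join iff $\sigma^2<N(S,l):=\bigl[3a_l(k+1)^{(2\alpha+4)/3}+\bigl(\sum_{j\in S}a_j\bigr)k^{(2\alpha+4)/3}\bigr]/(k(k+1))$. Exit thus gets easier and entry gets harder to block as $\sigma^2$ grows. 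The grand coalition only has the exit condition, so it is stable iff $\sigma^2\ge E(S_3)$.

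With $\alpha=1$, set $a_1=a_2=\delta$ and $a_3=a$, with $\delta\ll a$. A direct computation gives the following thresholds:
\begin{itemize}
\item $E(S_2)=2\cdot 3\delta=6\delta$;
\item $E(\{i,3\})=2(\delta+3a)$ for $i\in\{1,2\}$;
\item $E(S_3)=\tfrac{3}{2}(2\delta+3a)=\tfrac92 a+3\delta$;
\item $N(S_2,3)=\bigl[3a\cdot 3^{2}+2\delta\cdot 2^{2}\bigr]/6=\tfrac92 a+\tfrac43\delta$.
\end{itemize}
For small $\delta$ these are ordered as
\[
E(S_2)<N(S_2,3)<E(S_3)<E(\{i,3\}).
\]
The key strict inequality is $\tfrac43\delta<3\delta$ between the $\delta$-coefficients of $N(S_2,3)$ and $E(S_3)$. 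Checking this comparison is the one delicate step, because the leading $a$-terms of $N(S_2,3)$ and $E(S_3)$ coincide. It works at $\alpha=1$ since $2^{(2\alpha+4)/3}/6<3^{(2\alpha+1)/3}/2$ there. If needed, one can instead fix concrete numbers (e.g.\ $\delta=0.01$, $a=1$) and verify numerically.

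Now choose $\sigma^2\in[E(S_2),N(S_2,3))$. Then $S_2$ is Nash-stable, which shows that an equilibrium exists. Next choose $\sigma'^2\in[N(S_2,3),E(S_3))$ and check every candidate coalition at this larger variance:
\begin{itemize}
\item $S_2$ fails, because player $3$ now wants to enter.
\item $\{1,3\}$ and $\{2,3\}$ fail their exit condition, since $\sigma'^2<E(S_3)<E(\{i,3\})$.
\item $S_3$ fails its exit condition, since $\sigma'^2<E(S_3)$.
\end{itemize}
There are no other non-trivial coalitions when $n=3$, so no Nash-stable coalition exists at $\sigma'>\sigma$. Finally, translate $\delta$ and $a$ back into costs via $c_j=\sqrt{2a_j^3}$, and confirm that all costs lie in some interval $[c_{min},c_{max}]$ with $c_{min}>0$.
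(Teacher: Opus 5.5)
Your proof is correct, but it takes a different route from the paper's. The paper uses a single nine-player instance at $\alpha=1$ (costs from $2.2\times10^{-4}$ to $38\times10^{-4}$, $\sigma=0.25$, $\sigma'=0.4$). It certifies that instance numerically, through the plot of maximal Nash-stable coalition size against $\sigma$ in Figure~\ref{fig:eq_var}. The same figure also contrasts Nash stability with the monotonicity of robust stability (Lemma~\ref{lem:robust_weak_monotonicity}).

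You instead use a three-player parametric family and reduce everything, via Claim~\ref{clm:eq_conditions_Nash}, to four closed-form thresholds that are linear in $(a,\delta)$. The whole verification can then be done by hand, and the mechanism is isolated explicitly. $N(S_2,3)$ and $E(S_3)$ share the same $a$-term and differ only through the incumbents' noise, evaluated at the stale profile $\vec\epsilon^*(S_2)$ rather than the more stringent $\vec\epsilon^*(S_3)$.

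Your construction is also more general than you claim. The ratio of the $\delta$-coefficients of $E(S_3)$ and $N(S_2,3)$ is $(3/2)^{(2\alpha+4)/3}>1$ for every $\alpha$. The leading-term comparison behind $E(S_3)<E(\{i,3\})$ reduces to $2(2/3)^{(2\alpha+1)/3}>1$, which holds on all of $[-1,1]$. So with $\delta/a$ small enough, your family is a counterexample for every $\alpha\in[-1,1]$, not only for $\alpha=1$.

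Two small corrections are needed.
\begin{enumerate}
\item You have $E(S_2)$ wrong: it equals $2\,(2\delta+2\delta)=8\delta$, not $6\delta$. The required ordering still holds; it only needs $\delta<\tfrac{27}{40}a$.
\item Your suggested concrete values $a=1$, $\delta=0.01$ are outside the model. Since $x_i\in[0,1]$, we must have $\sigma^2\le 1/4$, but these values put the relevant variances near $4.5$. All thresholds are homogeneous of degree one in $(a,\delta)$, so rescaling fixes this. For example, $a=0.04$ and $\delta=4\times10^{-4}$ give $E(S_2)=0.0032$, $N(S_2,3)\approx0.18053$, $E(S_3)=0.1812$ and $E(\{i,3\})=0.2408$, all below $1/4$.
\end{enumerate}
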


\begin{proof}
In order to prove this result, we need to provide a problem instance ($\alpha, \vec c, \sigma^2$) and some $\sigma'$ ($\sigma' > \sigma$) for which at least one Nash equilibrium exists at $\sigma^2$, but none exist at $\sigma'^2$. 
Consider the following problem instance: $\alpha = 1$, cost profile is given by $\vec c
= \big[2.2\times 10^{-4},\ 5.4\times 10^{-4},\ 7.0\times 10^{-4},\ 11\times 10^{-4},\ 
30\times 10^{-4},\ 33\times 10^{-4},\ 34\times 10^{-4},\ 36\times 10^{-4},\ 38\times 10^{-4}\big]$ and $\sigma^2 = (0.25)^2$. Further, choose $\sigma' = 0.4$. As we see in Figure~\ref{fig:eq_var}, a Nash-stable equilibrium exists at $\sigma^2$, but not at $\sigma'^2$. 
\end{proof}

\begin{figure}[!ht]
    \centering
    \subfloat[Nash equilibria]{\includegraphics[width=0.45\textwidth]{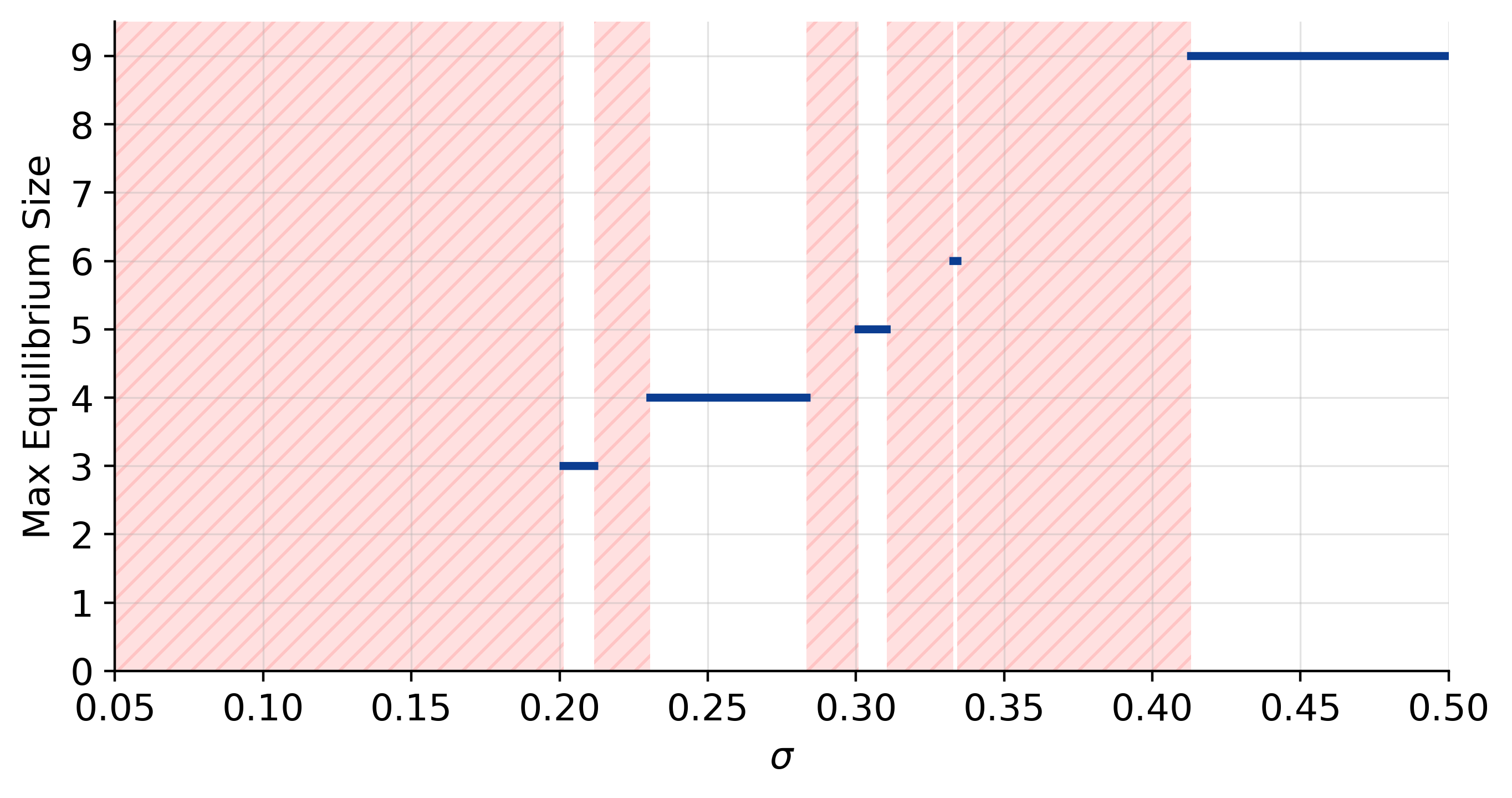}} 
    \subfloat[Robust Equilibria]{\includegraphics[width=0.45\textwidth]{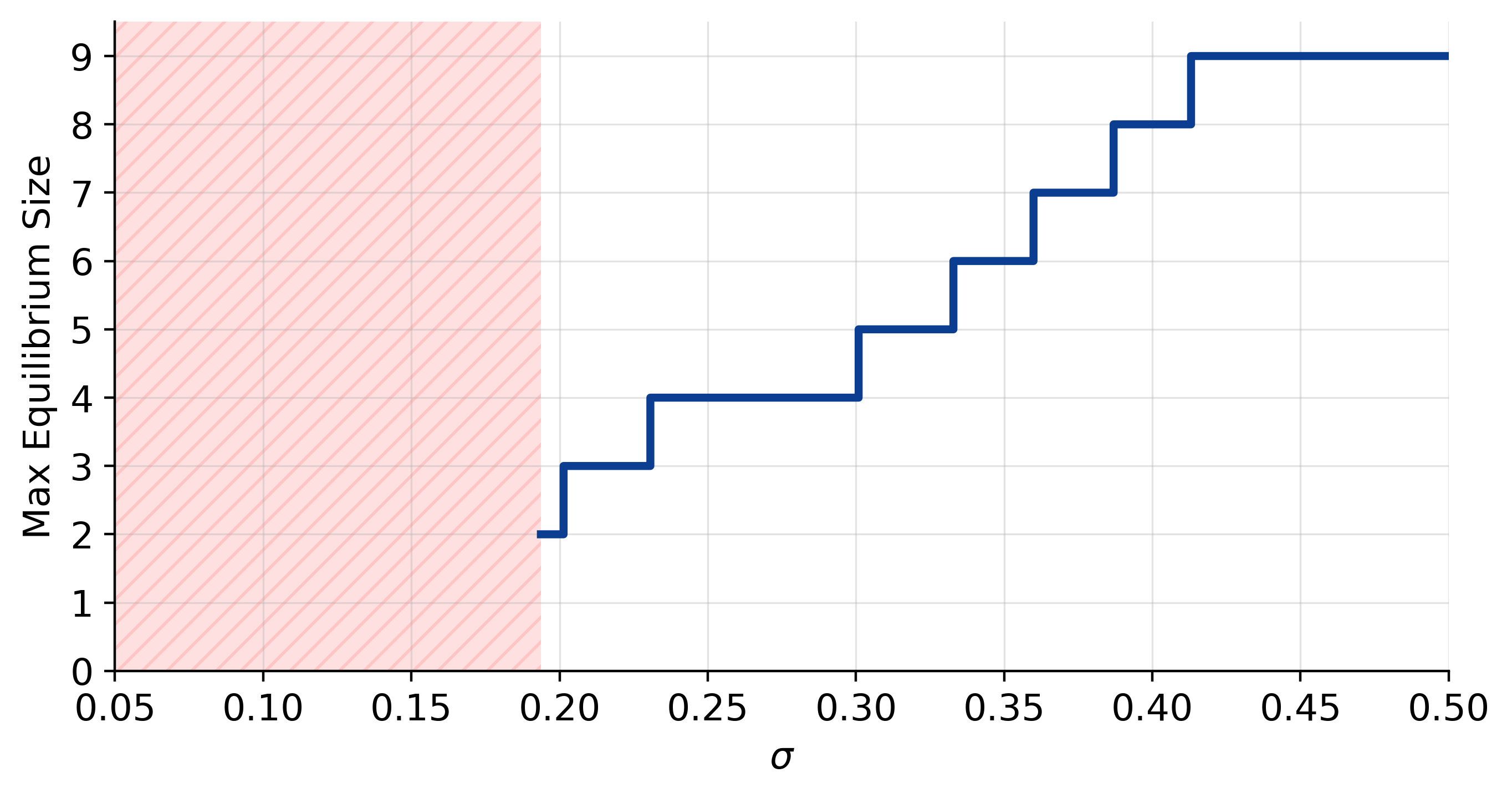}} 
    \caption{We plot the maximum size of equilibrium coalition that exists under the Nash definition (left) and the robust definition (right) as a function of $\sigma$. Parameters of the problem instance: cost profile $\vec c = \big[2.2\times 10^{-4},\ 5.4\times 10^{-4},\ 7.0\times 10^{-4},\ 11\times 10^{-4},\ 
    30\times 10^{-4},\ 33\times 10^{-4},\ 34\times 10^{-4},\ 36\times 10^{-4},\ 38\times 10^{-4}\big]$ and $\alpha = 1$. The shaded regions indicate $\sigma$ values for which no equilibrium exists under that stability definition. While the Nash definition exhibits non-monotonicity even in equilibrium existence, the robust definition exhibits monotonicity, not only in existence, but also in maximum equilibrium size.}
    \label{fig:eq_var}
\end{figure}
To understand this result, recall from Lemma \ref{lem:autonomous_eps} that a player's choice of $\epsilon$ is influenced solely by their privacy cost $c_i$ and the size of the coalition. Thus, for a coalition $S$ existing at variance level $\sigma^2$, as $\sigma^2$ increases, every player in $S$ will still keep their $\epsilon$ levels unchanged and continue to be a part of the coalition. However, with the increasing variance, it is possible that some player $j$ outside of the coalition $S$ may now find participation the better option. This immediately breaks Nash stability and coalition $S$ dissolves. This explains the non-monotonicity of existence in $\sigma^2$ for Nash-stable coalitions. However, when moving to the robust equilibrium concept, these non-monotonicities in $\sigma^2$ disappear.

\begin{lemma}\label{lem:robust_weak_monotonicity}
Given a cost profile $\vec c$, suppose that there exists a robust-stable coalition of size $k$ at variance level $\sigma^2$. Then for all $\sigma' \geq \sigma$, there must exist at least one robust-stable coalition of size $\geq k$ at variance level $\sigma'^2$. 
\end{lemma}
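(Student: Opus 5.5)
The plan is an ascending argument: keep $S$ at the new variance level and let players validly enter one at a time. The key observation is that robust stability consists of two parts, and only one of them is sensitive to $\sigma^2$ in a way that could fail. Write $d_j \triangleq (c_j^2/2)^{1/3}$. For a coalition $T$ with $|T| = m \ge 2$, call $T$ \emph{exit-stable at $\sigma^2$} if
\[
\frac{m-1}{m^{(2\alpha+1)/3}} \;\ge\; \frac{\sum_{j\in T} d_j + 2\max_{j\in T} d_j}{\sigma^2}.
\]
By Claim~\ref{clm:eq_conditions_robust}, this is exactly the condition that no member of $T$ wants to leave when everyone plays $\vec\epsilon^*(T)$.

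First, I will note that by Lemma~\ref{lem:autonomous_eps}, $\vec\epsilon^*(T)$ depends only on $|T|$ and on the costs, not on $\sigma^2$. Hence the left-hand side above is independent of $\sigma$, while the right-hand side is decreasing in $\sigma^2$. Consequently, exit-stability at $\sigma^2$ implies exit-stability at every $\sigma'^2 \ge \sigma^2$. In particular, the given robust-stable coalition $S$ of size $k$ remains exit-stable at $\sigma'^2$.

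Second, I will observe directly from Definition~\ref{def:valid_entry} that player $l \notin T$ can make a valid entry into $T$ if and only if $T\cup\{l\}$ is exit-stable. The valid-entry inequality requires every member of $T\cup\{l\}$ to have burden at most $\sigma^2$ under $\vec\epsilon^*(T\cup\{l\})$, which is verbatim the no-exit condition for $T\cup\{l\}$. Robust stability of $T$ is therefore equivalent to two requirements: $T$ is exit-stable, and no single-player augmentation $T\cup\{l\}$ is exit-stable.

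Third, I will run a greedy procedure at variance $\sigma'^2$. Start from $T_0 = S$. If $T_t$ is robust-stable at $\sigma'^2$, stop. Otherwise, since $T_t$ is exit-stable, the only possible failure is that some $l \notin T_t$ can validly enter; set $T_{t+1} = T_t \cup \{l\}$. By the second step, $T_{t+1}$ is again exit-stable at $\sigma'^2$, and $|T_{t+1}| = |T_t| + 1$. Since sizes strictly increase and are bounded by $n$, the process terminates. It ends either at some robust-stable $T_t$, or at the grand coalition $[n]$, which is exit-stable and has no outsiders, so it is vacuously robust-stable. In both cases the terminal coalition has size at least $|S| = k$.

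There is no real obstacle here. The one point to check carefully is the identification in the second step: the entrant's condition and the incumbents' conditions after entry must both be evaluated at the readjusted profile $\vec\epsilon^*(T\cup\{l\})$, so that valid entry coincides exactly with exit-stability of the enlarged coalition. This is precisely where the argument would break for Nash stability, which instead uses the frozen profile $\vec\epsilon^*(T)$; that explains the contrast with Lemma~\ref{lem:non_monotone_var}.
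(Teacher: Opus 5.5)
Your proof is correct, but it is organized differently from the paper's.

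The paper builds a single, variance-independent chain $S=S_0\subset S_1\subset\cdots\subset S_{n-k}=[n]$. Each step adds the outsider minimizing $\sum_{j}(c_j^2/2)^{1/3}+2\max_j(c_j^2/2)^{1/3}$ over the augmented set. This argmin choice makes ``no valid entry into $S_i$'' collapse to the single inequality $\sigma^2<T_{i+1}$. Hence $S_i$ is robust-stable exactly for $\sigma^2\in[T_i,T_{i+1})$, where $T_i$ is the exit threshold of $S_i$ and $T_{n-k+1}=\infty$. The lemma then follows by covering $[T_0,\infty)$ with these possibly empty intervals, e.g.\ by taking the largest $m$ with $T_m\le\sigma'^2$.

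You instead fix $\sigma'$ and run valid-entry dynamics from $S$, admitting an arbitrary valid entrant at each step. You use only two facts: exit-stability is monotone in $\sigma^2$, because $\vec\epsilon^*(T)$ does not depend on $\sigma^2$; and a valid entry into $T$ is verbatim exit-stability of $T\cup\{l\}$. Finiteness and the vacuous robust stability of $[n]$ finish the argument.

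Your route is shorter. It avoids both the specific entrant rule and the interval-covering step, which the paper only sketches via ``the end-point and the start-point of consecutive intervals coincide.'' It also mirrors the ascending argument the paper itself uses for Claim~\ref{clm:eq_largest}.

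The paper's construction buys a more global statement in return. One nested chain serves every $\sigma'\ge\sigma$ at once, together with the exact variance ranges on which each member is robust-stable. So the selected coalition can be taken nested and nondecreasing in $\sigma'$. Your pointwise argument does not give this unless you also fix a deterministic entrant rule such as the paper's argmin.
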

\begin{proof}
Let $S \subseteq [n]$ be a robust equilibrium coalition of size $k$ at variance level $\bar \sigma^2$. First, we define $S_0 = S$. Then we note that we can construct a unique finite sequence of sets $\{S_i\}_{i \geq 1}$ with $|S_i| = |S_{i-1}|+1$ and $S_i = S_{i-1} \cup \{k'(i)\}$, where: 
\[
k'(i) = \arg\min_{l \notin S_{i-1}} \left[ \sum_{j \in S_{i-1}\cup \{l\}} (c_j^2/2)^{1/3} + \max_{j \in S_{i-1} \cup \{l\}} 2(c_j^2/2)^{1/3}  \right] \quad \forall~i \in [n-k]. 
\]
For $n-k \geq i \geq 0$, define: 
\begin{align*}
      T_i := \left(\sum_{j \in S_i}(c_j^2/2)^{1/3} + 2\max_{j \in S_i}(c_j^2/2)^{1/3} \right) \frac{|S_i|^{(2\alpha+1)/3}}{(|S_{i}|-1)},
\end{align*}
with $T_{n-k+1} = \infty$. According to definition of robust stability, for $i\geq 1$, $S_i$ would be a robust equilibrium coalition for all $\sigma \in \left[\sqrt{T_i}, \sqrt{T_{i+1}} \right)$, provided the interval is feasible (otherwise, $S_i$ would not be a robust equilibrium coalition). In order to complete our proof, we need to argue that for any $\sigma > \bar \sigma$, there must exist some $m$ with $n-k \geq m \geq 0$ for which the interval $\left[ \sqrt{T_m}, \sqrt{T_{m+1}} \right)$ is feasible and $\sigma$ belongs to that interval. This follows directly because the end-point and the start-point of consecutive intervals coincide and $\left[\sqrt{T_{n-k}}, \sqrt{T_{n-k+1}} \right)$ is always feasible (and of infinite width). Further, $|S_m| \geq |S_0| = k$ by construction of the sequence $\{S_i\}$ since $m \geq 0$. This concludes the proof of the lemma. 
\end{proof}

The intuition here is that as we increase variance from $\sigma^2$, even after the original coalition $S$ dissolves under the Nash definition, $S$ will still continue to exist (with the originally chosen $\epsilon$ levels as $\sigma^2$) under the robust definition --- until the new player $j$ can make a \textit{valid entry into $S$}. Let $\sigma_{tr}^2$ be the transition value of the variance at which player $j$ can finally make valid entry. If the variance is increased all the way up to $\sigma_{tr}^2$, $S$ will finally cease to be robust-stable, but the augmented coalition $S \cup \{j\}$ will immediately be robust-stable (with all players re-adjusting their $\epsilon$ levels to $\vec \epsilon^*(S \cup \{j\})$). This illustrates, at an intuitive level, why robust stability has monotonicity of existence and size in variance.

\subsection{Nash-stability under ``Well-Separated" Costs}

\begin{claim}\label{clm:separated_cost_Nash}
Consider a setting where player costs are ``well-separated", i.e., $c_i \geq 2 c_{i-1}$ for all $i \geq 2$. Then, 
\begin{itemize}
    \item any Nash-stable coalition that exists must also be downward-closed, 
    \item all Nash-stable coalitions that exist at any given variance level $\sigma^2$ can be found in $\mathcal{O}(n)$ time where $n$ is the number of players.  
\end{itemize}
\end{claim}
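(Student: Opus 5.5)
The plan is to work entirely from the explicit equilibrium conditions in Claim~\ref{clm:eq_conditions_Nash}. Throughout, write $w_j \triangleq (c_j^2/2)^{1/3}$. Well-separation $c_i \geq 2c_{i-1}$ gives $w_i \geq 2^{2/3} w_{i-1}$. Consequently, for any $i$,
\[
\sum_{j<i} w_j \leq w_i \sum_{t\geq 1} 2^{-2t/3} = \frac{w_i}{2^{2/3}-1},
\]
and $1/(2^{2/3}-1) \approx 1.70$. So each weight dominates (up to a constant below $2$) the total weight of all cheaper players.

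\textbf{First bullet (downward-closedness).} Suppose $S$ with $|S|=k$ is Nash-stable but not downward-closed. Then there are $l \notin S$ and $m \in S$ with $c_l < c_m$. Take $m = \arg\max_{j\in S} c_j$, so that $w_m = \max_{j \in S} w_j$ and $l < m$. The exit condition and the entry condition for this particular outsider $l$ are
\[
\sigma^2 (k-1) \geq k^{(2\alpha+1)/3}\Big(\textstyle\sum_{S} w_j + 2w_m\Big), \qquad \sigma^2 k(k+1) < 3w_l (k+1)^{(2\alpha+4)/3} + \Big(\textstyle\sum_S w_j\Big) k^{(2\alpha+4)/3}.
\]
I will derive a contradiction. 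Use the exit inequality to upper bound $\sigma^2$, substitute it into the entry inequality's left side, and then show that the resulting inequality fails. The ingredients are:
\begin{itemize}
\item $w_l \leq 2^{-2/3} w_m$ (since $c_l \leq c_m/2$);
\item $(1+1/k)^{(2\alpha+4)/3} \leq (3/2)^2$ for $k\geq 2$;
\item $\sum_S w_j \geq w_m$.
\end{itemize}

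Concretely, multiply the exit condition by $k(k+1)/(k-1)$. The requirement then becomes, after dividing by $k^{(2\alpha+4)/3}$,
\[
\frac{k+1}{k-1}\Big(\textstyle\sum_S w_j + 2w_m\Big) \;\geq\; 3 w_l (1+1/k)^{(2\alpha+4)/3} + \textstyle\sum_S w_j .
\]
The left side exceeds $\sum_S w_j + 2w_m$, so it suffices to show $2w_m \geq 3w_l(1+1/k)^{(2\alpha+4)/3}$.

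\textbf{Main obstacle.} This constant check is where the argument may fail. With $w_l \leq 0.63\,w_m$ and the factor $(1+1/k)^{(2\alpha+4)/3}$ up to $2.25$ at $k=2$, $\alpha=1$, the crude bound gives $3\cdot 0.63\cdot 2.25 \approx 4.25$ against $2$, which does not close the inequality. To repair it:
\begin{itemize}
\item exploit the extra slack $\frac{2}{k-1}(\sum_S w_j + 2w_m) \geq \frac{6w_m}{k-1}$, which is large for small $k$;
\item for large $k$, the growth factor tends to $1$ and the slack shrinks, so I need $2 \geq 3\cdot 2^{-2/3}\approx 1.89$, which holds asymptotically;
\item treat small $k$ by the explicit slack term and check the intermediate $k$ numerically or by a monotonicity argument in $k$.
\end{itemize}
If this still fails somewhere, I would instead compare $S$ with the swapped set $S' = S\setminus\{m\}\cup\{l\}$ and argue via Claim~\ref{clm:eq_conditions_Nash} directly.

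\textbf{Second bullet (computation).} By the first bullet, only $S_2,\dots,S_n$ are candidates. Precompute prefix sums $P_k=\sum_{j\le k} w_j$. For $S_k$, the maximum weight is $w_k$ and the cheapest outsider is $k+1$. Each Nash condition of Claim~\ref{clm:eq_conditions_Nash} therefore reduces to an $O(1)$ check using $P_k$, $w_k$ and $w_{k+1}$; for $k=n$ only the exit condition is checked. This gives $O(n)$ total time, given the sorted costs.
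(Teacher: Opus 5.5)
\textbf{Gap in the first bullet.} Your skeleton is the paper's: play the exit condition of the costliest member $m\in S$ against the entry condition of a cheaper outsider $l$, and use well-separation to get $3w_l \le 3\cdot 2^{-2/3} w_m < 2w_m$. But the first bullet is not actually proved. The reduction ``the left side exceeds $\sum_S w_j+2w_m$, so it suffices to show $2w_m\ge 3w_l(1+1/k)^{(2\alpha+4)/3}$'' discards exactly the factor that makes the argument work. As you note yourself, the resulting sufficient condition is false (e.g.\ $k=2$, $\alpha=1$, $c_l=c_m/2$). What follows it is an unexecuted case analysis ending in a vague fallback. A minor point: the exit condition gives a \emph{lower} bound $\sigma^2\ge k^{(2\alpha+1)/3}(\sum_S w_j+2w_m)/(k-1)$, not an upper bound, although your algebra uses it in the correct direction.

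\textbf{The missing step.} The factor $\frac{k+1}{k-1}$ that you threw away dominates the growth factor uniformly in $k$ and $\alpha$:
\[
\frac{k+1}{k-1} \;>\; \frac{(k+1)^2}{k^2} \;\ge\; \Big(1+\frac{1}{k}\Big)^{(2\alpha+4)/3} =: r \qquad (k\ge 2,\ \alpha\le 1).
\]
The first inequality is just $k^2>k^2-1$. Hence the left side of your reduced inequality satisfies
\[
\frac{k+1}{k-1}\Big(\sum_S w_j+2w_m\Big) \;>\; r\sum_S w_j+2rw_m \;>\; \sum_S w_j + 3rw_l,
\]
using $r\ge 1$ and $2w_m>3w_l$. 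This is the required contradiction, with no case split on $k$.

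\textbf{Comparison with the paper.} This is the paper's argument in a different normalization. The paper inflates $k^{(2\alpha+4)/3}$ to $(k+1)^{(2\alpha+4)/3}$ in the $\sum_S w_j$ term of the entry condition. It then uses $3w_l<2w_m$ to obtain
\[
\frac{k}{(k+1)^{(2\alpha+1)/3}} < \frac{\sum_S w_j+2w_m}{\sigma^2}\le \frac{k-1}{k^{(2\alpha+1)/3}},
\]
which contradicts the monotonicity of $x\mapsto (x-1)/x^{(2\alpha+1)/3}$. The inequality $\frac{k+1}{k-1}\ge r$ is equivalent to that function being at least as large at $k+1$ as at $k$.

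Your slack-based repair would also have closed. Indeed, $2+\frac{6}{k-1} > 2(1+\frac{1}{k})^2 > 3\cdot 2^{-2/3}(1+\frac{1}{k})^2$ for all $k\ge 2$, since $\frac{6}{k-1}>\frac{6}{k}\ge\frac{4}{k}+\frac{2}{k^2}$. But it had to be carried out. Your second bullet is fine, and the prefix-sum remark supplies the detail needed for $\mathcal{O}(n)$ that the paper leaves implicit.
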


\begin{proof}
We will prove the first part by contradiction. Suppose, given a well-separated cost profile $\vec{c}$ and variance $\sigma^2$, there exists a Nash stable coalition $S$ of size $k$. Let player $l$ be the player with the largest cost $c_l$ in $S$. Further, for the sake of contradiction, we assume that there exists some players from $1$ through $l$ who are not included in $S$. Let $m$ be the smallest index of players excluded from $S$. Clearly, $m < l$. 

Now, since $S$ is Nash-stable, no player, including player $l$, has an incentive to leave $S$. This means that: 
\[
 \frac{(k- 1)}{k^{(2\alpha+1)/3}} \geq \left( \frac{ \sum_{j \in S}(c_j^2/2)^{1/3} + 2(c_l^2/2)^{1/3} }{\sigma^2}  \right).
\]
Similarly, no player currently excluded from $S$, in particular player $m$, has an incentive to join the coalition. This further implies that: 
\begin{align*}
  k(k+1) &< \left(\frac{3(c_m^2/2)^{1/3}}{\sigma^2} \right)(k+1)^{(2\alpha+4)/3} + \left( \frac{\sum_{j \in S}(c_j^2/2)^{1/3}}{\sigma^2} \right)k^{(2\alpha+4)/3} \\
  &< \left( \frac{3(c_m^2/2)^{1/3} + \sum_{j \in S}(c_j^2/2)^{1/3} }{\sigma^2} \right) (k+1)^{(2\alpha+4)/3} \\
  &< \left( \frac{2(c_l^2/2)^{1/3} + \sum_{j \in S}(c_j^2/2)^{1/3} }{\sigma^2} \right) (k+1)^{(2\alpha+4)/3}. \quad (\text{using the well-separated property of $\mathbf{c}$}) \\
\end{align*}
Combining both conditions, we have: 
\[
      \left( \frac{2(c_l^2/2)^{1/3} + \sum_{j \in S}(c_j^2/2)^{1/3} }{\sigma^2} \right)  \leq \frac{(k-1)}{k^{(2\alpha+1)/3}} < \frac{k}{(k+1)^{(2\alpha+1)/3}} < \left( \frac{2(c_l^2/2)^{1/3} + \sum_{j \in S}(c_j^2/2)^{1/3} }{\sigma^2} \right),
\]
which is clearly a contradiction. The key step above is noting that the function $f(x) = \frac{(x-1)}{x^{(2\alpha+1)/3}}$ is increasing in $x$ for all $x \geq 1$ and for all $\alpha \in [-1, 1]$. This concludes the proof that any Nash-stable coalition must be downward closed. \\

For any given size, there exists a unique coalition which is downward-closed. This implies that in order to find all possible Nash-stable coalitions at given variance level $\sigma^2$, we need to check only through sets of the form $S_k$ for $k \in \{2,3,....n\}$ which can be done in linear time in the number of players.
\end{proof}

\subsection{More Structural Properties of Robust Stability}

\begin{claim}\label{clm:eq_largest}
Given a cost profile $\vec c$ and $\alpha \in [-1, 1]$, let $k_{eq,max}$ be the size of the largest robust-stable coalition at variance level $\sigma^2$ (assuming it exists). If $\mathcal{S}_{max}$ represents the set of all robust-stable coalitions of size $k_{eq,max}$, then one of the coalitions in $\mathcal{S}_{max}$ must be downward-closed. 
\end{claim}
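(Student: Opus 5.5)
We would prove Claim~\ref{clm:eq_largest} by an exchange argument based on the characterization of robust stability in Claim~\ref{clm:eq_conditions_robust}. Write $w_j \triangleq (c_j^2/2)^{1/3}$; these weights are nondecreasing in $j$. For any set $T$ define
\[
\Phi(T) \triangleq \sum_{j\in T} w_j + 2\max_{j\in T} w_j .
\]
Let $k=k_{eq,max}$. Fix any $S\in\mathcal{S}_{max}$ and compare it with the downward-closed set $S_k$. Among all $k$-sets, $S_k$ simultaneously minimizes $\sum_{j} w_j$ and $\max_j w_j$, so $\Phi(S_k)\le \Phi(S)$.

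\textbf{Step 1 (no exit).} The exit condition for $S$ reads $\frac{k-1}{k^{(2\alpha+1)/3}} \ge \Phi(S)/\sigma^2$. Since $\Phi(S_k)\le\Phi(S)$, the same inequality holds for $S_k$. Hence no member of $S_k$ wants to leave.

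\textbf{Step 2 (no valid entry), case $k<n$.} We need $\frac{k}{(k+1)^{(2\alpha+1)/3}} < \min_{l\notin S_k}\Phi(S_k\cup\{l\})/\sigma^2$. Here we use maximality of $k$ rather than comparing with $S$ directly. Suppose the entry condition fails for $S_k$. Then some $l$ satisfies $\frac{k}{(k+1)^{(2\alpha+1)/3}} \ge \Phi(S_k\cup\{l\})/\sigma^2$. Minimizing over $l$ gives the same inequality for $l=k+1$, because $\Phi(S_k\cup\{l\})$ is smallest at $l=k+1$. But this inequality is exactly the no-exit condition for $S_{k+1}$, with size $k+1$ in place of $k$.

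We then build upward. Let $T_0=S_{k+1}$, and while some outsider can validly enter, add the cheapest such outsider. Each added set passes the no-exit test by definition of valid entry. The process terminates at some set of size at most $n$ that satisfies no exit and admits no valid entry, i.e., a robust-stable coalition of size $\ge k+1$. This contradicts the maximality of $k=k_{eq,max}$.

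This argument mirrors the chain construction in the proof of Lemma~\ref{lem:robust_weak_monotonicity}. It only requires that each augmented set satisfies the no-exit inequality, which holds by definition of valid entry, so the process never produces a set that fails stability. Therefore the entry condition holds for $S_k$.

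\textbf{Step 2, case $k=n$.} Here $S=S_n$ is itself downward-closed and there is nothing to prove.

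\textbf{Conclusion.} By Steps 1 and 2, $S_k$ is robust-stable of size $k_{eq,max}$, so $S_k\in\mathcal{S}_{max}$. The main subtlety is that Step 2 cannot be inherited from $S$ directly. Because $\Phi$ is monotone in the costs, $S_k\cup\{l\}$ has a smaller $\Phi$ than $S\cup\{l\}$, which makes valid entry into $S_k$ easier than into $S$. Maximality of $k$, combined with the upward-chain argument, is precisely what closes this gap.
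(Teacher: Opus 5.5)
Your proposal is correct and takes essentially the same route as the paper. Both arguments show that $S_{k_{eq,max}}$ inherits the no-exit condition from $S$, because the cost term $\sum_j (c_j^2/2)^{1/3} + 2\max_j (c_j^2/2)^{1/3}$ is smallest on the downward-closed set. Both then get the no-valid-entry condition from the maximality of $k_{eq,max}$: if entry into $S_{k_{eq,max}}$ were possible, you grow a chain of sets, each satisfying no-exit by the definition of valid entry, until it stops at a robust-stable coalition larger than $k_{eq,max}$. The only differences are cosmetic. You add ``the cheapest valid entrant'' at each step, whereas the paper always adds the next-indexed player; these coincide for downward-closed sets. You also handle the trivial case $k_{eq,max}=n$ explicitly.
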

\begin{proof}
Let $S$ be a robust stable coalition of size $k_{eq,max}$ (there exists at least one by assumption, there may be more). 
If $S = S_{k_{eq,max}}$, we are done trivially. So, let $S \neq S_{k_{eq,max}}$, i.e., $S$ is not downward closed. Then, we will argue that there must exist at least one more robust equilibrium coalition at $\sigma^2$ which is of the form of $S_{k_{eq,max}}$. For any $U \subseteq [n]$, we define:
\[
      f(U) = \sum_{j \in U}(c_j^2/2)^{1/3} + \max_{j \in U}2(c_j^2/2)^{1/3}.
\]
For the sake of contradiction, suppose that $S_{k_{eq,max}}$ is not a robust equilibrium coalition at $\sigma^2$. Therefore, $S_{k_{eq,max}}$ must violate at least one of the equilibrium conditions. Now, since $S$ is an equilibrium coalition at $\sigma^2$, it must be that: 
\[
    \sigma^2\frac{(k_{eq,max}-1)}{k_{eq,max}^{(2\alpha+1)/3}} \geq f(S).  
\]
We can also directly observe that the $f(S) \geq f(S_{k_{eq,max}})$. This means that: 
\[
        \sigma^2\frac{(k_{eq,max}-1)}{k_{eq,max}^{(2\alpha+1)/3}} \geq f(S_{k_{eq,max}}),
\]
i.e., $S_{k_{eq,max}}$ already satisfies the first equilibrium condition (no player in $S_{k_{eq,max}}$ has incentive to leave). But this implies that
$S_{k_{eq,max}}$ must violate the other equilibrium condition, i.e., player $k_{eq,max}+1$ can join without the dissolution of the coalition, 
\[
       f(S_{k_{eq,max}+1}) \leq \sigma^2 \frac{k_{eq,max}}{(k_{eq,max}+1)^{(2\alpha+1)/3}}. 
\]
Note that the above would be one of the conditions required for the augmented coalition $S_{k_{eq,max}+1}$ to be robust-stable (which is already satisfied now). If the other condition is also satisfied, we get a robust equilibrium coalition of size $k_{eq,max}+1 > k_{eq,max}$. However, if the other condition is not satisfied, this means that the next player can also be added. We continue reasoning like this, either obtaining an equilibrium at some point and stopping, or adding the next player. 

Since the number of players is finite, this process eventually has to terminate (because the grand coalition has a one-sided equilibrium condition). This implies that when we stop, we will have a robust-stable coalition of size strictly greater than $k_{eq,max}$ which is additionally downward closed. But this would contradict our initial assumption that $k_{eq,max}$ is the size of the largest robust-stable coalition at level $\sigma^2$. This concludes the proof.  
\end{proof}

The above result tells us that the largest robust-stable coalition has some additional structure. This means that given $(\vec c, \sigma^2)$, at least the size of the largest robust-stable coalition can be found in linear time. For smaller values of variance, this can significantly prune the search space of all possible robust-stable coalitions and enable faster search.


\section{Proofs of Results in Section~\ref{sec:model}}\label{app:model}
\subsection{Proof of Claim~\ref{clm:Nash_stronger} (Nash-stability $\implies$ Robust-stability)}
Both our notions of stability have an identical first condition (no player in $S$ wants to leave unilaterally). Therefore, 
in order to complete the proof, we need to show that any set $S \subseteq [n]$ with $|S| \triangleq k \geq 2$ which satisfies: 
\[
    k(k+1) < \left(\frac{\min_{l \notin S} 3(c_l^2/2)^{1/3}  }{\sigma^2}\right) (k+1)^{(2\alpha+4)/3} +  \left(\frac{\sum_{j \in S}(c_j^2/2)^{1/3} }{\sigma^2}\right) k^{(2\alpha+4)/3},
\]
must also satisfy:
\[
   .\frac{k}{(k+1)^{(2\alpha+1)/3}}  < \min_{l \notin S} \left[ \frac{ \sum_{j \in S\cup \{l\}} (c_j^2/2)^{1/3} + 2\max_{j \in S \cup \{l\}} (c_j^2/2)^{1/3} }{\sigma^2} \right].
\]

We have: 
\begin{align*}
    &k(k+1) < \left(\frac{\min_{l \notin S} 3(c_l^2/2)^{1/3}  }{\sigma^2}\right) (k+1)^{(2\alpha+4)/3} +  \left(\frac{\sum_{j \in S}(c_j^2/2)^{1/3} }{\sigma^2}\right) k^{(2\alpha+4)/3} \\
    \implies & k(k+1) < \left(\frac{\min_{l \notin S} 3(c_l^2/2)^{1/3}  }{\sigma^2}\right) (k+1)^{(2\alpha+4)/3} +  \left(\frac{\sum_{j \in S}(c_j^2/2)^{1/3} }{\sigma^2}\right) (k+1)^{(2\alpha+4)/3} \quad \text{(since $\alpha >-2$)}\\
    \implies &k(k+1) < \left( \frac{ \sum_{j \in S} (c_j^2/2)^{1/3} + \min_{l \notin S} 3(c_l^2/2)^{1/3} }{\sigma^2} \right) (k+1)^{(2\alpha+4)/3} \\
    \implies &\frac{k}{(k+1)^{(2\alpha+1)/3}} < \min_{l \notin S} \left( \frac{ \sum_{j \in S \cup \{l\} }(c_j^2/2)^{1/3} + 2(c_l^2/2)^{1/3} }{\sigma^2} \right) \\
    \implies &\frac{k}{(k+1)^{(2\alpha+1)/3}} < \min_{l \notin S} \left[ \frac{ \sum_{j \in S\cup \{l\}} (c_j^2/2)^{1/3} + 2\max_{j \in S \cup \{l\}} (c_j^2/2)^{1/3} }{\sigma^2} \right].
\end{align*}


\section{Proofs of Results in Section~\ref{sec:baseline}}\label{app:centralized}

\subsection{Proof of Lemma~\ref{lem:centralized}}
Let $S \subseteq [n]$ be a coalition of size $k \geq 2$ picked the designer at $(\vec c, \sigma^2)$. Let $\vec \epsilon \in \mathbb{R}_{> 0}^k$ be the privacy levels chosen by the designer for players in $S$. Then the social cost at $(S, \vec \epsilon)$ would be given by: 
\begin{align*}
    \textsf{SC}(S, \vec \epsilon) &= (n-k)\sigma^2 + \sum_{i \in S} \left[ \frac{\sigma^2}{k} + \frac{2}{k^2}\left(\sum_{j \in S}\frac{1}{\epsilon_j^2}\right) + c_i(k)\epsilon_i \right] \\
    &= (n-k)\sigma^2 + \sigma^2 + \frac{2}{k}\left(\sum_{j \in S}\frac{1}{\epsilon_j^2}\right) + k^{\alpha}\left(\sum_{i \in S} c_i \epsilon_i\right) \\
    &= (n+1-k)\sigma^2 + \sum_{i \in S}\left( \frac{2}{k \epsilon_i^2} + k^{\alpha}c_i \epsilon_i \right).  
\end{align*}
Note that the social cost function is separable in $\epsilon_i$'s, so the designer can optimize each of them individually to achieve the best social cost at coalition $S$. The designer's optimal choice of $\epsilon_i$ for each player $i \in S$ can be uniquely obtained by solving the first order condition of the convex function $v(\epsilon_i) = \frac{2}{k \epsilon_i^2} + k^{\alpha}c_i \epsilon_i$: 
\[
           -\frac{4}{k \epsilon_i^3} + k^{\alpha}c_i = 0 \iff \epsilon_i = \left( \frac{4}{k^{(1+\alpha)}c_i} \right)^{1/3}. 
\]
Therefore, the best social cost achievable at coalition $S$ (with $|S| = k$) is given by:
\begin{align*}
    &(n+1-k)\sigma^2 + \sum_{i \in S}\left( \frac{2}{k \epsilon_i^2} + k^{\alpha}c_i \epsilon_i \right) \\
    = &(n+1)\sigma^2 - k\sigma^2 + \frac{1}{k} \sum_{i \in S}\left(\frac{2}{\epsilon_i^2} + k^{(1+\alpha)}c_i \epsilon_i \right) \\
    = &(n+1)\sigma^2 - k\sigma^2 + \frac{1}{k} \sum_{i \in S} \left( 2 \cdot \frac{ k^{2(1+\alpha)/3}c_i^{2/3} }{2^{4/3}} + k^{(1+\alpha)}c_i \cdot \frac{2^{2/3}}{ k^{(1+\alpha)/3}c_i^{1/3} } \right) \\
    = &(n+1)\sigma^2 - k\sigma^2 + \frac{1}{k} \sum_{i \in S} \left( \frac{ k^{2(1+\alpha)/3}c_i^{2/3} }{ 2^{1/3}} + 2^{2/3}\cdot  k^{2(1+\alpha)/3}c_i^{2/3}   \right) \\
    = & (n+1)\sigma^2 - k\sigma^2 + \left(\frac{3}{2^{1/3}}\right) k^{2(1+\alpha)/3}\left(\sum_{i \in S}c_i^{2/3}/k \right) \\
    = & (n+1)\sigma^2 - \left[ k\sigma^2 - \left(\frac{3}{2^{1/3}}\right) k^{2(1+\alpha)/3}\left(\sum_{i \in S}c_i^{2/3}/k \right) \right].
\end{align*}
Further, note that out of all coalitions of size $k$, the social cost would be minimized at the coalition which minimizes $\sum_{i \in S}c_i^{2/3}$ which clearly must be the coalition consisting of the $k$ smallest cost players, i.e., $S_k$ (which is downward closed). Therefore, putting everything together, the social cost at the optimal coalition of size $k$ would be given by: 
\[
         \textsf{SC}_c(k) = (n+1)\sigma^2 - \left[ k\sigma^2 - \left(\frac{3}{2^{1/3}}\right) k^{2(1+\alpha)/3}\left(\sum_{i \in S_k}c_i^{2/3}/k \right) \right].
\]
Note that although the social cost function $\textsf{SC}(\cdot)$ usually takes a coalition and an $\epsilon$-level vector as its inputs, the function $\textsf{SC}_c(\cdot)$ represents the social cost at the optimal coalition of size $k$, which is uniquely defined just by $k$. We can similarly compute the expression for the estimator variance at the optimal coalition of size $k$ as follows:
\begin{align*}
    \textsf{Var}_c(\hat \mu~|~k) = \frac{1}{k}\left[\sigma^2 + \frac{ k^{2(1+\alpha)/3}}{ 2^{1/3}} \left(\sum_{i \in S_k}c_i^{2/3}/k \right)  \right].
\end{align*}
This concludes the proof of the lemma. 

\subsection{Proof of Theorem~\ref{thm:kstar_c}}
Given ($\vec c, \sigma^2$), the social cost at the optimal coalition of size $k$ for $n \geq k \geq 2$ under full centralization is given by:
\[
          \textsf{SC}_c(k) = (n+1)\sigma^2 - \underbrace{\left[k\sigma^2 - \left(\frac{3}{2^{1/3}}\right) k^{2(\alpha+1)/3}\left(\sum_{i \in S_k}c_i^{2/3}/k \right)  \right]}_{g(k)}. 
\]
Further, for $k = 0$ (the empty coalition), the social cost takes a value of $n\sigma^2$. In order to find $k^*$ as defined in Equation~\eqref{opt:kstar}, we thus need to find where the function $g(k)$ is maximized for different regimes of $\alpha$:

\begin{enumerate}
    \item $\alpha \in \left[-1, \frac{1}{2}\right)$: First, we note that since all $c_i$'s come from the bounded interval $[c_{min}, c_{max}]$ with $c_{min} > 0$, we must have that $c_{min}^{2/3} \leq \sum_{i \in S_k}c_i^{2/3}/k \leq c_{max}^{2/3}$ for all $k \in \{2,3...n\}$. This also implies that for $k \geq 2$, $k \in \mathbb{Z}$, $g(k) \geq g_1(k)$ where $g_1(k)$ is given by: 
    \[
        g_1(k) = \left[k\sigma^2 - \left(\frac{3}{2^{1/3}}\right) k^{2(\alpha+1)/3}c_{max}^{2/3}  \right].
    \]
    Further, both $g(k)$ and $g_1(k)$ have the same leading term. Therefore, if we can show that $g_1(k)$ is monotonically increasing in $k$ once $k$ is large enough, $g(k)$ must also be monotonically increasing in $k$ for large $k$. The eventual monotonicity of $g_1(k)$ is easy to verify: the leading term $k\sigma^2$ grows linearly, while the second term $\left(\frac{3}{2^{1/3}}\right) k^{2(\alpha+1)/3}c_{max}^{2/3}$ only grows sub-linearly for $\alpha < \frac{1}{2}$. Therefore, for large $k$, the first term dominates and the overall function must increase monotonically. 
    So when $k \in \{2,3,...n\}$, choosing $k = n$ offers the best social cost. Plugging in, we obtain:
    \[
             (n+1)\sigma^2 - g(n) = \sigma^2 + \left(\frac{3}{2^{1/3}}\right) n^{2(\alpha+1)/3}\left(\sum_{i \in S_n}c_i^{2/3}/n \right),
    \]
    which is also less than $n\sigma^2$ (so, better than the empty coalition) for sufficiently large $n$. This completes the proof that $k^* = n$ for $\alpha < \frac{1}{2}$ which leads to $\textsf{SC}_c(k^*) = \Theta(n^{2(\alpha+1)/3})$ and $\textsf{Var}_c(\hat \mu~|~k^*) = \Theta(n^{(2\alpha-1)/3})$.
    \item $\alpha = \frac{1}{2}$: For $\alpha = \frac{1}{2}$, we have $g(k) = k \left[\sigma^2 - \left(\frac{3}{2^{1/3}}\right)\left(\sum_{i \in S_k}c_i^{2/3}/k \right) \right]$. Where $g(k)$ is maximized depends on the relative order of $\sigma^2$ and $\sum_{i \in S_k}c_i^{2/3}/k$ which determines the overall sign of the term. But, in all cases, due to $\sum_{i \in S_k}c_i^{2/3}/k$ being $\Theta(1)$, the social cost function is effectively linear in $k^*$. Further, $k^* = 0$ and $k^* = n$ both achieve social cost of order $\Theta(n)$. So any intermediate value of $k^*$ would also achieve $\textsf{SC}_c(k^*) = \Theta(n)$. For the same reason, we have $\textsf{Var}_c(\hat \mu~|~k^*) = \Theta(1)$. 
    \item $\alpha \in \left(\frac{1}{2}, 1\right]$: Finally, when $\alpha > \frac{1}{2}$, the leading term in $g(k)$ grows linearly in $k$ while the second term grows super-linearly (noting again that $\sum_{i \in S_k}c_i^{2/3}/k = \Theta(1)$). Therefore, overall, $g(k)$ either decreases monotonically in $k$ or it is concave in $k$ (so it will achieve its maxima somewhere before decreasing monotonically). In the first case when $g(k)$ is monotonically decreasing, $k^* = 0$. However when $g(k)$ is concave, $k^* = \Theta(1)$ and this follows from a similar argument we made in the $\alpha < \frac{1}{2}$ regime  --- $g(k)$ can be squeezed between two concave functions $g_1(k)$ and $g_2(k)$ both of which attain their maxima at some value of the order of $\Theta(1)$. Putting all cases together, we obtain $\textsf{SC}_c(k^*) = \Theta(n)$ and $\textsf{Var}_c(\hat \mu~|~ k^*) = \Theta(1)$ when $\alpha > \frac{1}{2}$. This concludes the proof of the theorem.     
\end{enumerate}


\section{Proofs of Results in Section~\ref{sec:autonomous}}\label{app:equilibrium}

\subsection{Proof of Lemma~\ref{lem:autonomous_eps}}
Given $(\vec c, \sigma^2)$, $S \subseteq [n]$ is some fixed coalition of size $k \geq 2$, where all players in $S$ choose their own privacy levels autonomously. Let $\vec \epsilon$ represent the vector of privacy levels chosen by the players. 

For each player $i \in S$, their burden of participation is given by: 
\[
           B_i(c_i, \vec \epsilon) = \frac{\sigma^2}{k} + \frac{2}{k^2}\left( \frac{1}{\epsilon_i^2} + \sum_{j \in S, j \neq i}\frac{1}{\epsilon_j^2} \right) + k^{\alpha} c_i \epsilon_i.
\]
Since each player $i \in S$ is burden-minimizing, their optimal choice of privacy level is obtained by solving:
\[
            \epsilon_i^* = \arg \min_{\epsilon_i} ~B_i(c_i, (\epsilon_i, \vec \epsilon_{-i})).
\]
Note that $B_i(\cdot)$ is additively separable in the $\epsilon$'s of different players, so $\epsilon_i^*$ does not depend on the decisions of other players in $S$. Further, $B_i(\cdot)$ is convex in $\epsilon_i$, so $\epsilon_i^*$ is unique and can be obtained as follows: 
\[
         \frac{\partial B_i(\cdot)}{\partial \epsilon_i} \big\vert_{\epsilon_i^*} = 0 \iff -\frac{4}{k^2 \epsilon_i^{*3} } + k^{\alpha}c_i = 0 \iff \epsilon_i^* = \left(\frac{4}{k^{(2+\alpha)}c_i } \right)^{1/3}. 
\]
Importantly, note that although $\epsilon_i^*$'s were chosen selfishly by players to minimize their individual burdens, $\vec \epsilon^*$ still constitutes the unique Nash equilibrium of the sub-game where players in $S$ try to choose their privacy levels. We can now derive the expressions for the social cost and the estimator variance at coalition $S$. Since $\vec \epsilon^*(S)$ is uniquely defined given $S$, we can define the functions $\textsf{SC}_d(S)$ and $\textsf{Var}_d(\hat \mu~|~S)$ to capture social cost and variance respectively in the decentralized setting, by dropping the dependence on the $\epsilon$-profile. Plugging in $\vec \epsilon^*(S)$, we obtain the desired expressions for $\textsf{SC}_d(S)$ and $\textsf{Var}_d(\hat \mu~|~S)$:
\begin{align*}
    \textsf{Var}_d(S) &= \frac{1}{k}\left[ \sigma^2 + \frac{2}{k}\left( \sum_{i \in S} \frac{1}{\epsilon_i^{*2} } \right) \right] \\
    &= \frac{1}{k}\left[ \sigma^2 + \frac{2}{k} \sum_{i \in S} \frac{ k^{2(2+\alpha)/3}c_i^{2/3} }{ 2^{4/3} }  \right] \\
    &= \frac{1}{k}\left[ \sigma^2 + \frac{k^{2(\alpha+2)/3} }{2^{1/3}}\left(\sum_{i \in S}c_i^{2/3}/k \right) \right].
\end{align*}
\begin{align*}
    \textsf{SC}_d(S) &= (n+1)\sigma^2 - \left[k \sigma^2 - \frac{1}{k}\sum_{i \in S}\left( \frac{2}{\epsilon_i^{*2} } + k^{1+\alpha}c_i \epsilon_i^* \right) \right] \\
    &= (n+1)\sigma^2 - \left[k\sigma^2 - \frac{1}{k}\sum_{i \in S}\left( 2\cdot \frac{ k^{2(\alpha+2)/3} c_i^{2/3} }{2^{4/3}} + k^{1+\alpha}c_i \cdot \frac{2^{2/3}}{ k^{(2+\alpha)/3}c_i^{1/3} }  \right) \right] \\
    &= (n+1)\sigma^2 - \left[k\sigma^2 - \frac{1}{k}\sum_{i \in S}\left( \frac{k^{2(\alpha+2)/3} c_i^{2/3} }{ 2^{1/3}} + 2^{2/3}\cdot k^{(2\alpha+1)/3}c_i^{2/3}  \right) \right] \\
    &= (n+1)\sigma^2 - \left[k \sigma^2 - \frac{1}{k}\cdot \frac{k^{(2\alpha+1)/3} }{2^{1/3} } \left(\sum_{i \in S} (k+2)c_i^{2/3}\right) \right] \\
    &= (n+1)\sigma^2 - \left[k \sigma^2 - \left(\frac{1}{2^{1/3}}\right)(k+2)k^{(2\alpha+1)/3}\left(\sum_{i \in S}c_i^{2/3}/k \right)   \right].
\end{align*}
This concludes the proof of the lemma.  

\subsection{Proof of Claim~\ref{clm:high_var}}
Given ($\vec c, \sigma^2$), we need to derive sufficient conditions on the number of players $n$ for the grand coalition to exist for different regimes of parameter $\alpha$. By Claim~\ref{clm:eq_conditions_Nash}, we know that the grand coalition exists (under both Nash stability and robust stability definitions) if and only if: 
\begin{align*}
       \sigma^2 \cdot \frac{(n-1)}{n^{(2\alpha+1)/3}} \geq \sum_{j \in [n]} (c_j^2/2)^{1/3} + 2(c_n^2/2)^{1/3}.
\end{align*}
When $\alpha > - \frac{1}{2}$, $n^{(2\alpha+1)/3}$ is an increasing function in $n$. Therefore, 
\begin{align*}
    &2 \leq n \leq \left( \frac{\sigma^2}{4(c_n^2/2)^{1/3}} \right)^{3/(2\alpha+1)}\\ 
    &\implies n^{(2\alpha+1)/3} \leq\frac{\sigma^2}{4(c_n^2/2)^{1/3}} \\
    &\implies n^{(2\alpha+1)/3} \leq\frac{\sigma^2}{(c_n^2/2)^{1/3}} \cdot \frac{(n-1)}{(n+2)} \quad \text{(since $\min_{n \geq 2} \frac{(n-1)}{(n+2)} = \frac{1}{4}$)} \\
    &\implies n^{(2\alpha+1)/3} \leq \frac{(n-1)\sigma^2}{ \sum_{j \in [n]} (c_j^2/2)^{1/3} + 2(c_n^2/2)^{1/3} } \quad \text{(since $\sum_{j \in [n]} (c_j^2/2)^{1/3} + 2(c_n^2/2)^{1/3} \leq (n+2)(c_n^2/2)^{1/3}$)} \\
    &\implies \sigma^2 \cdot \frac{(n-1)}{n^{(2\alpha+1)/3}} \geq \sum_{j \in [n]} (c_j^2/2)^{1/3} + 2(c_n^2/2)^{1/3}.
\end{align*}
However, when $\alpha < -\frac{1}{2}$, $n^{(2\alpha+1)/3}$ is a decreasing function of $n$. In that case, 
\begin{align*}
    &n \geq \max \left\{2, \left[ \frac{4(c_n^2/2)^{1/3}}{\sigma^2}\right]^{-3/(2\alpha+1)} \right\} \\
    &\implies n \geq \left[ \frac{4(c_n^2/2)^{1/3}}{\sigma^2}\right]^{-3/(2\alpha+1)} \\
    &\implies n^{(2\alpha+1)/3} \leq \left[ \frac{4(c_n^2/2)^{1/3}}{\sigma^2}\right]^{-1} \\
    &\implies n^{(2\alpha+1)/3} \leq \frac{\sigma^2}{4(c_n^2/2)^{1/3}} \quad \text{(remaining steps are identical)}\\
    &\implies n^{(2\alpha+1)/3} \leq\frac{\sigma^2}{(c_n^2/2)^{1/3}} \cdot \frac{(n-1)}{(n+2)}  \\
    &\implies n^{(2\alpha+1)/3} \leq \frac{(n-1)\sigma^2}{ \sum_{j \in [n]} (c_j^2/2)^{1/3} + 2(c_n^2/2)^{1/3} }\\
    &\implies \sigma^2 \cdot \frac{(n-1)}{n^{(2\alpha+1)/3}} \geq \sum_{j \in [n]} (c_j^2/2)^{1/3} + 2(c_n^2/2)^{1/3}.
\end{align*}
This concludes the proof of the claim.

\subsection{Proof of Claim~\ref{clm:equal_cost_Nash}}
Suppose, we are in the identical cost setting where in a coalition of size $k$, each player has a privacy cost $c(k) = c\cdot k^{\alpha}$. We want to show that there cannot exist any Nash equilibrium coalition of size $k$ for $2 \leq k < n$ for any value of $\sigma$. We will prove by contradiction. Suppose, there does exist one of size $k$ with $k \in \{2,...(n-1)\}$ at variance level $\sigma^2$. Then, the revised equilibrium conditions are as follows:

\paragraph{No player in $S$ wants to leave.}
\begin{align*}
      \frac{(k-1)}{k^{(2\alpha+1)/3}}  \geq  \frac{ (k+2)(c^2/2)^{1/3} }{\sigma^2}   \iff \frac{1}{k+2}\cdot \frac{(k-1)}{k^{(2\alpha+1)/3}} \geq \frac{(c^2/2)^{1/3}}{\sigma^2}. 
\end{align*}

\paragraph{No player in $S^c$ has an incentive to join.}
\begin{align*}
    &k(k+1) < \left( \frac{3(c^2/2)^{1/3}}{\sigma^2}\right) (k+1)^{(2\alpha+4)/3} + \left(\frac{k(c^2/2)^{1/3}}{\sigma^2}\right) k^{(2\alpha+4)/3} \\
    &\iff \frac{(c^2/2)^{1/3}}{\sigma^2} > \frac{k(k+1)}{ 3(k+1)^{(2\alpha+4)/3} + k^{(2\alpha+7)/3} }.
\end{align*}
The proof is completed by noting that the functions $g(x) = \frac{1}{x+2}\cdot \frac{(x-1)}{x^{(2\alpha+1)/3}}$ and $h(x) = \frac{x(x+1)}{ 3(x+1)^{(2\alpha+4)/3} + x^{(2\alpha+7)/3} }$ satisfy: 
\[
        g(x) < h(x) \quad \forall~ x \geq 2, ~\forall~\alpha \in [-1,1]. 
\]
Therefore, for any given $c$, there exists no $\sigma$ that satisfies both equilibrium conditions simultaneously, implying non-existence of Nash equilibrium coalitions of size $k\in \{2,...(n-1)\}$ for $\alpha \in [-1,1]$. This concludes the proof.

\subsection{Proof of Claim~\ref{clm:equal_cost_robust}}
Using the equilibrium conditions from Claim~\ref{clm:eq_conditions_robust}, there exists a robust equilibrium coalition of size $k$ at variance level $\sigma^2$ if and only if: 
\[
    \left(\frac{k+2}{k-1}\right)   \cdot k^{\frac{2\alpha+1}{3}} \leq 2^{1/3} \cdot \frac{\sigma^2}{c^{2/3}} <  \left(\frac{k+3}{k}\right)   \cdot (k+1)^{\frac{2\alpha+1}{3}}.
\]
Firstly, for $\alpha \leq -\frac{1}{2}$, the upper and lower bound are incompatible with each other, so clearly there exist no robust equilibrium in this regime. 

However, for $\alpha > -\frac{1}{2}$, there always exists some $k'(\alpha)$ such that for $k \geq k'(\alpha)$, the upper bound exceeds the lower bound. So, there may indeed exist some $k''$ satisfying $k'(\alpha) \leq k'' < n$ for which the above conditions hold simultaneously. This would mean that a robust equilibrium of size $k''$ would indeed exist at $\sigma^2$. However, $k''$, if it exists, would have to be unique because the permissible intervals of $\frac{\sigma^2}{c^{2/3}}$ that admit equilibria of different sizes are completely non-overlapping. This concludes the proof of the claim.

\subsection{Proof of Theorem~\ref{thm:opt_eq}}
We will complete the proof in several parts: first, we will argue about the stability of the grand coalition across regimes of $\alpha$. Then, we will try to analyze the characteristics of the social cost and estimator variance at the optimal stable coalition (if it exists) under different regimes. 

\paragraph{Stability of the grand coalition.} When $\alpha < - \frac{1}{2}$, by Claim~\ref{clm:high_var}, we know that for sufficiently large $n$, the grand coalition is always stable (under both our definitions of stability). Let us now analyze what happens when $\alpha \geq -\frac{1}{2}$. Recall that the grand coalition is stable under both definitions of stability if and only if: 
\[
             \sigma^2 \cdot \frac{(n-1)}{n^{(2\alpha+1)/3}} \geq \sum_{j \in [n]} (c_j^2/2)^{1/3} + 2(c_n^2/2)^{1/3}.   
\]
When $\alpha = -\frac{1}{2}$, the condition reduces to: 
\[
            \sigma^2 \geq \frac{1}{(n-1)} \cdot \sum_{j \in [n]} (c_j^2/2)^{1/3} + 2(c_n^2/2)^{1/3}. 
\]
Note that $(n+2)(c_{min}^2/2)^{1/3} \leq \sum_{j \in [n]} (c_j^2/2)^{1/3} + 2(c_n^2/2)^{1/3} \leq (n+2)(c_{max}^2/2)^{1/3}$. Further, for all $n \geq 2$, $1 \leq \frac{n+2}{n-1} \leq 4$. Therefore, the RHS of the above condition is always $\Theta(1)$ and whether the grand coalition exists or not depends on the exact ordering of $\sigma^2$ and the RHS. This implies that the grand coalition may or may not exist at $\alpha = -\frac{1}{2}$. However, when $\alpha > -\frac{1}{2}$, the RHS has an additional multiplicative term $n^{(2\alpha+1)/3}$ which makes it grow in $n$. Therefore, for $\alpha > -\frac{1}{2}$, for sufficiently large $n$, the grand coalition is never stable under any notion of stability.  

\paragraph{Characteristics of the Optimal Stable Coalition.} We will analyze regime-wise for $\alpha$:

Firstly, when $\alpha < -\frac{1}{2}$, we have shown that the grand coalition is always stable and has a social cost of the order $\Theta(n^{(2\alpha+4)/3})$ and estimator variance of the order $\Theta(n^{(2\alpha+1)/3})$. But the stability of the grand coalition does not rule out the existence of other coalitions which might be stable and have better social cost and estimator variance. What we do know (using the grand coalition) is that the social cost of the optimal coalition must be $\mathcal{O}(n^{(2\alpha+4)/3})$ and the variance of the optimal coalition must be $\mathcal{O}(n^{(2\alpha+1)/3})$. We will now show that these are in fact, \textit{tight}, i.e., we can construct a problem instance where the \textit{optimal stable} coalition also achieves these same orders. Consider a problem instance where all players have the identical privacy cost parameter $c$. For this special setting, we know (by Claims~\ref{clm:equal_cost_Nash} and \ref{clm:equal_cost_robust}) that there are no Nash-stable or robust-stable coalitions of intermediate sizes ($2$ through $(n-1)$). However, the grand coalition is still stable under both definitions. This implies that the grand coalition must the \textit{unique} stable coalition in this problem instance, making it also the optimal one. Thus, we have a problem instance where the social cost and the estimator variance at the optimal stable coalition are $\Theta(n^{(2\alpha+4)/3})$ and $\Theta(n^{(2\alpha+1)/3})$ respectively.   \\  

When $\alpha = -\frac{1}{2}$, suppose that at least one stable coalition exists under both notions of stability. Firstly, note that the social cost and estimator variance of the optimal stable coalition are trivially $\mathcal{O}(n)$ and $\mathcal{O}(1)$ respectively (these are achieved even by the empty coalition). We now need to show that these orders are \textit{tight} even if stable coalitions do exist. We will again construct a problem instance ($\vec c, \sigma^2$) consisting of all players with identical cost parameter $c$. But this time, we will impose additional conditions on $\sigma$ and $c$. In particular, we assume that $\sigma^2 \geq 4(c^2/2)^{1/3}$. This guarantees that the grand coalition is stable (under both Nash and robust definitions) even at $\alpha = -\frac{1}{2}$, with social cost of the order $\Theta(n)$ and variance of the order $\Theta(1)$. This must also be the unique stable coalition in this setting (again by Claims~\ref{clm:equal_cost_Nash} and \ref{clm:equal_cost_robust}) making it the optimal one. This concludes the proof of the $\alpha = -\frac{1}{2}$ case. \\

Finally, when $\alpha > - \frac{1}{2}$, the social cost and estimator variance of the optimal stable coalition are again trivially $\mathcal{O}(n)$ and $\mathcal{O}(1)$ respectively. We will again prove tightness, assuming that stable coalitions do exist in this regime. Consider the problem instance where all players have the identical cost parameter $c$. Firstly, note that the grand coalition is no longer robust-stable. In fact, by Claim~\ref{clm:equal_cost_robust}, there exists atmost one robust-stable coalition for this problem instance. If it exists, it must be of some intermediate size $k'(\alpha)$, where $k'(\alpha)$ is the unique value of $k$ which satisfies: 
\[
         \left(\frac{k+2}{k-1}\right)   \cdot k^{\frac{2\alpha+1}{3}} \leq 2^{1/3} \cdot \frac{\sigma^2}{c^{2/3}} <  \left(\frac{k+3}{k}\right)   \cdot (k+1)^{\frac{2\alpha+1}{3}}.
\]
By choosing the $(c, \sigma^2)$ appropriately, we can ensure that exactly one robust-stable coalition of size $k'(\alpha)$ does exist for our constructed problem instance. Importantly, note that $k'(\alpha)$ is independent of $n$, so when $n$ is sufficiently large, $k'(\alpha) = \Theta(1)$. Therefore, the optimal robust-stable coalition in this problem instance has social cost $\Theta(n)$ and variance $\Theta(1)$. This establishes tightness for robust-stable coalitions. 

But, what about tightness for Nash-stable coalitions? For this, we will use Claim~\ref{clm:Nash_stronger}. Since the set of all robust-stable coalitions is a superset of the set of all Nash-stable coalitions for any problem instance, the orders of social cost and variance for the optimal robust-stable coalition form lower bounds for the orders of the social cost and variance of the optimal Nash-stable coalition (provided it exists). Therefore, if a Nash-stable coalition exists, the optimal Nash-stable coalition must have social cost $\Omega(n)$  and variance $\Omega(1)$. This immediately implies that the orders are tight even for the optimal Nash-stable coalition (since social cost is also trivially $\mathcal{O}(n)$ and variance is trivially $\mathcal{O}(1)$). This concludes the proof of the theorem. 


\section{Proofs of Results in Section~\ref{sec:partial}}

\subsection{Proof of Lemma~\ref{lem:partial_general}}
We will complete the proof in parts. First, in order to determine the feasible set of $\epsilon$'s for a set $S$ to be Nash-stable, it suffices to obtain the necessary and sufficient conditions for $S$ (with $|S| = k$) to be Nash-stable at fixed privacy level $\epsilon$. Note that when $S$ represents the set of participating players, the variance of the estimator $\hat \mu$ is given by: 
\[
     \frac{\sigma^2}{k} + \frac{2}{k^2}\left(\sum_{i \in S}\frac{1}{\epsilon_i^2} \right) = \frac{1}{k}\left(\sigma^2 + \frac{2}{\epsilon^2}\right). 
\]
Therefore, when some player $i$ chooses to be part of the coalition $S$, their burden of participation is given by: 
\[
    B_i(S, \epsilon) = \frac{1}{k}\left(\sigma^2 + \frac{2}{\epsilon^2}\right) + \left(c_i k^{\alpha}\right)\cdot \epsilon.
\]

\paragraph{No player in $S$ has incentive to leave.} For all $i \in S$, we must have:
\begin{align*}
    &\frac{1}{k}\left(\sigma^2 + \frac{2}{\epsilon^2}\right) + c_ik^{\alpha} \epsilon \leq \sigma^2 \\
    \iff &\sigma^2 \cdot \left(\frac{k-1}{k}\right) \geq c_i k^{\alpha} \epsilon + \frac{2}{k\epsilon^2}\\
    \iff & \sigma^2 \geq \frac{k^{1+\alpha}}{(k-1)} \cdot c_i \epsilon + \frac{2}{(k-1)\epsilon^2}.
\end{align*}

\paragraph{No player outside $S$ has an incentive to join unilaterally.} For all $i \notin S$, we must have: 
\begin{align*}
    &\frac{1}{k+1}\left(\sigma^2 + \frac{2}{\epsilon^2}\right) + c_i(k+1)^{\alpha} \epsilon > \sigma^2 \\
    \iff & \sigma^2 < \frac{(k+1)^{1+\alpha}}{k}\cdot c_i \epsilon + \frac{2}{k\epsilon^2}.
\end{align*}
Combining both conditions, we conclude that $S \subseteq [n]$ with $n > k \geq 2$ is a Nash equilibrium coalition if and only if: 
\[
       \frac{2}{k\epsilon^2} + \frac{(k+1)^{1+\alpha}}{k}\cdot \epsilon \min_{i \notin S}c_i > \sigma^2 \geq \frac{2}{(k-1)\epsilon^2} + \frac{k^{1+\alpha}}{k-1}\cdot \epsilon \max_{i \in S}c_i. 
\]
This immediately provides us the description for $\mathcal{R}_{\epsilon}(S)$ for $k < n$. For $k = n$, we only have the one-sided condition (because there are no outside players). This concludes the first part of the proof. 

For the second part of the proof, it suffices to show that if $S$ with $|S| = k$ is Nash-stable at a fixed privacy level $\epsilon$, then $S_k$ is also Nash-stable at the same $\epsilon$. This would immediately imply that every $\epsilon$ that is feasible for $S$ must also be feasible for $S_k$, i.e., $\mathcal{R}_{\epsilon}(S) \subseteq \mathcal{R}_{\epsilon}(S_k)$. If $S = S_k$, then we are trivially done. Otherwise, if $S \neq S_k$, this means that $S$ is not downward closed and there are some missing players in $S$ with indices between $1$ and $k$ (both included). This means that: 
\[
      \min_{i \notin S}c_i \leq c_{k+1} = \min_{i \notin S_k}c_i.
\]
Similarly, 
\[
    \max_{i \in S} c_i \geq c_k = \max_{i \in S_k} c_i. 
\]
Using both inequalities, we obtain: 
\[
     \frac{2}{k\epsilon^2} + \frac{(k+1)^{1+\alpha}}{k}\cdot \epsilon \min_{i \notin S_k}c_i > \sigma^2 \geq \frac{2}{(k-1)\epsilon^2} + \frac{k^{1+\alpha}}{k-1}\cdot \epsilon \max_{i \in S_k}c_i, 
\]
which implies that $S_k$ is also Nash-stable at $\epsilon$. This concludes this part of the proof. 

The last two parts follow directly by noting that if $S$ is a Nash-stable coalition at some privacy level $\epsilon$, then the social cost and estimator variance achieved by this arrangement are given by: 
\[
    \textsf{SC}(S, \epsilon) = (n+1)\sigma^2 - \left[k\sigma^2 - \frac{2}{\epsilon^2} - k^{1+\alpha}\cdot \epsilon \cdot \left(\sum_{i \in S}c_i/k \right) \right]; 
\]
\[
     \textsf{Var}(\hat \mu~|~S, \epsilon) = \frac{1}{k}\left(\sigma^2 + \frac{2}{\epsilon^2} \right). 
\]
Out of all stable coalitions of size $k$ (if they exist), at a fixed $\epsilon$, $S_k$ has the best social cost. Further, $S_k$ also has the largest set of feasible $\epsilon$'s out of all coalitions of size $k$. Therefore, the best social cost achieved by a Nash-stable coalition of size $k$ (if it exists) must be achieved at $(S_k,\epsilon^*)$ where $\epsilon_{(k)}^* = \arg \min_{\epsilon \in \mathcal{R}_{\epsilon}(S_k)} \textsf{SC}(S_k, \epsilon)$. Finally, we plug in $\epsilon^*$ into the variance expression to retrieve the estimator variance achieved by the same coalition. This concludes the proof. Note that the notation $\textsf{SC}_f(k, \epsilon_{(k)}^*)$ (and similarly for variance) in the main result statement are defined with $k$ and $\epsilon_{(k)}^*$ as arguments to indicate that they correspond to the best achievable social cost and variance at size $k$ (and are different from the standard notation $\textsf{SC}(S, \epsilon)$). 

\subsection{Proof of Lemma~\ref{lem:partial_grand}}
We will again do the proof in two parts: first for $\alpha < \frac{1}{2}$ and then for $\alpha > \frac{1}{2}$. 

First note that $\epsilon'=\left(\frac{4}{n^{1+\alpha} \bar c_n} \right)^{1/3}$ is the value of $\epsilon$ that minimizes the social cost of the grand coalition in the absence of the stability criterion (we still do not know whether it is stable at this $\epsilon$, i.e., whether $\epsilon' \in \mathcal{R}_{\epsilon}(S_n)$). Therefore, if we can show that the grand coalition is Nash-stable at $\epsilon'$ for $\alpha < \frac{1}{2}$ and sufficiently large $n$, we immediately prove the entire first part of the lemma. So, we try to do that next. We know that the grand coalition $S_n$ is Nash-stable at any fixed privacy level $\epsilon$ if and only if: 
\[
     \sigma^2 \geq \frac{2}{(n-1)\epsilon^2} + \frac{n^{1+\alpha}}{n-1}\cdot c_n \epsilon.
\]
We first claim that if any chosen $\epsilon$ satisfies: 
\[
    \left(\frac{\sigma^2}{4c_n} \right)n^{-\alpha}   > \epsilon > \left(\frac{2\sqrt{2}}{\sigma}\right) n^{-1/2}, 
\]
then it must also satisfy $\sigma^2 \geq \frac{2}{(n-1)\epsilon^2} + \frac{n^{1+\alpha}}{n-1}\cdot \epsilon c_n$, i.e., the grand coalition must be Nash-stable at said $\epsilon$. This is easy to verify. Firstly,
\begin{align*}
    \epsilon <  \left(\frac{\sigma^2}{4c_n} \right)n^{-\alpha} &\implies \sigma^2/2 > 2 n^{\alpha} c_n \epsilon \\
    &\implies \sigma^2/2 > \frac{n^{1+\alpha}}{n/2}\cdot c_n \epsilon \\
    &\implies \sigma^2/2 > \frac{n^{1+\alpha}}{(n-1)}\cdot c_n \epsilon.
\end{align*}
Similarly, 
\begin{align*}
    \epsilon > \left(\frac{2\sqrt{2}}{\sigma}\right) n^{-1/2} &\implies \epsilon^2 > \frac{8}{n\sigma^2} \\
    &\implies \sigma^2/2 > \frac{4}{n \epsilon^2} \\
    &\implies \sigma^2/2 > \frac{2}{\epsilon^2 (n/2)} \\
    &\implies \sigma^2/2 > \frac{2}{(n-1)\epsilon^2}. 
\end{align*}
Putting both of these together, we obtain: $\sigma^2 > \frac{2}{(n-1)\epsilon^2} + \frac{n^{1+\alpha}}{n-1}\cdot \epsilon c_n$. We now need to show that for sufficiently large $n$ and $\alpha < \frac{1}{2}$, $\epsilon'$ satisfies:
\[
     \left(\frac{\sigma^2}{4c_n} \right)n^{-\alpha} > \epsilon' > \left(\frac{2\sqrt{2}}{\sigma}\right) n^{-1/2}. 
\]
Recall that $\epsilon' = \left(4/\bar c_n \right)^{1/3} n^{-(1+\alpha)/3}$. First, note that for $\alpha < \frac{1}{2}$, we must have:
\[
       n^{-\alpha} > n^{-(1+\alpha)/3} > n^{-1/2}~\forall~n \in \mathbb{N}, n \geq 2.
\]
Even with arbitrary positive constant multipliers $C_1, C_2, C_3$, for sufficiently large $n$, we will still have:
\[
      C_1\cdot n^{-\alpha} > C_2 \cdot n^{-(1+\alpha)/3} > C_3 \cdot n^{-1/2}.
\]
Substituting $C_1 = \left(\frac{\sigma^2}{4c_n} \right)$, $C_2 = \left( \frac{4}{\bar c_n}\right)^{1/3}$ and $C_3 = \left(\frac{2\sqrt{2}}{\sigma} \right)$, we conclude that
$S_n$ is Nash-stable at $\epsilon'$ for sufficiently large $n$ and $\alpha < \frac{1}{2}$. We have already argued that $\epsilon' \in \mathcal{R}_{\epsilon}(S_n) \implies \epsilon^* = \epsilon'$. 

To prove the second part of the lemma, we will use the necessary and sufficient conditions for Nash-stability in the partially decentralized setting. First, note that the grand coalition can never be Nash-stable in the $\alpha > \frac{1}{2}$ regime. This is because in order for the grand coalition to be stable, we must have:
\[
     \sigma^2 \geq \frac{2}{(n-1)\epsilon^2} + \frac{n^{1+\alpha}}{n-1}\cdot \epsilon c_n.
\]
Note that $\sigma^2 = \Theta(1)$, but for any choice of $\epsilon$, the RHS grows polynomially in $n$, hence the contradiction. However, we need to check if we can still have stable coalitions of some other size $k < n$. In that case, there must exist a feasible choice of $\epsilon$ for which:
\[
    \frac{2}{k\epsilon^2} + \frac{(k+1)^{1+\alpha}}{k}\cdot \epsilon c_{k+1} > \sigma^2 \geq \frac{2}{(k-1)\epsilon^2} + \frac{k^{1+\alpha}}{k-1}\cdot \epsilon c_k. 
\]
Just working with second inequality, we obtain:
\[
      \sigma^2 > \frac{k^{1+\alpha}}{k-1}\cdot \epsilon c_k \implies \sigma^2 > k^{\alpha} \cdot \epsilon \cdot c_{min} \implies \epsilon < \left(\frac{\sigma^2}{c_{min}} \right)k^{-\alpha}.
\]
Simultaneously, we have:
\[
      \sigma^2 > \frac{2}{(k-1)\epsilon^2} > \frac{2}{k\epsilon^2} \implies \epsilon > k^{-1/2}\sqrt{\frac{2}{\sigma^2} }. 
\]
Putting both inequalities together, we obtain that in order for a coalition of size $k$ to be stable, $\epsilon$ must satisfy:
\[
      C_3 \cdot k^{-1/2} < \epsilon < C_4 \cdot k^{-\alpha}.
\]
However, we are in the regime $\alpha > \frac{1}{2}$ (so the lower bound has a higher order than the upper bound). This implies that the above condition can only be satisfied up to some threshold $k = g(\alpha, C_3, C_4)$. Therefore, any stable coalition (if it exists) must be of size $\Theta(1)$ and similarly, any feasible $\epsilon$ which induces such a stable coalition must also be $\Theta(1)$. This concludes the proof.

\subsection{Proof of Theorem~\ref{thm:partial_cent}}
We already have the major parts of the proof in place using Lemma~\ref{lem:partial_grand}. We will reason separately for the regimes $\alpha < \frac{1}{2}$, $\alpha = \frac{1}{2}$ and $\alpha > \frac{1}{2}$. 

\paragraph{$\alpha > \frac{1}{2}$ regime.} The result in this regime follows directly from Lemma~\ref{lem:partial_grand}. We showed there that any Nash-stable coalition that exists must be of size $\Theta(1)$ and must exist at $\epsilon$ of size $\Theta(1)$. This implies that social cost of the optimal stable coalition (if it exists) must be of the order $\Theta(n)$ and similarly, its estimator variance must be $\Theta(1)$. 

\paragraph{$\alpha = \frac{1}{2}$ regime.} For $\alpha = \frac{1}{2}$, we will argue case by case. Firstly, re-using some of the proof steps from the first part of Lemma~\ref{lem:partial_grand}, in order for the grand coalition to be stable, we need to pick $\epsilon$ such that: 
\[
     C_1 \cdot n^{-1/2} < \epsilon < C_2 \cdot n^{-\alpha}, 
\]
where $C_1, C_2$ are some positive constants. At $\alpha = \frac{1}{2}$, any solution to the above system (if it exists) must satisfy $\epsilon = \Theta(n^{-1/2})$. This implies that if the grand coalition is indeed Nash-stable, $\epsilon$ is always of the order $\Theta(n^{-1/2})$ which achieves social cost of the order $\Theta(n)$ and estimator variance of the order $\Theta(1)$. Using an identical argument as the second part of Lemma~\ref{lem:partial_grand}, if there exists any other stable coalition of size $k < n$, all feasible $\epsilon$ which induce these stable coalitions must also be of order $\Theta(k^{-1/2})$. As such, all such stable coalitions will achieve social cost of order $\Theta(n)$ and estimator variance of order $\Theta(1)$. Putting all cases together, the optimal stable coalition in the $\alpha = \frac{1}{2}$ regime (if any exists) must achieve social cost of the order $\Theta(n)$ and estimator variance of the order $\Theta(1)$.\\

\paragraph{$\alpha < \frac{1}{2}$ regime.} We have already argued that for sufficiently large $n$ and for $\alpha < \frac{1}{2}$, the grand coalition is Nash-stable at $\epsilon = \left(\frac{4}{n^{1+\alpha} \bar c_n} \right)^{1/3} = \Theta(n^{-(1+\alpha)/3})$. If the grand coalition is the \textit{optimal} Nash-stable coalition, we are done. Now, suppose that the grand coalition is not the optimal Nash-stable coalition. We first argue that the optimal stable coalition must still be of size at least $n-o(n)$. This follows from the fact that the social cost achieved by the optimal Nash-stable coalition is known to be $\Theta(n^{2(1+\alpha)/3})$ (this is because the grand coalition already achieves this order, which matches the order of social cost under the fully centralized mechanism --- the best across mechanisms. So the optimal stable coalition must also achieve the same order). We prove by contradiction. If the size of the optimal stable coalition is any smaller than $n-o(n)$, i.e., if the size is $\Theta(n)$ or $o(n)$, then the social cost immediately becomes order $\Theta(n)$ (which is worse than the promised sublinear order $\Theta(n^{2(1+\alpha)/3})$), irrespective of the choice of $\epsilon$. This is easy to see: 
\[
    k = o(n) \implies \textsf{SC}_f(k, \epsilon) = (n+1-k)\sigma^2 + \frac{2}{\epsilon^2} + k^{1+\alpha}\cdot \epsilon \cdot \Theta(1) = \Theta(n). 
\]
\[
    k = \zeta n ~\text{for }\zeta \in (0, 1) \implies \textsf{SC}_f(k, \epsilon) = (1-\zeta)n \sigma^2 +\sigma^2 + \frac{2}{\epsilon^2} + k^{1+\alpha}\cdot \epsilon \cdot \Theta(1) = \Theta(n).
\]
This shows that if the optimal stable coalition is not of size $n$, it must be of size at least $n-o(n)$. Plugging in the size, the corresponding social cost would be given by: 
\[
     o(n) \sigma^2 + \frac{2}{\epsilon^2} + \Theta(n^{1+\alpha})\cdot \epsilon,
\]
which we know must be of order $\Theta(n^{2(1+\alpha)/3})$. This means that all terms in the expression must be $\mathcal{O}(n^{2(1+\alpha)/3})$ which further implies: 
\[
    \Theta(n^{1+\alpha})\cdot \epsilon = \mathcal{O}(n^{2(1+\alpha)/3}) \implies \epsilon = \mathcal{O}(n^{-(1+\alpha)/3}), \quad \text{and}
\]
\[
     \frac{1}{\epsilon^2} = \mathcal{O}(n^{2(1+\alpha)/3}) \implies \epsilon = \Omega(n^{-(1+\alpha)/3}).
\]
Together, this implies that at the optimal Nash-stable coalition, $\epsilon = \Theta(n^{-(1+\alpha)/3})$. Note that this again is the ``desired" order for $\epsilon$, so it must achieve the desired orders for the social cost and estimator variance. This concludes the proof of the theorem.


\section{Proofs of Results in Section~\ref{sec:pos}}
\subsection{Proof of Theorem~\ref{thm:pos}}
The orders of the price of stability with respect to social cost in the $\alpha \in \left[-1, -\frac{1}{2}\right]$ regime and the $\alpha \in \left(-\frac{1}{2}, \frac{1}{2}\right]$ regimes follow directly from Theorems~\ref{thm:kstar_c} and \ref{thm:opt_eq}. Note that although Theorem~\ref{thm:opt_eq} needs to assume existence of stable coalitions to state results about the order of the social cost for the optimal coalition, we don't need any such assumptions for the price of stability analysis. This is because by definition of $\textsf{PoS}$, if stable coalitions do not exist, we use the social cost of the empty coalition case which has order $\Theta(n)$.

For the last regime ($\alpha \in \left(\frac{1}{2}, 1 \right]$), the two theorems tell us that the order of the price of stability will be $\Theta(1)$. Further, we know that $\textsf{PoS}(\textsf{SC}) \geq 1$. Therefore, in order to complete the proof, it suffices for us to show that in this $\alpha$ regime, $\textsf{PoS}(\textsf{SC})$ is also upper bounded by a constant independent of $n$. 

Firstly, note that the social cost in the decentralized setting is trivially upper bounded by $n\sigma^2$ for $\alpha \in \left(\frac{1}{2}, 1\right]$ (even if no stable equilibrium exists). We will now construct a lower bound on the social cost of the optimal coalition in the decentralized setting for the same $\alpha$ regime. Recall that the social cost at the optimal coalition of size $k\geq 2$ in the fully centralized setting for a general cost profile $\vec c$ and variance $\sigma^2$ is given by:
\[
     \textsf{SC}_c(k) = n\sigma^2 + \underbrace{\left[ \left(\frac{3}{2^{1/3}} \right) k^{2(1+\alpha)/3}\left(\frac{\sum_{i \in S_k}c_i^{2/3}}{k}\right) - (k-1)\sigma^2   \right]}_{g(k)}
\]
Therefore, the best social cost achievable overall is given by:
\[
        \textsf{SC}_c(k^*) = \min \left[n \sigma^2, \min_{k \in \{2,3..n\} } n\sigma^2 + g(k) \right],
\]
where $k^*$ is defined as per Equation~\eqref{opt:kstar}. 
Now, for all $k \in \mathbb{N}$, $2 \leq k \leq n$,  we have:
\[
      g(k) \geq h(k) = \left[ \left(\frac{3}{2^{1/3}} \right) k^{2(1+\alpha)/3}c_{min}^{2/3} - (k-1)\sigma^2  \right],
\]
with $c_{min} > 0$.This implies that: 
\[
       \min_{k \in \{2,3..n\}} g(k) \geq  \min_{k \in [2,n]} h(k). 
\]
Now, $h(x)$ is a convex function in $x$ over $\mathbb{R}$ (because $\alpha > \frac{1}{2}$) which implies that $h(x)$ has a global minimizer $x'$ satisfying:
\[
     \left(\frac{3}{2^{1/3}}\right)\cdot \frac{2(1+\alpha)}{3}\cdot x'^{(2\alpha-1)/3} \cdot c_{min}^{2/3} = \sigma^2 \iff x' = \left( \frac{\sigma^2 }{(1+\alpha)(2c_{min})^{2/3}} \right)^{ 3/(2\alpha-1) }.
\]
Based on the location of $x'$, we can now have $3$ different sub-cases:
\begin{enumerate}
    \item $x' \leq 2$: In this setting, we have:
    \begin{align*}
    \min_{k \geq 2}~h(k) = h(2) &= 3c_{min}^{2/3}2^{(2\alpha+1)/3} - \sigma^2\\ 
    &\geq 3c_{min}^{2/3}2^{(2\alpha+1)/3} - 2^{2/3}\cdot 2^{(2\alpha-1)/3}(1+\alpha)c_{min}^{2/3} \quad (\text{using $x' \leq 2$})\\
    &= 2^{(2\alpha+1)/3}(3-(1+\alpha))c_{min}^{2/3} > 0. 
    \end{align*}
    This implies that $\textsf{SC}_c(k^*) = n\sigma^2$ and therefore, $\textsf{PoS}(\textsf{SC}) = 1$.
    \item $2 < x'\leq n$: In this setting, we have $\min_{k \geq 2}~h(k) = h(x')$. Therefore, 
    \begin{align*}
        n\sigma^2 + \min_{k \in \{2,3...n\}}~g(k) &\geq (n+1)\sigma^2 + \frac{3\sigma^2}{2(1+\alpha)}\cdot x' - \sigma^2 x' \\
        &= (n+1)\sigma^2 - x' \sigma^2 \left[1 - \frac{3}{2(1+\alpha)} \right]\\
        &> (n+1)\sigma^2 - n\sigma^2 \left[1 - \frac{3}{2(1+\alpha)} \right]\\
        &= \sigma^2 \left[1+\frac{3n}{2(1+\alpha)} \right] \\
        &> \frac{3n\sigma^2}{2(1+\alpha)}.  \quad \text{(note, this lower bound $< n\sigma^2$)} 
    \end{align*}
    Therefore, $\textsf{SC}_c(k^*) > \frac{3n\sigma^2}{2(1+\alpha)}$ which implies $\textsf{PoS}(\textsf{SC}) < \frac{n\sigma^2}{ (3n\sigma^2)/(2+2\alpha)  } = \frac{2(1+\alpha)}{3} < \frac{4}{3}$. This is the upper bound that will hold when $n$ is sufficiently large. 
    \item $x' > n$: Finally, in this setting, we have: 
    \begin{align*}
        \textsf{SC}_c(k^*) &> n\sigma^2 + h(n) \\
        &= \sigma^2 + \left(\frac{3}{2^{1/3}}\right)n^{2(1+\alpha)/3}c_{min}^{2/3}. 
    \end{align*}
    This implies that: 
    \begin{align*}
        \textsf{PoS}(\textsf{SC}) &< \frac{n \sigma^2}{ \sigma^2 + \left(\frac{3}{2^{1/3}}\right)n^{2(1+\alpha)/3}c_{min}^{2/3} } \\
        &< \frac{n \sigma^2}{ \left(\frac{3}{2^{1/3}}\right)n^{2(1+\alpha)/3}c_{min}^{2/3} } \\
        &= \left(\frac{2^{1/3}\sigma^2}{3c_{min}^{2/3}} \right)\cdot n^{(1-2\alpha)/3} \quad \text{(this expression $>1$ when $x' > n$)}\\
        &< \left(\frac{2^{1/3}\sigma^2}{3c_{min}^{2/3}} \right)\cdot 2^{(1-2\alpha)/3}  \quad (\text{since $\alpha > 1/2 \implies (2\alpha-1)< 0$}) \\
        &< \left(\frac{2^{1/3}\sigma^2}{3c_{min}^{2/3}} \right). 
    \end{align*}
\end{enumerate}
Putting all three sub-cases together, when $\alpha \in \left(\frac{1}{2}, 1\right]$, the price of stability with respect to social cost satisfies: 
\[
       \textsf{PoS}(\textsf{SC}) < \max \left[\frac{4}{3}, \left(\frac{2^{1/3}\sigma^2}{3c_{min}^{2/3}} \right) \right].
\]

\subsection{Proof of Theorem~\ref{thm:pos_var}}
The proof of Theorem~\ref{thm:pos_var} follows directly from Theorems~\ref{thm:kstar_c} and \ref{thm:opt_eq} --- we use the orders of the estimator variance obtained for the optimal coalition in the centralized setting (Theorem~\ref{thm:kstar_c}) and the optimal \textit{stable} coalition in the decentralized setting (Theorem~\ref{thm:opt_eq}) to complete the proof. Again, we note that although Theorem~\ref{thm:opt_eq} needs to assume existence of stable coalitions to state results about the order of the estimator variance for the optimal coalition, we don't need any such assumptions for the price of stability analysis. This is because by definition of $\textsf{PoS}$, if stable coalitions do not exist, we use the variance of the empty coalition case which has order $\Theta(1)$.

\subsection{Deriving the constant for \textsf{PoS(SC)} in  Theorem~\ref{thm:pos_partial} for $\alpha < \frac{1}{2}$ and large $n$}
For $\alpha < \frac{1}{2}$ and large $n$, we can compute an upper bound on $\textsf{PoS(SC)}$ as follows. We know that in this regime, the social cost of the optimal stable coalition under partial decentralization must be at least as good as the social cost achieved by the grand coalition (which we showed to be Nash-stable for appropriately chosen $\epsilon$). Thus, the numerator is upper bounded by: 
\[
     \textsf{SC}_f(n, \epsilon_{(n)}^*) = \sigma^2 + \left(\frac{3}{2^{1/3}}\right) n^{2(1+\alpha)/3} \bar c_n^{2/3}. 
\]
At the same time, the denominator exactly equals the social cost achieved by the grand coalition for large $n$ (because for large $n$, the grand coalition is optimal under full centralization). The denominator evaluates to: 
\[
    \textsf{SC}_c(n) = \sigma^2 + \left(\frac{3}{2^{1/3}}\right) n^{2(1+\alpha)/3} \left(\sum_{i\in S_n} c_i^{2/3}/n \right). 
\]
Quick sanity check: the numerator exceeds the denominator. We can verify this using Jensen's inequality for the concave function $f(x) = x^{2/3}$: 
\begin{align*}
    \sum_{i \in S_n}c_i^{2/3}/n = \sum_{i=1}^n \frac{1}{n}f(c_i) \leq f\left( \sum_{i=1}^n c_i/n\right) = f(\bar c_n) = (\bar c_n)^{2/3}. 
\end{align*}
Finally, putting everything together, we have: 
\begin{align*}
    \textsf{PoS(SC)} &\leq \frac{ \sigma^2 + \left(\frac{3}{2^{1/3}}\right) n^{2(1+\alpha)/3} \bar c_n^{2/3} }{ \sigma^2 + \left(\frac{3}{2^{1/3}}\right) n^{2(1+\alpha)/3} \left(\sum_{i\in S_n} c_i^{2/3}/n \right) } \\
    &< \frac{ \left(\frac{3}{2^{1/3}}\right) n^{2(1+\alpha)/3} \bar c_n^{2/3} }{ \left(\frac{3}{2^{1/3}}\right) n^{2(1+\alpha)/3} \left(\sum_{i\in S_n} c_i^{2/3}/n \right) } \\
    &= \frac{ \bar c_n^{2/3} }{ \left(\sum_{i\in S_n} c_i^{2/3}/n \right) } \leq \frac{ c_{max}^{2/3} }{ c_{min}^{2/3} }. 
\end{align*}
Note that the last step follows only because $c_{min}$ is bounded away from $0$. (This does not work in the case where exactly one $c_i$ takes a positive value and all other $c_i$'s are zero, in which case the last step has an upper bound that would grow as $n^{1/3}$. However, we do not run into such issues here). 

\end{document}